\numberwithin{equation}{section}
\newcommand{\tPi}{\widetilde{\Pi}}
\newcommand\listdots{,\!\makebox[1em][c]{.\hfil.\hfil.},}
\newcommand{\one}{\ensuremath{\mathbf{1}}}
\newcommand{\bC}{\mathbb{C}}
\newcommand{\id}{\mathrm{id}}
\DeclareMathOperator{\Hom}{Hom}
\DeclareMathOperator{\Fun}{Fun}
\DeclareMathOperator{\fmod}{-mod_f}
\DeclareMathOperator{\MF}{MF}
\DeclareMathOperator{\mf}{mf}
\DeclareMathOperator{\FF}{FF}
\newcommand{\HMF}{H^{\bullet}\mathrm{MF}}
\newcommand{\Hmf}{H^{\bullet}\mathrm{mf}}
\newcommand{\HFF}{H^{\bullet}\mathrm{FF}}
\newtheorem{prop}{Proposition}[section]
\newtheorem{corr}[prop]{Corollary}
\newtheorem{lem}[prop]{Lemma}
\newtheorem{Def}[prop]{Definition}
\newtheorem{ex}[prop]{Example}
\newcounter{dummy}
\def\@fpheader{\relax}
\title{Modelling interface factorizations between Landau--Ginzburg models as module functors}
\author{Stefan Fredenhagen}
\affiliation{Mathematical Physics, Faculty of Physics, University of Vienna,\\ Boltzmanngasse 5, 1090 Vienna, Austria}
\affiliation{Erwin Schr{\"o}dinger International Institute for Mathematics and Physics,\\ University of Vienna, Boltzmanngasse 9, 1090 Vienna, Austria}
\emailAdd{stefan.fredenhagen@univie.ac.at}
\abstract{We study the fusion of B-type interfaces between $N= (2,2)$ supersymmetric Landau--Ginzburg models. Such interfaces can be described by matrix factorizations of the difference of the superpotentials, and their fusion is modelled by the tensor product of the factorizations. The effect of fusing a fixed interface gives rise to a functor on the category of matrix factorizations. For at least some interfaces, this can be lifted to a functor on the category of modules over polynomial rings. These fusion functors provide an alternative way of modelling a class of interfaces between Landau--Ginzburg models, which characterizes interfaces through their fusion properties. Interface fields correspond to morphisms between fusion functors that can be defined via a Hochschild-type cohomology. This leads to a strict monoidal supercategory of fusion functors, where horizontal composition is given by composition of functors. The category of fusion functors can be related to the category of matrix factorizations by a functor that maps a fusion functor to the corresponding interface factorization. It is not faithful, but we prove that it is full for polynomial rings in one variable. The description of interfaces in terms of fusion functors has the advantage that fusion of interfaces becomes simple. It provides a computational tool to evaluate tensor products of matrix factorizations, and could be applied for example in Kazama--Suzuki models to analyze fusion categories of rational defects, or in the context of Khovanov--Rozansky link homology.}
\preprint{UWThPh 2022-21} 
\begin{document}
\maketitle

\newpage
\section{Introduction}

The study of defects and interfaces constitutes a rich and important topic in two-dimensional field theories. Topological interfaces can be deformed and fused, which gives rise to interesting algebraic structures that generalize the notion of ordinary global symmetries \cite{Petkova:2000ip,Frohlich:2004ef,Bachas:2008jd,Frohlich:2009gb,Carqueville:2012dk,Brunner:2013ota,Bhardwaj:2017xup}. Topological field theories provide a well-defined framework in which defects can be analyzed (see, e.g., \cite{Kapustin:2010ta,Davydov:2011kb}).
In B-type topologically twisted 
$N=(2,2)$ Landau--Ginzburg (LG) models~\cite{Vafa:1990mu,Witten:1991zz}, these structures admit an explicit and computable description in terms of matrix factorizations of differences of superpotentials \cite{Brunner:2007qu}, and interface fields appear as morphisms in the corresponding categories.

While this matrix-factorization framework makes fusion concrete, performing fusion directly at the level of factorizations can still be technically involved. In many examples, however, the effect of fusing a fixed interface can be described more economically by a functorial action on categories of modules: fusion becomes composition of functors. The purpose of this paper is to develop this functorial viewpoint beyond the level of isolated examples. We study fusion functors as objects with intrinsic structure -- admitting morphisms, cone-like constructions, and graded refinements -- and explain how this provides an alternative description of a distinguished class of B-type interfaces in Landau–Ginzburg models.

\subsection{Interfaces and fusion in Landau--Ginzburg models}

Two-dimensional $N=(2,2)$ Landau--Ginzburg models provide a convenient UV description of a class of $N=(2,2)$ superconformal field theories \cite{Vafa:1988uu}, and their B-twisted topological sector can be described algebraically \cite{Eguchi:1990vz}. They can be described by an action involving chiral superfields $\phi_i$ $(i=1,\dots,n)$ on superspace,
\begin{equation}
    S = \int d^2 z\,d^2 \theta\,d^2\bar{\theta}\, \sum_i\bar{\phi}_i\phi_i + \int d^2z\,d^2\theta\,W(\phi_1,...,\phi_n) + \text{c.c.}\, ,
\end{equation}
where the superpotential $W$ is a holomorphic function. In the context of studying superconformal field theories in the infrared, we are typically interested in superpotentials that are (quasi-)homogeneous polynomials. For instance, the LG model with the monomial superpotential $W=x^{k}$ (here $x$ is a placeholder for a chiral superfield) flows to the simplest minimal model of type $A_{k-1}$ related to the supersymmetric coset model based on $SU(2)/U(1)$ \cite{Vafa:1988uu}. Due to nonrenormalization theorems, the F-type term involving the superpotential is not renormalized, and its structure survives the renormalization group flow to the infrared. Topologically twisted LG models provide a framework to study structures that do not change under renormalization group flow like for instance the chiral ring.

In this setting, also boundary conditions and interfaces can be investigated in the topologically twisted theories, which capture properties of conformal boundary conditions and interfaces in the superconformal theory in the infrared. We are considering here B-type boundary conditions or interfaces that are consistent with the B-type topological twist. On a worldsheet $\Sigma$ with a boundary, a B-type SUSY variation of the F-term in the action produces a boundary term \cite{Warner:1995ay}. Schematically, it reads
\begin{equation}
    \delta_\epsilon^B \int_\Sigma d^2z\,d^2\theta\, W \sim \bar{\epsilon} \int_{\partial \Sigma} dx\,d\theta\, W\, ,
\end{equation}
and it can be compensated by introducing boundary fermions \cite{Warner:1995ay,Brunner:2003dc}. For a single superfield $\phi$, add a term of the form $\int dx\,d\theta\, J(\phi)\Pi$ to the action, where $\bar{D}\Pi = E(\phi)$. This compensates the boundary term if $JE=W$ is a factorization of $W$. In more general situations, a boundary condition can be described by a matrix factorization $JE=W\cdot \one$ of $W$ in terms of matrices $J$ and $E$, or -- equivalently --  by an odd matrix $Q$ with respect to a grading $\sigma = (-1)^F= \begin{pmatrix}
    1 & 0\\ 0 & -1\end{pmatrix}$ of the form $Q=\begin{pmatrix}
        0 & J \\ E & 0
    \end{pmatrix}$ that satisfies 
\begin{equation}
    Q^2 = W\cdot \one\,.
\end{equation}
$Q$ has the interpretation of the boundary part of the BRST operator~\cite{Kapustin:2003ga}.
In this way, the study of boundary conditions has turned into the algebraic task of studying matrix factorizations. 

In a generalization, B-type interfaces between two LG models can be described by matrix factorizations of the difference of the superpotentials~\cite{Brunner:2007qu}. Such matrix factorizations form a category whose morphisms can be interpreted as interface fields. One should think of a physical interface as being described by an equivalence class of isomorphic matrix factorizations. 

For instance, consider the situation of a defect in a minimal model, so we have the two superpotentials $W=x^k$ and $W'=x'^k$. The odd matrix
\begin{equation}
Q_\xi = \begin{pmatrix}
    0 & x'-\xi x \\ \frac{x'^k-x^k}{x'-\xi x} & 0
\end{pmatrix}    \,,
\end{equation}
where $\xi^k=1$, satisfies $Q_\xi^2 = (W'-W)\cdot \one$. It describes a family of defects that incorporate the $\mathbb{Z}_k$ symmetry of the minimal model. In particular, for $\xi=1$ this corresponds to the identity defect (whose factorization we denote by $I_W$) which has no effect at all.

When we consider an interface between LG models with superpotentials $W'$ and $W$, specified by a factorization $Q_1$, and another interface between models with superpotentials $W$ and $W''$ specified by $Q_2$, the fusion of the two interfaces results in an interface between models with potentials $W'$ and $W''$ that is described by the tensor product factorization
\begin{equation}
    Q_1\boxtimes Q_2 = Q_1\otimes\one + \sigma_1 \otimes Q_2
    \,,
\end{equation}
where $\sigma_1$ denotes the grading matrix related to the first factorization (given by $(-1)^F$ in terms of the fermion number). This fusion is illustrated in Figure~\ref{fig:simplefusion}. Denoting the variables of the models involved as $x_i',x_i,x_i''$, respectively, the factorization $Q_1\boxtimes Q_2$ still depends on the variables $x_i$ of the intermediate LG model. It can be shown \cite{Khovanov:2004,Brunner:2007qu} that such a factorization is always isomorphic to a finite size factorization that does not contain the auxiliary variables $x_i$, but to construct such a representative can be cumbersome in practice.

\begin{figure}
\begin{center}
\scalebox{.9}{\includegraphics{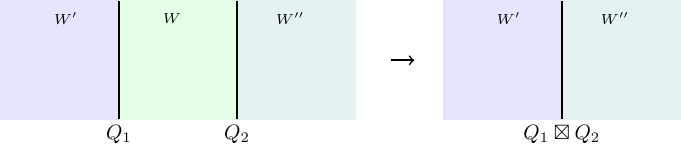}}
\end{center}
\caption{\label{fig:simplefusion}Fusion of interfaces. An interface between the Landau--Ginzburg models with
potentials $W'$ and $W$ is described by a matrix factorization $Q_1$ of $W'-W$, and an
interface between $W$ and $W''$ by a matrix factorization $Q_2$ of $W-W''$. Fusing the
two interfaces yields an interface between $W'$ and $W''$ described (before eliminating
intermediate variables) by the fusion product $Q_1 \boxtimes Q_2$.}
\end{figure}

In contrast to this, the effect of fusing an interface to another can in some cases be very simple: the extreme example is the identity defect which acts trivially under fusion (see Figure~\ref{fig:identityfusion}). Other examples are the symmetry defects $Q_\xi$ in the minimal model: their effect on another factorization just results in multiplying the variable $x$ by $\xi$. The class of interfaces which has a simple effect under fusion (in a sense made precise below) is much richer than this simple example as we will discuss in the following subsection.

\begin{figure}
\begin{center}
\scalebox{.9}{\includegraphics{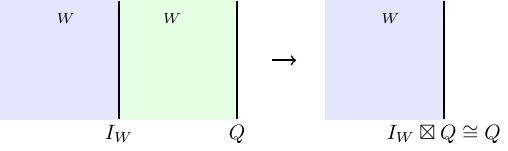}}
\end{center}
\caption{\label{fig:identityfusion}Unit for fusion. The identity defect $I_W$ (an interface from $W$ to itself)
acts trivially under fusion: fusing $I_W$ onto a boundary factorization (or, more
generally, onto an interface factorization) $Q$ yields an isomorphic factorization
$I_W \boxtimes Q \cong Q$.}
\end{figure}

\subsection{From fusion to functors}

The above examples fall in a situation where the effect of fusing an interface (between superpotentials $W'$ and $W$) to a boundary factorization $Q$ of the model with potential $W$ (this can be generalized to interface factorizations to a third LG model) can be described by a functor acting on the category of modules of the underlying rings. In particular, denote $R=\mathbb{C}[x_1,\dots,x_n]$ and $R'=\mathbb{C}[x'_1,\dots,x'_{n'}]$, then we consider a $\mathbb{C}$-linear functor from the category of (free finite-rank) $R$-modules to the category of (free finite-rank) $R'$-modules,
\begin{equation}
    U: R\fmod \longrightarrow R'\fmod\ .
\end{equation}
A matrix factorization $Q$ of $W$ can be viewed as a module endomorphism of an $R$-module $M\cong R^d$ where $d$ is the size of the matrix. The point is now to demand that $U(Q)$, which is an endomorphism of the $R'$-module $M'$, gives a factorization of $W'$. Computing $U(Q)^2$, we find, using the functorial property of $U$,
\begin{align}
    U(Q)U(Q) = U(Q^2) = U (W\cdot \one_M) \stackrel{!}{=} W'\cdot U(\one_M)= W'\cdot \one_{U(M)}\,.
\end{align}
We can read off that indeed $U(Q)$ is a matrix factorization of $W'$ provided that $U(W\cdot\one _M)=W'\cdot U(\one_M)$. Requiring this for all modules $M$, we can equivalently demand that
\begin{equation}\label{fusfunctorprop}
U (W\, \varphi) = W'\,U (\varphi)
\end{equation}
for any homomorphism $\varphi$ between (free finite-rank) $R$-modules. This condition can be read as compatibility with the F-term. As in \cite{Behr:2020gqw}, such functors will be called \textsl{fusion functors} in this work. As we will see shortly, any such functor labels an interface, which is now characterized solely by its effect via fusion (illustrated in Figure~\ref{fig:introducingfusionfunctors}).

\begin{figure}
\begin{center}
\scalebox{.9}{\includegraphics{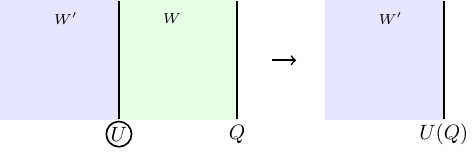}}
\end{center}
\caption{\label{fig:introducingfusionfunctors}Interfaces labelled by fusion functors. A fusion functor $U$ (circled label)
represents an interface from $W$ to $W'$ through its action on B-type factorizations: fusing the
corresponding interface onto a boundary (or interface) factorization $Q$ is implemented
by applying the functor, producing the factorization $U(Q)$. We use circled symbols to
distinguish functor-labelled interfaces from interfaces specified directly by
factorizations.}
\end{figure}

The effect of fusing a symmetry defect $Q_\xi$ has such a description. One can construct a functor $G_\xi$ which maps any free $\mathbb{C}[x]$-module to itself, and on the homomorphisms represented by polynomial matrices it acts by replacing $x\mapsto \xi  x$. Any appearance of $W=x^k$ is mapped to $\xi^k x^k=x^k$ if $\xi$ is a $k$th root of unity, hence it satisfies the requirement of a fusion functor. The functor precisely describes the effect of fusing an elementary symmetry defect $Q_\xi$ in a minimal model~\cite{Brunner:2007qu}. 

It is in general difficult to say whether a specific interface factorization has a description in terms of a fusion functor. Conversely, any fusion functor naturally produces an interface factorization by acting on the identity defect (see Figure~\ref{fig:fusionfunctorsandinterfaces}). This can be done after generalizing the application of the functor to $R\otimes R''$-modules (essentially, it only acts on the first factor). One can then apply the functor to the identity factorization $I_W$ seen as an endomorphism of an $R\otimes R$-module, and obtain an endomorphism $U(I_W)$ of an $R'\otimes R$-module which is a $(W',W)$-interface factorization. We refer to these as \textsl{operator-like interfaces}, since their action on boundary and interface factorizations admits a simple description in terms of the functor $U$. Fusing $U(I_W)$ to other factorizations has the same effect (up to isomorphisms) as applying $U$, see Proposition~\ref{prop:FFdescribeFusion}.

\begin{figure}
\begin{center}
\scalebox{.9}{\includegraphics{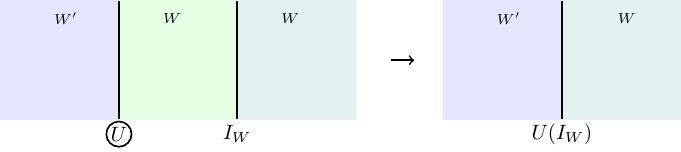}}
\end{center}
\caption{\label{fig:fusionfunctorsandinterfaces}From functors to interface factorizations. Any fusion functor $U$ naturally
produces an associated interface factorization by acting on the identity defect:
the factorization $U(I_W)$ describes the operator-like interface labelled by $U$.}
\end{figure}

There are many examples of interfaces for which a description as a functor on ring modules is explicitly known~\cite{Behr:2012xg,Behr:2020gqw} which go beyond the simple examples of symmetry defects discussed above. The advantage of this realization is a simple description of fusion: it is just given by composition of the functors (see Figure~\ref{fig:fusingfusionfunctors}). This is the main motivation to study interfaces in this approach. It has proven useful for explicit computations: it lies at the heart of the proposal for a complete identification of matrix factorizations that describe rational topological B-type defects and B-type boundary conditions in the $SU (3)/U (2)$ Kazama--Suzuki model~\cite{Behr:2010ug,Behr:2014bta}. Also, the simplicity of fusion in this approach allows one to study the algebraic structure of defect fusion and to investigate the fusion category of rational defects in specific models that correspond to rational conformal field theories.

\begin{figure}
\begin{center}
\scalebox{.9}{\includegraphics{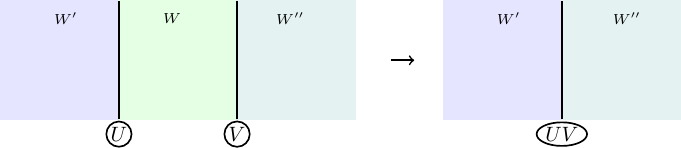}}
\end{center}
\caption{\label{fig:fusingfusionfunctors}Fusion becomes composition. Fusing two interfaces labelled by fusion functors
$U$ and $V$ corresponds to composing the functors: the fused interface is labelled by
the composite functor $UV$.}
\end{figure}

\subsection{Category of fusion functors: results and scope}

Before listing the main results, let us stress how to read the functorial structures physically. A fusion functor $U:R\fmod\to R'\fmod$ encodes a special class of B-type interfaces for which fusion acts functorially on module data. Such an interface can be represented, in the matrix-factorization framework, by the associated operator-like interface factorization $U(I_W)$ obtained by acting on the identity defect. The functorial operations introduced below are designed so that they match the standard interface operations (fusion and operator products of interface fields) after passing to the associated matrix factorizations. It is useful to distinguish two types of statements: (i) identities internal to the operator-like sector (fusion of operator-like interfaces, and the corresponding operations on their fields), and (ii) the induced action of an operator-like interface on general boundary/interface conditions, where fusion with $U(I_W)$ can be simplified to the functorial action $Q\mapsto U(Q)$.

\begin{itemize}
\item For a pair of superpotentials $W\in R$ and $W'\in R'$, we define a category $\HFF_{W',W}$ of fusion functors $U:R\fmod\to R'\fmod$ (Section~\ref{sec:subsec-fusionfunctors}). We view these as functorial descriptions of \textsl{operator-like interfaces}.

\item The morphism spaces in $\HFF_{W',W}$ are $\mathbb{Z}$-graded. Degree-$0$ morphisms are natural transformations, while higher-degree morphisms form a Hochschild-type complex (Section~\ref{sec:subsec-fusionfunctors}). We interpret these morphisms as functorial representatives for a class of interface fields and their higher operations.\footnote{Sometimes this is called Hochschild--Mitchell cohomology~\cite{Herscovich:2005} referring to a construction by Mitchell in~\cite{Mitchell:1972}.}

\item We define two compositions on morphisms: vertical composition (corresponding to operator products of interface fields) and horizontal composition (corresponding to compatibility of fields with fusion of interfaces), see Section~\ref{sec:horizontal}.

\item For defects (same superpotentials on both sides, $W'=W$), horizontal composition equips $\HFF_{W,W}$ with a monoidal structure describing fusion of operator-like defects. Together with the $\mathbb{Z}_2$-grading by parity of the degree, this makes $\HFF_{W,W}$ a monoidal supercategory (Section~\ref{sec:subsec-fusionfunctors}). There is no shift functor relating even and odd morphisms.

\item We introduce graded fusion functors, in direct analogy with graded matrix factorizations (Section~\ref{sec:gradedMF}).

\item Degree-$1$ morphisms give rise to an analogue of ``tachyon condensation'' \cite{Govindarajan:2005im}: we define a cone construction for fusion functors (Section~\ref{sec:Cones}) and relate it to the usual cone construction for matrix factorizations (see~\eqref{compatibilityConeandPi}). 

\item A key construction is the functor
\begin{equation}
    \Pi^{I_W}:\HFF_{W',W}\longrightarrow \Hmf_{W',W},
\end{equation}
obtained by evaluating a fusion functor $U$ on the identity factorization $I_W$ (Definition~\ref{def:PiIW}). Under $\Pi^{I_W}$, morphisms in $\HFF_{W',W}$ induce morphisms of interface matrix factorizations; this correspondence is not bijective in general. In particular, $\Pi^{I_W}$ is neither injective on objects nor on morphisms, and it is not surjective on objects. We show, however, that $\Pi^{I_W}$ is full in the one-variable case (polynomial rings in one variable), i.e.\ surjective on morphisms (Section~\ref{sec:onevariable}).

\item The functor $\Pi^{I_W}$ is compatible with fusion within the operator-like sector: the interface factorization associated to a composite functor agrees, up to natural isomorphism, with the fusion of the associated operator-like interfaces, see Proposition~\ref{prop:PiIWCompatibleWithFusion}.

\item The functor $\Pi^{I_W}$ is also compatible with the action of operator-like interfaces on general boundary/interface conditions: fusing $U(I_W)$ to a matrix factorization $Q$ is naturally isomorphic to applying $U$ to $Q$, i.e.\ $U(I_W)\boxtimes Q \cong U(Q)$. This is made precise by a commutative diagram of functors in Proposition~\ref{prop:FFdescribeFusion}.
\end{itemize}
The last two items provide the bridge to the standard interface description in terms of matrix factorizations. In particular, the diagrams at the end of Section~\ref{sec:ffdescribefusion} (page~\pageref{listoffigures}) summarize how an operator-like interface (with or without an insertion from the functorial morphism spaces) acts on a boundary/interface factorization (with or without a boundary/interface field) by fusion.

The non-surjectivity of $\Pi^{I_W}$ on objects reflects that not every B-type interface admits a fusion-functor description, and we do not currently have an intrinsic criterion. Empirically, in minimal models the topological defects are described by fusion functors, while for example boundary-boundary type defects are not. We treat this as an observation and leave a conceptual characterization for future work.
\smallskip

This article is structured as follows: Before introducing fusion functors, we start in Section~\ref{sec:categoryoffunctors} by defining a category of functors between categories of modules where we not only consider natural transformations, but also a tower of higher order morphisms. A subcategory thereof is identified as the category of fusion functors in Section~\ref{sec:fusionfunctors}. Here we also define the  functor $\Pi^{I_W}$ to the category of interface matrix factorizations whose properties are investigated in Section~\ref{sec:relationinterfacesandfusionfunctors}. Some more technical arguments are presented in the appendix.

\section{A category of functors}\label{sec:categoryoffunctors}
To prepare for the definition of the category of fusion functors in Section~\ref{sec:fusionfunctors}, we introduce in this section a larger category that contains the fusion functor categories, which depend on the superpotentials, as subcategories. Essentially, we consider ordinary $\mathbb{C}$-linear functors on categories of modules without the fusion functor property, but we extend the space of morphisms beyond the natural transformations. More concretely, we introduce the cohomology category $H^{\bullet}\Fun(R',R)$ of a differential graded category of functors mapping modules over a polynomial ring $R$ to modules over a polynomial ring $R'$. We also define a horizontal composition in $H^{\bullet}\Fun(R',R)$ which equips the category of endofunctors $H^{\bullet}\Fun(R,R)$ with the structure of a monoidal supercategory. Although the structures introduced in this section do not seem to have a direct physical interpretation, they provide the stage for the fusion functor description of B-type interfaces.

\subsection{A differential graded category of linear functors}
Let $R$ be a polynomial ring over $\mathbb{C}$, and
$R\fmod$ the category of free finite-rank $R$-modules with $R$-module homomorphisms $f:M\to N$ as morphisms. Let
$R'$ be another polynomial ring. We consider functors $U$ from
$R\fmod$ to $R'\fmod$ whose action on
homomorphisms is $\mathbb{C}$-linear: every $R$-module homomorphism $f:M\to N$ is mapped to an $R'$-module homomorphism $U(f):U(M)\to U(N)$ such that for every $\lambda\in\mathbb{C}$
\begin{equation}
    U(\lambda\cdot f)=\lambda\cdot U(f)\ .
\end{equation}
\begin{ex}\label{ex:invertiblefunctors}
Simple examples of such functors are given by the extension of scalars. Let $\omega :R\to R'$ be a ring homomorphism. Then $\omega^{*}$ maps an $R$-module $M$ to an $R'$-module by $\omega^{*} (M)= R'\otimes_{R}M$. It acts on homomorphisms in an obvious way -- in particular, a homomorphism from $R^{m}$ to $R^{n}$ described by a matrix with entries in $R$ is mapped to a homomorphism from $R'^{m}$ to $R'^{n}$ described by the same matrix where $\omega$ has acted on the entries.

For a concrete example, take $R'=R=\mathbb{C}[x]$ and $\omega_{\xi} (p (x))=p (\xi x)$ which is an isomorphism for $\xi\in \mathbb{C}\setminus \{0 \}$. We denote the corresponding functor by $G_{\xi}:=\omega^{*}_{\xi}$, it is invertible with inverse $G_{\xi^{-1}}$. For $\xi^k=1$, such functors implement the effect of symmetry defects in minimal models with superpotential $W=x^k$ (see also Example~\ref{ex:topdefectsinminimalmodels}).
\end{ex}
When we later impose the fusion functor property~\eqref{fusfunctorprop}, such functors can be related to B-type interfaces. B-type interfaces together with interface fields as morphisms form a category. Similarly, we want to regard the functors $U$ introduced above themselves as objects of a category. Natural transformations would be the most obvious choice for morphisms between the functors, but these will not be enough, as they would for instance not include fermionic interface fields. Instead, we define morphisms in the following way:
\begin{Def}
Let $U,V: R\fmod\to R'\fmod$ be $\mathbb{C}$-linear functors and $n\in\mathbb N_0$.
\begin{itemize}
    \item For $n=0$, a degree-$0$ morphism $\phi^{(0)}_{UV}$ assigns to every (free finite-rank) $R$-module $M$ an element
\begin{equation}
\phi^{(0)}_{UV}(M)\in \Hom_{R'}\big(U(M),V(M)\big)\ .
\end{equation}

\item For $n\ge 1$, a degree-$n$ morphism $\phi^{(n)}_{UV}$ assigns to every chain of $n$
composable $R$-module homomorphisms
\[
M_{n+1}\xleftarrow{f_n} M_n \xleftarrow{f_{n-1}} \cdots \xleftarrow{f_1} M_1
\]
an element
\begin{equation}
\phi^{(n)}_{UV}(f_n,\dots,f_1)\in \Hom_{R'}\big(U(M_1),V(M_{n+1})\big).
\end{equation}
The assignment $(f_n,\dots,f_1)\mapsto \phi^{(n)}_{UV}(f_n,\dots,f_1)$ is
$\mathbb C$-multilinear. 
\end{itemize}
We denote by $\Hom_n(U,V)$ the $R'$-module of such
degree-$n$ morphisms.
\end{Def}
For degree $n\geq 1$, we can view a morphism $\phi^{(n)}_{UV}$ as defining for every tuple $(M_{n+1},\dotsc ,M_{1})$ of modules a $\mathbb{C}$-linear map from a component $C(M_{n+1},\dotsc ,M_{1})$ of a nerve of degree $n$,
\begin{equation}
C (M_{n+1}\listdots M_{1}) =  \Hom_R
(M_{n},M_{n+1})\otimes_{\mathbb{C}}\Hom_R
(M_{n-1},M_{n})\otimes_{\mathbb{C}}\dotsc \otimes_{\mathbb{C}}\Hom_R
(M_{1},M_{2})\ ,
\end{equation}
to $\Hom_{R'} (U (M_{1}),V (M_{n+1}))$. An example for a degree-$1$ morphism is discussed in Example~\ref{ex:degree1morphism}.

\begin{figure}
\begin{center}
\scalebox{1}{\includegraphics{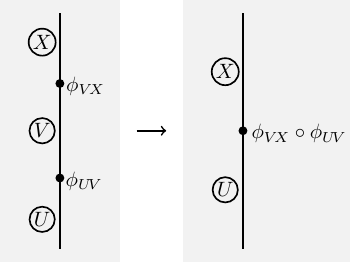}}
\end{center}
\caption{\label{fig:verticalcomp}Composition of morphisms. Having the interpretation of the functors as interfaces in mind, the composition of morphisms can be illustrated similarly to the composition of interface fields.}
\end{figure}

Composition of morphisms is defined in the obvious way: given two
morphisms $\phi_{UV}^{(n)}$ and $\phi_{VX}^{(n')}$, the composite
$\phi_{VX}^{(n')}\circ\phi_{UV}^{(n)}$ is a morphism of degree $n+n'$ that
acts on tuples of composable homomorphisms $f_i:M_i\to M_{i+1}$ as
\begin{align}
\phi_{VX}^{(n')}\circ\phi_{UV}^{(n)} \big(f_{n+n'},\dotsb ,f_{1}
\big)= \phi_{VX}^{(n')} \big(f_{n+n'},\dotsb,
f_{n+1}\big) \, 
\phi_{UV}^{(n)}\big(f_{n} ,\dotsb,f_{1} \big) \ .
\label{compofmorphisms}
\end{align}
For $n=0$ or $n'=0$, the corresponding list of homomorphisms is empty, and the degree-$0$ morphism is evaluated on the relevant module. Thus, for $n=0$ the right hand side contains $\phi^{(0)}_{UV}(M_1)$, while for $n'=0$ it contains $\phi^{(0)}_{VX}(M_{n+1})$.
As one can directly observe, this composition is associative. Graphically, we can illustrate the composition as in Figure~\ref{fig:verticalcomp}, similarly to the composition of interface fields. The identity morphism
$\id_{U}\in \Hom_{0}(U,U)$ for a $\mathbb{C}$-linear functor $U$ is of
degree $0$, and it maps a module $M$ to the identity homomorphism, 
\begin{equation}
\id_{U} (M) =  U(\one_{M})=\one_{U(M)} \ .
\end{equation}
Objects and morphisms together form a category:
\begin{Def}
$\Fun (R',R)$ is the category with $\mathbb{C}$-linear
functors from $R\fmod$ to $R'\fmod$ as objects, and for two
functors $U,V$ the space of morphisms is the $\mathbb{N}_{0}$-graded vector space
\begin{equation}
\Hom (U,V) = \bigoplus_{n=0}^{\infty} \Hom_{n} (U,V) \ .
\end{equation}
\end{Def}
The morphisms defined so far are too general. For instance, for the degree-$0$ morphisms, $\phi^{(0)}_{UV}(M)\in \Hom_{R'}(U,V)$ could be chosen arbitrarily and does not need to satisfy an additional condition. At the end, we would like the degree-$0$ morphisms to be natural transformations, and to achieve this, we introduce a differential which will allow us to reduce the morphism spaces by considering cohomology.

A differential $d$ that equips the category
$\Fun (R',R)$ with the structure of a differential graded category over $\mathbb{C}$ can be defined as follows:
\begin{Def}
For two $\mathbb{C}$-linear functors $U,V$ from $R\fmod$ to
$R'\fmod$ the differential $d$ maps a morphism $\phi_{UV}^{(n)}$ to a morphism 
$d\phi_{UV}^{(n)}$ of degree $n+1$,
\begin{align}
\big(d\phi_{UV}^{(n)}\big) \big(f_{n+1},\dotsc ,f_{1}\big) 
&= V (f_{n+1})\, \phi_{UV}^{(n)} \big(f_{n},\dotsc ,f_{1}\big) \nonumber\\
&\qquad - \phi_{UV}^{(n)}\big(f_{n+1}\, f_{n},f_{n-1},\dotsc
,f_{1}\big)\nonumber\\
&\qquad +\dotsb \nonumber\\
&\qquad + (-1)^{n} \phi_{UV}^{(n)}\big(f_{n+1},\dotsc
,f_{3},f_{2}\, f_{1}\big)\nonumber\\
&\qquad + (-1)^{n+1}  \phi_{UV}^{(n)}\big(f_{n+1},\dotsc ,f_{2}\big)\,  U (f_{1})  \ .
\label{defdiff}
\end{align}
\end{Def}
It is now straightforward to check the following lemma, which
guarantees that $\Fun (R',R)$ together with $d$ indeed forms a differential graded category.\footnote{This is the
differential $d$ that appears in Hochschild-Mitchell cohomology \cite{Herscovich:2005} with
values in the $R\fmod$-bimodule
$\prod_{(M,N)}\Hom_{R'}(U(M),V(N))$. Its construction can be understood as an alternating sum of coface maps.}
\begin{lem}
The differential $d$ squares to zero, $d \circ d = 0$. Furthermore, it satisfies a (graded) Leibniz property,
\begin{equation}
d\big(\phi^{(n)}_{VW}\circ \phi^{(n')}_{UV} \big) =
\big(d\phi^{(n)}_{VW} \big)\circ \phi^{(n')}_{UV} +
(-1)^{n} \phi^{(n)}_{VW} \circ \big(d \phi^{(n')}_{UV} \big) \ .
\end{equation}
\end{lem}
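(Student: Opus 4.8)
The plan is to recognize $d$ as the Hochschild-type (bar) differential and to reduce both claims to the combinatorics of its face operators together with two structural inputs that are available here: associativity of composition of module homomorphisms, $(f_{i+1}f_i)f_{i-1}=f_{i+1}(f_if_{i-1})$, and functoriality of $U,V,W$, i.e. $V(fg)=V(f)\,V(g)$ and likewise for $U$ and $W$. Concretely, I would write $d=\sum_{i=0}^{n+1}(-1)^i\delta_i$ on $\Hom_n$, where $\delta_0$ prepends the left action $V(f_{n+1})\,\cdot$, where $\delta_{n+1}$ appends the right action $\cdot\,U(f_1)$, and where for $1\le i\le n$ the operator $\delta_i$ merges the two homomorphisms adjacent at position $i$ by composing them. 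All of the content is then packaged in the cosimplicial-type relations $\delta_j\,\delta_i=\delta_i\,\delta_{j-1}$ for $i<j$, with the appropriate shift of indices between the two degrees involved.

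For $d\circ d=0$ I would expand $d^2\phi^{(n)}_{UV}$ as $\sum_{i,j}(-1)^{i+j}\delta_j\delta_i$ and split the range into $i<j$ and $i\ge j$. The cosimplicial relations identify each term of one range with exactly one term of the other, while the signs $(-1)^{i+j}$ of the two partners are opposite, so everything cancels in pairs. The relations themselves reduce to elementary facts: two interior merges commute up to reindexing by associativity; an interior merge adjacent to the top (resp. bottom) commutes with $\delta_0$ (resp. $\delta_{n+1}$) precisely because $V$ (resp. $U$) is a functor, e.g. $V(f_{n+1}f_n)=V(f_{n+1})V(f_n)$; and $\delta_0,\delta_{n+1}$ commute because they act at opposite ends. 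I would verify the two or three representative identities explicitly and let the general bookkeeping follow the usual simplicial pattern.

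For the graded Leibniz rule I would evaluate both sides on a sequence $f_{n+n'+1},\dots,f_1$ and compare term by term, using the definition of composition in \eqref{compofmorphisms}, which for $\Phi=\phi^{(n)}_{VW}\circ\phi^{(n')}_{UV}$ reads $\Phi(g_{n+n'},\dots,g_1)=\phi^{(n)}_{VW}(g_{n+n'},\dots,g_{n'+1})\,\phi^{(n')}_{UV}(g_{n'},\dots,g_1)$, the terms of $d\Phi$ applying this after the relevant merge. The $n+n'$ interior merges of $d\Phi$ fall into those acting on the top $n$ slots, which reproduce the interior terms of $(d\phi^{(n)}_{VW})\circ\phi^{(n')}_{UV}$, and those acting on the bottom $n'$ slots, including the merge straddling the two blocks, which lands in the $\phi^{(n')}_{UV}$-factor and reproduces the interior terms of $\phi^{(n)}_{VW}\circ(d\phi^{(n')}_{UV})$. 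The global sign $(-1)^i$ in $d\Phi$ matches the local sign of each sub-differential exactly up to the overall factor $(-1)^n$ carried by the second summand, which is the origin of the sign in the Leibniz rule; the far-left term $W(f_{n+n'+1})\,\Phi(\dots)$ and the far-right term $\Phi(\dots)U(f_1)$ match the boundary terms of the first and second summand by functoriality of $W$ and $U$.

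The one point needing a cancellation rather than a direct match is the inner boundary at the junction $f_{n'+1}$: the right boundary of $d\phi^{(n)}_{VW}$ contributes $(-1)^{n+1}\phi^{(n)}_{VW}(\dots)\,V(f_{n'+1})\,\phi^{(n')}_{UV}(\dots)$, while the left boundary of $d\phi^{(n')}_{UV}$, after the prefactor $(-1)^n$, contributes $(-1)^{n}\phi^{(n)}_{VW}(\dots)\,V(f_{n'+1})\,\phi^{(n')}_{UV}(\dots)$; these are identical expressions with opposite signs and hence cancel, whereas $d\Phi$ produces no such term, so the counts balance. I expect the only real obstacle in the whole argument to be the sign bookkeeping, in particular keeping consistent the shift between the global index in $d\Phi$ and the local indices in the two sub-differentials, rather than any conceptual difficulty, since all the algebraic input is just associativity and functoriality.
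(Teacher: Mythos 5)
Your proof is correct and follows exactly the route the paper intends: the paper gives no written proof, declaring the lemma "straightforward to check" and noting in a footnote that $d$ is the Hochschild--Mitchell differential, which is precisely the cosimplicial/bar-complex structure you exploit. Your decomposition into face operators, the pairwise cancellation for $d\circ d=0$, and in particular the identification of the single cancelling pair of junction terms $(-1)^{n+1}\phi^{(n)}_{VW}(\dots)V(f_{n'+1})\phi^{(n')}_{UV}(\dots)$ and $(-1)^{n}\phi^{(n)}_{VW}(\dots)V(f_{n'+1})\phi^{(n')}_{UV}(\dots)$ in the Leibniz rule are all accurate, so this is a valid filling-in of the omitted details.
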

\begin{ex}\label{ex:degree1morphism}
Let $R'=R=\mathbb{C}[x]$, and we consider the invertible functors $G_{\xi_{1}}$ and $G_{\xi_{2}}$ from Example~\ref{ex:invertiblefunctors} with $\xi_{1}\not = \xi_{2}$. We then construct a degree-$1$ morphism $\phi^{(1)}_{G_{\xi_{1}}G_{\xi_{2}}}\in \Hom_{1}(G_{\xi_{1}},G_{\xi_{2}})$: it takes as argument a homomorphism $f\in \Hom_R(M,N)$ and is defined by\footnote{Notice that $G_{\xi_{1}} (f)-G_{\xi_{2}} (f)$ is always of the form $x\,g$ with some homomorphism $g$, so this is well-defined.}
\begin{equation}
\phi^{(1)}_{G_{\xi_{1}}G_{\xi_{2}}} (f):= \frac{1}{x}\big(G_{\xi_{1}} (f)-G_{\xi_{2}} (f) \big) \in\Hom_R(G_{\xi_1}(M),G_{\xi_2}(N))\ .
\end{equation}
For example, for $M=N=R$ every homomorphism $f:R\to R$ is of the form $f=p\cdot \one_R$ with a polynomial $p\in R$. Then $\phi^{(1)}_{G_{\xi_{1}}G_{\xi_{2}}} (f)$ is again a homomorphism from $R$ to $R$ which acts by multiplication with the polynomial (we write $p(x)$ for the polynomial to simplify the notation)
\begin{equation}
    \frac{1}{x}\big( p(\xi_1 x) -p(\xi_2 x)\big)\in R\,.
\end{equation}
$\phi^{(1)}_{G_{\xi_{1}}G_{\xi_{2}}}$ is $d$-closed:
\begin{align}
d\phi^{(1)}_{G_{\xi_{1}}G_{\xi_{2}}} (f_{2},f_{1}) &= G_{\xi_{2}} (f_{2}) \frac{1}{x}\big(G_{\xi_{1}} (f_{1})-G_{\xi_{2}} (f_{1}) \big) - \frac{1}{x}\big(G_{\xi_{1}} (f_{2}f_{1})-G_{\xi_{2}} (f_{2}f_{1}) \big)\nonumber\\
&\quad  + \frac{1}{x}\big(G_{\xi_{1}} (f_{2})-G_{\xi_{2}} (f_{2}) \big) G_{\xi_{1}} (f_{2})= 0 \ ,
\end{align}
where we have used the functorial property $G_{\xi} (f_{2}f_{1})=G_{\xi} (f_{2})G_{\xi} (f_{1})$. These degree-$1$ morphisms will play a role later in minimal models when constructing fusion functors for topological defects via a generalized cone construction (see Example~\ref{ex:topdefectsinminimalmodels}).
\end{ex}

\subsection{The cohomology category}
From $\Fun (R',R)$ we can construct the graded cohomology category by taking the cohomology groups
$H^{n}(\Hom(U,V))$ as morphisms: 
\begin{Def}
$H^{\bullet}\Fun_{R',R}$ is the category which has 
the same objects as $\Fun (R',R)$, and where the set of morphisms
between two functors $U,V$ is given by 
\begin{equation}
H^{\bullet}(\Hom(U,V)) = \bigoplus_{n=0}^{\infty} H^{n}(\Hom(U,V)) \ .
\end{equation}
\end{Def}

Notice that -- as advertised before -- the degree-$0$ morphisms now are just the natural
transformations between $U$ and $V$. Evaluating the condition that a degree-$0$ morphism is closed, $d\phi^{(0)}_{UV}=0$, on a homomorphism $f:M\to N$, one obtains
\begin{equation}
(d\phi_{UV}^{(0)}) (f)
= V (f) \, \phi_{UV}^{(0)}
(M) - \phi_{UV}^{(0)} (N)\,  U (f)   = 0 \ ,
\end{equation}
which is the naturality condition.

For the application of this formalism to the concept of fusion
functors, it is useful to reduce the spaces of morphisms in
$\Fun (R',R)$. For two linear functors $U,V$ we consider the
subspace of \textsl{normalized morphisms} of degree $n$ that
vanish when one entry is the identity endomorphism of a module,
\begin{equation}\label{defreducedspace}
\Hom_{n}^{\mathrm{norm}}(U,V) = \{\phi^{(n)}_{UV} \in \Hom_{n} (U,V),\
\phi^{(n)}_{UV} (f_{n},\dotsc ,f_{1}) = 0 \ \text{if any}\
f_{i}=\one \} \ .  
\end{equation}
The total space of normalized morphisms is then
\begin{equation}
\Hom^{\mathrm{norm}} (U,V) = \bigoplus_{n=0}^{\infty}
\Hom^{\mathrm{norm}}_{n}(U,V)\ .
\end{equation}
We denote by $\Fun^{\mathrm{norm}} (R',R)$ the subcategory with these normalized morphism spaces.
The differential $d$ also acts on it, so that
again we can consider its cohomology. One finds that
\begin{lem}
\begin{equation}
\forall n\in\mathbb{N}_{0}\qquad H^{n} (\Hom^{\mathrm{norm}}(U,V)) \cong  H^{n} (\Hom(U,V))\ .
\end{equation}
\end{lem}
Therefore one does not lose any information on the cohomology by 
considering only the subspace of normalized morphisms, which vanish when one
argument is the identity. This is a general fact on cochain complexes of a cosimplicial object in an abelian category (see, e.g., \cite[chapter 8]{Weibel} for the corresponding statement for the homology). To be self-contained, we present a proof of this lemma in appendix~\ref{sec:projection} by defining a projection operator $P$ on $\Hom (U,V)$ (see~\eqref{defP}) whose image is $\Hom^{\mathrm{norm}} (U,V)$ and which can be written as
\begin{equation}\label{projectionP}
P = \mathrm{id} - d R - R d \ .
\end{equation}
Here, $R$ is an operator on the morphism spaces of degree $-1$, i.e.\ it maps a degree-$n$ morphism to a degree-$(n-1)$ morphism. If $[\psi]\in H^{n} (\Hom(U,V))$ is a cohomology class, we therefore have $[\psi]=[P\psi]$, and we can always consider a representative in the space $\Hom^{\mathrm{norm}}(U,V)$. 

\subsection{Horizontal composition}\label{sec:horizontal}
If we have a functor $U_{1}:R_{1}\fmod\to R_{2}\fmod$ and another functor $U_{2}:R_{2}\fmod\to R_{3}\fmod$, we can compose them to a functor $U_{2}\otimes U_{1}:=U_{2}U_{1}$. We can then ask whether we can also compose morphisms $\phi_{U_{1}V_{1}}$ and $\phi_{U_{2}V_{2}}$ to a morphism from $U_{2}U_{1}$ to $V_{2}V_{1}$ (see Figure~\ref{fig:horizontalcomp0}). We define the following horizontal composition:\footnote{The investigation of horizontal composition in this context is based on a suggestion by Nils Carqueville.}

\begin{figure}
\begin{center}
\scalebox{.9}{\includegraphics{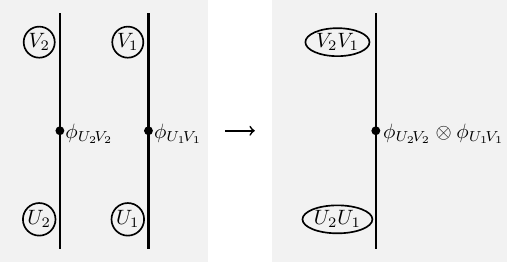}}
\end{center}
\caption{\label{fig:horizontalcomp0}Horizontal composition of morphisms $\phi_{U_{1}V_{1}}$ and $\phi_{U_{2}V_{2}}$ (see Definition~\ref{def:horizontalcomp}).}
\end{figure}

\begin{Def}\label{def:horizontalcomp}
Let $U_{1},V_{1}$ be functors from $R_{1}\fmod$ to $R_{2}\fmod$ and $U_{2},V_{2}$ functors from $R_{2}\fmod$ to $R_{3}\fmod$. For morphisms $\phi^{(n_{i})}_{U_{i}V_{i}}\in \Hom_{n_{i}}(U_{i},V_{i})  $ ($i=1,2$), we define
\begin{equation}
\phi^{(n_{2})}_{U_{2}V_{2}} \otimes \phi^{(n_{1})}_{U_{1}V_{1}} \in \Hom_{n_{1}+n_{2}}(U_{2}U_{1},V_{2}V_{1}) 
\end{equation}
by
\begin{multline}\label{horizontalcompdef}
\phi^{(n_{2})}_{U_{2}V_{2}} \otimes \phi^{(n_{1})}_{U_{1}V_{1}} (f_{n_{1}+n_{2}},\dotsc ,f_{1})\\
 = \phi^{(n_{2})}_{U_{2}V_{2}} \big(V_{1} (f_{n_{1}+n_{2}}),\dotsc ,V_{1} (f_{n_{1}+1})\big) \, U_{2} \big(\phi^{(n_{1})}_{U_{1}V_{1}} (f_{n_{1}},\dotsc ,f_{1})\big)\ .
\end{multline}
\end{Def}
When we define the composition in such a way, it immediately follows that
\begin{equation}\label{horizontalcomp1}
\phi^{(n_{2})}_{U_{2}V_{2}} \otimes \phi^{(n_{1})}_{U_{1}V_{1}} = (\phi^{(n_{2})}_{U_{2}V_{2}} \otimes \one  )\circ (\one  \otimes \phi^{(n_{1})}_{U_{1}V_{1}}) \ .
\end{equation}
In other words, the result of horizontal composition of two morphisms can be obtained by choosing a vertical ordering, followed by a horizontal composition with identity morphisms and an ordinary (vertical) composition of morphisms (as illustrated in Figure~\ref{fig:horizontalcomp1}). One could have defined horizontal composition by using the opposite vertical ordering as in Figure~\ref{fig:horizontalcomp2}. Up to a possible sign (that arises in exchanging the ordering when both morphisms are odd), this leads to the same morphism, but only in the cohomology category. Indeed, in cohomology, horizontal and vertical composition commute, which is the content of the following proposition (illustrated in Figure~\ref{fig:horizontalcomp3}).

\begin{figure}
\begin{center}
\scalebox{.9}{\includegraphics{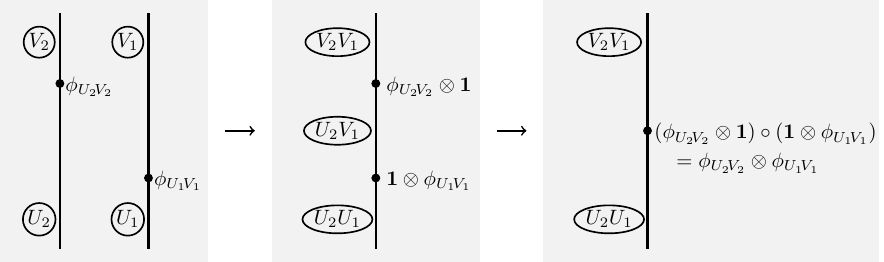}}
\end{center}
\caption{\label{fig:horizontalcomp1}Horizontal composition as defined in Definition~\ref{def:horizontalcomp} can be viewed as first separating the morphisms vertically, then composing with an identity morphism, and finally composing them vertically.}
\end{figure}

\begin{figure}[h]
\begin{center}
\scalebox{.9}{\includegraphics{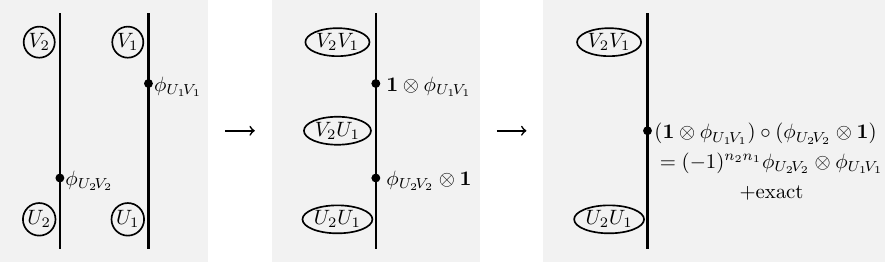}}
\end{center}
\caption{\label{fig:horizontalcomp2}Doing the horizontal composition for the reversed vertical ordering as compared to Figure~\ref{fig:horizontalcomp1}, one obtains for closed morphisms the same result (with an additional sign in case the degrees, $n_1$ and $n_2$, of both morphisms are odd) up to an exact morphism, so we have equality only in the cohomology category, see~\eqref{horizontalcomp2}.}
\end{figure}

\begin{figure}[h]
\begin{center}
\scalebox{.9}{\includegraphics{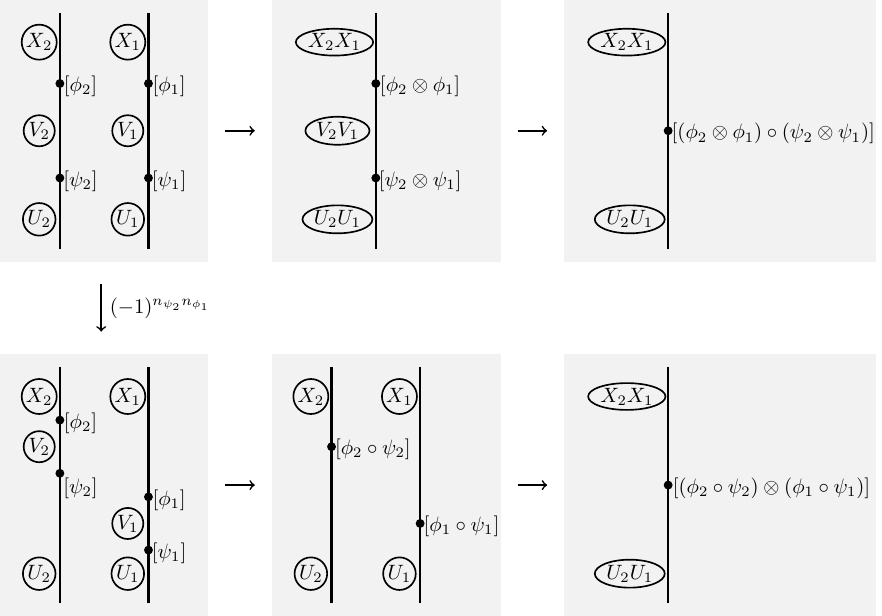}}
\end{center}
\caption{\label{fig:horizontalcomp3}Vertical and horizontal composition (anti-commute) in cohomology (see Proposition~\ref{prop:horizontalcomposition}). The upper figures illustrate first horizontal and then vertical composition. By first considering vertical composition as in the lower figures, one has to switch the vertical ordering of $\phi_1$ and $\psi_2$, which results in an overall sign if both morphisms are odd.}
\end{figure}

\begin{prop}\label{prop:horizontalcomposition}
Let $U_{1},V_{1},X_{1}$ and $U_{2},V_{2},X_{2}$ be functors as in the previous definition, and $[\psi^{( m_{i})}]\in H^{\bullet} (\Hom(U_{i},V_{i})) $ and $[\phi^{( n_{i})}]\in H^{\bullet} (\Hom(V_{i},X_{i})) $($i=1,2$) morphisms in the cohomology categories. Then
\begin{equation}
[(\phi^{(n_{2})} \otimes \phi^{(n_{1})})  \circ (\psi^{(m_{2})} \otimes \psi^{(m_{1})}) ] = (-1)^{n_{1}m_{2}} [(\phi^{(n_{2})}\circ \psi^{(m_{2})}) \otimes (\phi^{(n_{1})}\circ \psi^{(m_{1})} )]\ .
\end{equation}
\end{prop}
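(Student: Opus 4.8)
The plan is to reduce the statement to a homotopy-commutativity (``interchange up to sign'') of two whiskerings, and then to exhibit an explicit Gerstenhaber-type homotopy. Since the vertical composites $\phi^{(n_i)}\circ\psi^{(m_i)}$ must be defined, I read the data as composable chains $U_i\xrightarrow{\psi^{(m_i)}}V_i\xrightarrow{\phi^{(n_i)}}W_i$ of functors. First I would use \eqref{horizontalcomp1} together with the two elementary properties of whiskering --- that $\id_{K}\otimes(-)$ and $(-)\otimes\id_{K}$ are dg-functors, i.e.\ $\id_{K}\otimes(\alpha\circ\beta)=(\id_{K}\otimes\alpha)\circ(\id_{K}\otimes\beta)$ and likewise on the right, which both follow from functoriality of $K$ and \eqref{horizontalcompdef} --- to rewrite the two composites with the \emph{same} outer factors. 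Writing $X:=\phi^{(n_2)}\otimes\id_{W_1}$ and $Y:=\id_{U_2}\otimes\psi^{(m_1)}$, a short manipulation gives
\begin{align}
(\phi^{(n_2)}\otimes\phi^{(n_1)})\circ(\psi^{(m_2)}\otimes\psi^{(m_1)}) &= X\circ A\circ Y \ ,\\
(\phi^{(n_2)}\circ\psi^{(m_2)})\otimes(\phi^{(n_1)}\circ\psi^{(m_1)}) &= X\circ B\circ Y \ ,
\end{align}
where $A:=(\id_{V_2}\otimes\phi^{(n_1)})\circ(\psi^{(m_2)}\otimes\id_{V_1})$ and $B:=(\psi^{(m_2)}\otimes\id_{W_1})\circ(\id_{U_2}\otimes\phi^{(n_1)})$ are the two ways of whiskering $\phi^{(n_1)}$ and $\psi^{(m_2)}$ on opposite sides, both morphisms $U_2V_1\to V_2W_1$ of degree $n_1+m_2$. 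Thus the whole proposition reduces to the single identity $[A]=(-1)^{n_1m_2}[B]$.

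Next I would observe that whiskering commutes with the differential: applying \eqref{defdiff} and using functoriality of $K$ one checks $d(\id_{K}\otimes\phi)=\id_{K}\otimes d\phi$ and $d(\phi\otimes\id_{K})=(d\phi)\otimes\id_{K}$. Consequently $X$, $Y$, $A$ and $B$ are all $d$-closed (the last two by the graded Leibniz property of $d$, since $\phi^{(n_1)}$ and $\psi^{(m_2)}$ are closed). Granting a homotopy $H$ with $A-(-1)^{n_1m_2}B=dH$, the Leibniz rule gives $X\circ(A-(-1)^{n_1m_2}B)\circ Y=X\circ(dH)\circ Y=(-1)^{n_2}\,d(X\circ H\circ Y)$ because $dX=dY=0$, so the two sides of the proposition differ by a coboundary and hence agree in cohomology. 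That both sides descend to well-defined maps on classes follows from the same fact, that $\otimes$ and $\circ$ send closed morphisms to closed ones and coboundaries to coboundaries.

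It remains to produce $H$, which is the heart of the argument and amounts to the classical computation that the cup product on a Hochschild-type complex is graded-commutative up to a brace homotopy. Writing $\phi:=\phi^{(n_1)}\colon V_1\to W_1$ and $\psi:=\psi^{(m_2)}\colon U_2\to V_2$, I would take $H$ of degree $n_1+m_2-1$ to be the operator that inserts the block $\phi(\dotsc)$ into one of the $m_2$ argument slots of $\psi$, filling the slots below the insertion with $V_1$-images and those above with $W_1$-images of the remaining morphisms, and summing over the $m_2$ insertion positions with alternating signs. Computing $dH$ via \eqref{defdiff}, the closedness $d\psi=0$ makes the terms coming from neighbouring insertion positions telescope, while $d\phi=0$ annihilates the terms in which the internal differential hits the boundary of the $\phi$-block; the two surviving extreme terms are exactly $A$ and $\pm B$. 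As a consistency check, for $n_1=m_2=1$ one has $H(f)=\psi(\phi(f))$, and a direct evaluation using $(d\psi)(g_2,g_1)=0$ with $(g_2,g_1)=(W_1(f_2),\phi(f_1))$ and $(g_2,g_1)=(\phi(f_2),V_1(f_1))$ together with $d\phi=0$ gives $dH=-(A+B)=-\bigl(A-(-1)^{1}B\bigr)$, i.e.\ the claimed relation with $(-1)^{n_1m_2}=-1$.

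I expect the only real difficulty to be the sign bookkeeping in the general brace computation: choosing the insertion signs so that the telescoping of the $d\psi$-terms leaves precisely $A$ at one end and $(-1)^{n_1m_2}B$ at the other, and verifying that the $d\phi$-terms cancel rather than leaving a residue. The source/target and degree matchings, the reduction via \eqref{horizontalcomp1}, and the $d$-closedness statements are all routine once the composable chain $U_i\to V_i\to W_i$ has been fixed.
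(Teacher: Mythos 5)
Your proposal follows essentially the same route as the paper: reduce via \eqref{horizontalcomp1} to the single interchange identity $[A]=(-1)^{n_1m_2}[B]$ for the two opposite whiskerings of $\phi^{(n_1)}$ and $\psi^{(m_2)}$, then kill the difference with the insertion homotopy that slides the $\phi$-block through the argument slots of $\psi$ --- this is precisely the paper's chain of morphisms $\xi_{\ell}^{(m+n-1)}$, with the telescoping supplied by $d\psi=0$ and the residual terms annihilated by $d\phi=0$. The sign bookkeeping you defer is exactly what the paper's explicit prefactors $(-1)^{n(m-\ell+1)}$ in the intermediate terms $A^{(\ell)},B^{(\ell)}$ accomplish, so the argument closes as you expect.
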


\begin{proof}
We first notice that because of~\eqref{horizontalcomp1} it is enough to show that
\begin{equation}\label{horizontalcomp2}
[(\one \otimes \phi^{(n)} ) \circ (\psi^{(m)} \otimes \one )] = (-1)^{nm} [\psi^{(m)} \otimes \phi^{(n)}]\ .
\end{equation}
This means that different choices for the vertical ordering of the morphisms lead to the same result up to a possible sign (see Figure~\ref{fig:horizontalcomp1} and Figure~\ref{fig:horizontalcomp2}).

We evaluate the left-hand side of~\eqref{horizontalcomp2} using the definition~\eqref{horizontalcompdef}, and we obtain
\begin{align}
C_{L}&= (\one_{V_{2}} \otimes \phi^{(n)} ) \circ (\psi^{(m)} \otimes \one_{U_{1}} ) (f_{m+n},\dotsc ,f_{1})\nonumber \\
&= V_{2}  \big(\phi^{(n)} (f_{m+n},\dotsc ,f_{m+1})\big) \, \psi^{(m)} \big(U_{1} (f_{m}),\dotsc ,U_{1} (f_{1})\big)\ .
\end{align}
On the other hand, the right-hand side is
\begin{align}
C_{R} &= (-1)^{nm} \psi^{(m)} \otimes \phi^{(n)}  (f_{m+n},\dotsc ,f_{1})\nonumber \\
&= (-1)^{nm} \psi^{(m)} \big(V_{1} (f_{m+n}),\dotsc ,V_{1} (f_{n+1})\big) \, U_{2} \big(\phi^{(n)} (f_{n},\dotsc ,f_{1})\big) \ .
\end{align}
We want to show that $C_{L}$ and $C_{R}$ agree up to exact terms if $\phi^{(n)}$ and $\psi^{(m)}$ are closed morphisms. We observe that because of $d\psi^{(m)}=0$ we have
\begin{align}
C_{L} &= C_{L}- d\psi^{(m)}\big(\phi^{(n)} (f_{m+n},\dotsc ,f_{m+1}),U_{1} (f_{m}),\dotsc ,U_{1} (f_{1}) \big) \nonumber\\
&= \psi^{(m)}\big(\phi^{(n)} (f_{m+n},\dotsc ,f_{m+1})\, U_{1} (f_{m}),U_{1}(f_{m-1}),\dotsc ,U_{1} (f_{1}) \big)\nonumber\\
&\ + \dotsb + (-1)^{m}\psi^{(m)} \big(\phi^{(n)} (f_{m+n},\dotsc ,f_{m+1}),U_{1} (f_{m}),\dotsc ,U_{1} (f_{2}) \big)\, U_{2}U_{1} (f_{1}) \nonumber\\
 &=: B^{(m)}\ .
\label{CL}
\end{align}
Similarly, we have
\begin{align}
C_{R} &= C_{R} + (-1)^{m (n+1)} d\psi^{(m)} \big(V_{1} (f_{m+n}),\dotsc ,V_{1} (f_{n+1}),\phi^{(n)} (f_{n},\dotsc ,f_{1})) \big)\nonumber\\
&= (-1)^{m (n+1)} V_{2}V_{1} (f_{m+n})\, \psi^{(m)}\big(V_{1} (f_{m+n-1}),\dotsc ,V_{1} (f_{n+1}),\phi^{(n)} (f_{n},\dotsc ,f_{1}) \big)\nonumber\\
&\  + \dotsb +(-1)^{mn} \psi^{(m)}\big(V_{1} (f_{m+n}),\dotsc ,V_{1} ( f_{n+1})\, \phi^{(n)} (f_{n},\dotsc ,f_{1})\big)\nonumber\\
&=: A^{(1)}\ .
\label{CR}
\end{align}
We construct a chain $A^{(1)},B^{(1)},\dotsc ,A^{(m)},B^{(m)}$ of terms and show that they all agree in cohomology, which then proves the proposition. To shorten the notation we introduce the symbol
\begin{equation}
\phi^{(n)}_{\ell}:=\phi^{(n)} (f_{n+\ell -1},\dotsc ,f_{\ell}) \qquad (\ell =1,\dotsc ,m+1)\ .
\end{equation}
We then define for $\ell =1,\dotsc ,m$
\begin{align}
& A^{(\ell)} := (-1)^{n(m-\ell +1)} \Big\{ \nonumber \\
&(-1)^{m-\ell +1} V_{2}V_{1} (f_{m+n})\, \psi^{(m)} \big(V_{1} (f_{m+n-1})\listdots V_{1} (f_{n+\ell}),\phi^{(n)}_{\ell} ,U_{1} (f_{\ell -1})\listdots U_{1} (f_{1}) \big)\nonumber\\
&+ (-1)^{m-\ell} \psi^{(m)} \big(V_{1} (f_{m+n}\, f_{m+n-1})\listdots V_{1} (f_{n+\ell }), \phi^{(n)}_{\ell} ,U_{1} (f_{\ell -1})\listdots U_{1} (f_{1})\big)\nonumber\\
&+ \dotsb  -\psi^{(m)} \big(V_{1} (f_{m+n})\listdots V_{1} (f_{n+\ell +1}\, f_{n+\ell}),\phi^{(n)}_{\ell},U_{1} (f_{\ell -1})\listdots U_{1} (f_{1}) \big)\nonumber\\
&+ \psi^{(m)}\big(V_{1} (f_{m+n})\listdots V_{1} (f_{n+\ell +1}),V_{1} (f_{n+\ell})\, \phi^{(n)}_{\ell} ,U_{1} (f_{\ell -1})\listdots U_{1} (f_{1}) \big)\Big\}\ ,
\end{align}
and
\begin{align}
&B^{(\ell)} := (-1)^{n (m-\ell)}\Big\{\nonumber \\
& \psi^{(m)}\big(V_{1} (f_{m+n})\listdots V_{1} (f_{n+\ell +1}),\phi^{(n)}_{\ell +1}\, U_{1} (f_{\ell}),U_{1} (f_{\ell -1})\listdots U_{1} (f_{1}) \big)\nonumber\\
&- \psi^{(m)}\big(V_{1} (f_{m+n})\listdots V_{1} (f_{n+\ell +1}),\phi^{(n)}_{\ell +1},U_{1} (f_{\ell}\, f_{\ell -1})\listdots U_{1} (f_{1}) \big)\nonumber\\
&+\dotsb +(-1)^{\ell -1} \psi^{(m)}\big(V_{1} (f_{m+n})\listdots V_{1} (f_{n+\ell +1}),\phi^{(n)}_{\ell +1},U_{1} (f_{\ell})\listdots U_{1} (f_{2}\, f_{1}) \big)\nonumber\\
&+(-1)^{\ell} \psi^{(m)}\big(V_{1} (f_{m+n})\listdots V_{1} (f_{n+\ell +1}),\phi^{(n)}_{\ell +1},U_{1} (f_{\ell})\listdots U_{1} (f_{2}) \big)\, U_{2}U_{1} (f_{1})\Big\}\ .
\end{align}
We first observe that for $\ell =2,\dotsc ,m$
\begin{align}
&A^{(\ell)} - B^{(\ell -1)}\nonumber \\
&\ = (-1)^{(n+1) (m-\ell +1)} d\psi^{(m)}\big(V_{1} (f_{m+n})\listdots V_{1} (f_{\ell+n}),\phi^{(n)}_{\ell},U_{1} (f_{\ell -1})\listdots U_{1} (f_{1}) \big)\nonumber \\
&\  = 0 \ .
\end{align}
To finish the proof we will now show that $A^{(\ell)}-B^{(\ell)}$ is exact. Introduce for $\ell =1,\dotsc ,m$ the following morphisms of degree $(m+n-1)$ between $U_{2}U_{1}$ and $V_{2}V_{1}$,
\begin{multline}
\xi_{\ell }^{(m+n-1)} (f_{m+n-1}\listdots f_{1}) \\
:= \psi^{(m)} \big(V_{1} (f_{m+n-1})\listdots V_{1} (f_{n+\ell}),\phi^{(n)}_{\ell},U_{1} (f_{\ell})\listdots U_{1} (f_{1}) \big)\ .
\end{multline}
When we evaluate $d\xi_{\ell }^{(m+n-1)}$ on homomorphisms, we recognize terms that appear in $A^{\ell}$ and $B^{\ell}$, plus additional terms:
\begin{align}
&d\xi_{\ell }^{(m+n-1)} (f_{m+n}\listdots f_{1}) - (-1)^{(n+1)(m-\ell +1)} (A^{(\ell)}-B^{(\ell)}) \\
&\quad = (-1)^{m-\ell} \psi^{(m)}\big(V_{1} (f_{m+n})\listdots V_{1} (f_{\ell +n+1}),V_{1} (f_{\ell +n})\, \phi^{(n)}_{\ell},U_{1} (f_{\ell -1})\listdots U_{1} (f_{1}) \big)\nonumber\\
&\qquad +   (-1)^{m-\ell+1} \psi^{(m)} \big(V_{1} (f_{m+n})\listdots \phi^{(n)} (f_{\ell +n}\, f_{\ell +n-1}\listdots f_{\ell})\listdots U_{1} (f_{1})\big)\nonumber\\
&\qquad +\dotsb + (-1)^{m+n-\ell}\psi^{(m)}\big(V_{1} (f_{m+n})\listdots \phi^{(n)} (f_{\ell +n}\listdots f_{\ell +1}\, f_{\ell})\listdots U_{1} (f_{1}) \big)\nonumber\\
&\qquad + (-1)^{m+n-\ell+1}\psi^{(m)}\big(V_{1} (f_{m+n})\listdots \phi^{(n)} (f_{\ell +n}\listdots f_{\ell +1})\, U_{1} (f_{\ell})\listdots U_{1} (f_{1}) \big)\nonumber\\
&\quad = (-1)^{m-\ell} \psi^{(m)}\big(V_{1} (f_{m+n})\listdots V_{1} (f_{\ell +n+1}), d\phi^{(n)} (f_{\ell +n}\listdots f_{\ell })\listdots U_{1} (f_{1})\big) = 0\ .
\end{align}
In the last step we used that $\phi^{(n)}$ is closed. This finally shows that all terms $A^{(1)},B^{(1)},\dotsc ,A^{(m)},B^{(m)}$ agree in cohomology, and the same applies to $C_{L}$ and $C_{R}$ due to~\eqref{CL} and~\eqref{CR}. 
\end{proof}

The horizontal composition provides the category $H^{\bullet}\Fun (R,R)$ of endofunctors on $R\fmod$ with the structure of a monoidal supercategory (in the sense of~\cite{Brundan:2017}):
\begin{prop}\label{prop:monoidalsupercategory}
The graded cohomology category $H^{\bullet}\Fun (R,R)$ (where we consider the morphism spaces as $\mathbb{Z}_{2}$-graded vector spaces whose even/odd parts are formed by the morphisms of even/odd degree) is a strict monoidal supercategory when supplemented by the tensor product bifunctor
\begin{equation}
\begin{array}{rl} \otimes : H^{\bullet}\Fun (R,R) \times H^{\bullet}\Fun (R,R) &\longrightarrow H^{\bullet}\Fun (R,R)\\
(U_{2},U_{1}) &\longmapsto U_{2}U_{1}\\
(\psi_{U_{2}V_{2}},\phi_{U_{1}V_{1}}) & \longmapsto \psi_{U_{2}V_{2}}\otimes \phi_{U_{1}V_{1}}\ ,
\end{array}
\end{equation}
the unit object being the identity functor $\mathrm{id}$.
\end{prop}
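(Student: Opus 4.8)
The plan is to verify, one by one, the defining data and axioms of a strict monoidal supercategory in the sense of~\cite{Brundan:2017}, reducing everything either to direct consequences of the definition~\eqref{horizontalcompdef} or to results already established. Recall that such a structure is a category enriched in $\mathbb{Z}_{2}$-graded $\mathbb{C}$-vector spaces whose tensor product $\otimes$ is an even (parity-preserving) superbifunctor that is strictly associative and strictly unital. The enrichment is already available: each $H^{\bullet}(\Hom(U,V))$ is $\mathbb{Z}_{2}$-graded by the parity of the degree $n$, and the vertical composition~\eqref{compofmorphisms} adds degrees, hence respects this grading, so $H^{\bullet}\Fun(R,R)$ is a supercategory. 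Since~\eqref{horizontalcompdef} is manifestly $\mathbb{C}$-bilinear and raises the degree by $n_{1}+n_{2}$, the candidate bifunctor is even. It then remains to check that (i)~$\otimes$ descends to cohomology, (ii)~$\otimes$ is a superbifunctor, and (iii)~associativity and unitality are strict.

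For~(i) the first step is to record the two one-sided Leibniz identities
\begin{equation}
d\big(\phi^{(n_{2})}\otimes \one_{U_{1}}\big) = \big(d\phi^{(n_{2})}\big)\otimes \one_{U_{1}}\ , \qquad d\big(\one_{U_{2}}\otimes \phi^{(n_{1})}\big) = \one_{U_{2}}\otimes \big(d\phi^{(n_{1})}\big)\ ,
\end{equation}
which follow directly from~\eqref{horizontalcompdef} and~\eqref{defdiff} once one observes that $\phi\otimes\one_{U_{1}}$ merely precomposes every argument with the functor $U_{1}$, while $\one_{U_{2}}\otimes\phi$ postcomposes the output with $U_{2}$, and that these functors preserve composition of homomorphisms. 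Combining these with the factorization~\eqref{horizontalcomp1} and the graded Leibniz rule of the Lemma for vertical composition then yields the full graded Leibniz rule
\begin{equation}
d\big(\phi^{(n_{2})}\otimes \phi^{(n_{1})}\big) = \big(d\phi^{(n_{2})}\big)\otimes \phi^{(n_{1})} + (-1)^{n_{2}}\,\phi^{(n_{2})}\otimes \big(d\phi^{(n_{1})}\big)\ .
\end{equation}
From this it is immediate that the horizontal composite of closed morphisms is closed and that it becomes exact as soon as one factor is exact, so $\otimes$ induces a well-defined, $\mathbb{C}$-bilinear operation on cohomology classes.

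For~(ii), preservation of identities, $\id_{U_{2}}\otimes \id_{U_{1}} = \id_{U_{2}U_{1}}$, is a one-line evaluation of~\eqref{horizontalcompdef} on a single module. The remaining and only substantial bifunctoriality axiom is the super-interchange law, which is exactly the content of the preceding proposition; it is precisely here that passing to cohomology is indispensable, since~\eqref{horizontalcompdef} obeys the interchange law only up to exact terms. For~(iii) the plan is to expand $(\phi^{(n_{3})}\otimes \phi^{(n_{2})})\otimes \phi^{(n_{1})}$ and $\phi^{(n_{3})}\otimes (\phi^{(n_{2})}\otimes \phi^{(n_{1})})$ from the definition: both collapse to the same threefold product, in which the arguments feeding $\phi^{(n_{3})}$ are precomposed with $V_{2}V_{1}$, those feeding $\phi^{(n_{2})}$ with $V_{1}$, and the three outputs composed, the only nontrivial move being that the outer functor $U_{3}$ preserves composition of homomorphisms. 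Hence associativity holds strictly already at the chain level, and on objects as well since functor composition is associative. Strict unitality is checked the same way: taking the unit $\id$ to be the identity functor, so that $\id_{\id}(M)=\one_{M}$, formula~\eqref{horizontalcompdef} gives $\phi\otimes \id_{\id}=\phi$ and $\id_{\id}\otimes \phi=\phi$ on the nose. As every structure morphism is an identity, the pentagon and triangle coherence axioms hold trivially, and the proposition follows.

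The step I expect to demand the most care --- beyond simply invoking the preceding proposition for the interchange law --- is establishing that $\otimes$ is well-defined on cohomology, i.e.\ the graded Leibniz rule. The reduction via~\eqref{horizontalcomp1} makes this conceptually clean, but the two one-sided identities must still be verified term by term, and that is where the bookkeeping resides. Everything else (strict associativity, strict unitality, preservation of identities) is a direct consequence of the functoriality of the constituent functors and does not introduce any further difficulty.
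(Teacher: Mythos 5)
Your proof is correct and follows the route the paper intends: the paper states this proposition without proof, relying on the preceding interchange-law proposition for the superbifunctor axiom and leaving the remaining verifications implicit, and your write-up supplies exactly those verifications (descent of $\otimes$ to cohomology via the graded Leibniz rule obtained from~\eqref{horizontalcomp1} and the one-sided identities, strict associativity and unitality from functoriality of the constituent functors). All steps check out, including the only genuinely missing ingredient in the paper, namely that the horizontal composite of closed morphisms is closed and of an exact with a closed morphism is exact.
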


\subsection{Relation to Hochschild cohomology}
When we later consider fusion functors, the morphisms correspond to interface fields. Therefore it is useful to investigate the structure of the morphism spaces further. 
In fact, there exists a simpler characterization of the morphism spaces in the cohomology category in terms of Hochschild cohomology:
\begin{prop}\label{prop:reltoHochschild}
Let $U,V$ be objects in $H^{\bullet}\Fun (R',R)$. Then the space of morphisms of degree $n$ between $U$ and $V$ is isomorphic (as $R'$-modules) to the $n^{\text{th}}$ Hochschild cohomology space with coefficients in the $R$-bimodule $\Hom_{R'} (U (R),V (R))$,
\begin{equation}
H^{n} (\Hom (U,V)) \cong H\!H^{n} \big(R,\Hom_{R'} (U (R),V (R))\big) \ .
\end{equation}
\end{prop}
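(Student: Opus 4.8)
The plan is to recognise $(\Hom_{\bullet}(U,V),d)$ as the Hochschild--Mitchell cochain complex of the $\mathbb{C}$-linear category $R\fmod$ with coefficients in the bimodule $B$ that assigns to a pair $(M,N)$ the space $\Hom(U(M),V(N))$, and then to collapse the whole category onto its single object $R$ by Morita invariance. By the very definition of a degree-$n$ morphism, such a $\phi^{(n)}_{UV}$ is exactly a $\mathbb{C}$-multilinear assignment on chains of $n$ composable maps, valued in $B$ of the source and target, and the differential~\eqref{defdiff} is precisely the Hochschild--Mitchell differential, with the left and right $R$-actions on $B$ given by post-composition with $V(m_{r})$ and pre-composition with $U(m_{r})$, where $m_{r}\colon R\to R$ is multiplication by $r\in R$. (These actions are well defined because $\mathbb{C}$-linearity of $U,V$ on homomorphisms entails additivity.) The target $H\!H^{n}(R,\Hom(U(R),V(R)))$ is then the Hochschild--Mitchell cohomology of the full subcategory on the single object $R$, whose endomorphism algebra is $R$ and on which $B$ restricts to $\Hom(U(R),V(R))$. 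The natural comparison is the restriction chain map $\Theta$ sending $\phi^{(n)}_{UV}$ to the cochain $\Theta(\phi)(r_{n},\dotsc ,r_{1}):=\phi(m_{r_{n}},\dotsc ,m_{r_{1}})$; that $\Theta$ commutes with the differentials is immediate, since $m_{r}\, m_{s}=m_{rs}$ and $U(m_{r}),V(m_{r})$ implement exactly the bimodule actions.

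To build a candidate inverse I would use that $U,V$, being additive, preserve finite biproducts: the coordinate inclusions $\iota_{j}\colon R\to R^{m}$ and projections $\pi_{j}\colon R^{m}\to R$ satisfy $\pi_{j}\iota_{k}=\delta_{jk}\one$ and $\sum_{j}\iota_{j}\pi_{j}=\one$, hence their images under $U$ exhibit a canonical isomorphism $U(R^{m})\cong U(R)^{\oplus m}$, and likewise for $V$. Using this I define, for a Hochschild cochain $c$ and arbitrary maps $f_{i}\colon M_{i}\to M_{i+1}$ of free modules with matrix entries $(f_{i})_{jk}\in R$,
\begin{equation}
\Psi(c)(f_{n},\dotsc ,f_{1}) = \sum_{j_{0},\dotsc ,j_{n}} V(\iota_{j_{n}})\, c\big((f_{n})_{j_{n}j_{n-1}},\dotsc ,(f_{1})_{j_{1}j_{0}}\big)\, U(\pi_{j_{0}})\ .
\end{equation}
This is multilinear in the $f_{i}$ and valued in $\Hom(U(M_{1}),V(M_{n+1}))$. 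A direct computation, in which the matrix-product identity $(f_{i+1}f_{i})_{j_{i+1}j_{i-1}}=\sum_{j_{i}}(f_{i+1})_{j_{i+1}j_{i}}(f_{i})_{j_{i}j_{i-1}}$ reproduces, via the summation over the internal index and multilinearity of $c$, the corresponding contracted term of the differential, shows that $\Psi$ is itself a chain map. Evaluating $\Psi(c)$ on rank-one chains forces all inclusions and projections to be identities, so $\Theta\Psi=\id$ on cochains; thus $\Theta$ is a split surjection of complexes and, in particular, surjective on cohomology.

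The remaining and genuinely hard point is injectivity, i.e.\ that $\Psi\Theta$ is chain-homotopic to the identity on $\Hom^{\bullet}(U,V)$. Here $\Psi\Theta(\phi)$ replaces a general chain by the sum of its rank-one ``matrix-entry'' chains sandwiched between $V(\iota_{j})$ and $U(\pi_{j})$, and this does \emph{not} equal $\phi$ at the cochain level precisely because $\phi^{(n)}$ is not multiplicative and therefore cannot split the interior compositions. Overcoming this is exactly the content of Morita invariance of Hochschild--Mitchell cohomology: the inclusion of the one-object subcategory $\{R\}$ into its additive closure $R\fmod=\mathrm{add}(\{R\})$ induces an isomorphism on cohomology with bimodule coefficients. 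I would supply the needed homotopy explicitly by working in the reduced (normalised) complex of the preceding Proposition, where cochains vanish whenever an entry is $\one$; in that model the degenerate contributions drop out and the ``off-diagonal'' matrix directions become cohomologically trivial, which is what makes a simplicial contracting homotopy built from the biproduct data $(\iota_{j},\pi_{j})$ manageable. Alternatively one may invoke the Morita invariance of Hochschild--Mitchell cohomology directly~\cite{Mitchell:1972,Herscovich:2005}. Combining the quasi-isomorphism $\Theta$ with the identification of the complexes then yields $H^{n}(\Hom(U,V))\cong H\!H^{n}\big(R,\Hom(U(R),V(R))\big)$, as claimed; I expect the construction and verification of this homotopy to be the principal obstacle of the proof.
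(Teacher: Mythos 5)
Your proposal follows essentially the same route as the paper: your restriction map $\Theta$ is the paper's $\Pi$, your $\Psi$ is the paper's $I$ (the paper additionally extends $I$ to arbitrary objects $M$ of $R\fmod$ by fixing isomorphisms $g_{M}:M\xrightarrow{\sim}R^{m_{M}}$, since a free finite-rank module need not literally be a power of $R$ with a preferred basis), and the verification that both are chain maps with $\Theta\Psi=\id$ at the cochain level appears in both arguments. The one step you leave open --- that $\Psi\Theta$ is homotopic to the identity --- is exactly where the paper invests its effort: it is packaged as Lemma~\ref{lem:phideterminedbyRendomorphisms} (a closed morphism that vanishes on all chains of endomorphisms of $R$ is exact), applied to the closed morphism $I\Pi\phi-\phi$, and proved in Appendix~\ref{sec:app-proof-of-lemma} by an explicit telescoping homotopy $\phi^{(n-1)}=\sum_{j}(-1)^{n-j+1}\phi^{(n-1)}_{[j]}$ built from the biproduct data $(\iota^{k},\pi^{k})$ --- precisely the kind of simplicial contraction you anticipate, so your diagnosis of the principal obstacle is accurate. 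Note that the paper only needs the homotopy statement at the level of closed cochains (exactness of $I\Pi\phi-\phi$), not a full chain homotopy $\Psi\Theta\simeq\id$ on the whole complex, which slightly lightens the burden. Your fallback of invoking Morita invariance of Hochschild--Mitchell cohomology for $\{R\}\hookrightarrow\mathrm{add}(\{R\})$ would also close the argument, provided you check that the coefficient bimodule $(M,N)\mapsto\Hom(U(M),V(N))$ is the one induced from its restriction to the single object $R$ (which holds because $U,V$ preserve finite biproducts); the paper instead keeps the argument self-contained and elementary.
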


The reason behind this statement is that the cohomology class $[\phi^{(n)}_{UV}]$ of a morphism is completely determined by its action on endomorphisms of the rank-one module $R$. We have the following lemma:
\begin{lem}\label{lem:phideterminedbyRendomorphisms}
Let $\phi^{(n)}_{UV}$ be a closed morphism of degree $n$ between the functors $U$ and $V$ that vanishes when applied to sequences of endomorphisms of $R$. Then $\phi^{(n)}_{UV}$ is exact.
\end{lem}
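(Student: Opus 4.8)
The plan is to exploit the additivity of $U$ and $V$ to reduce the statement to the rank-one module $R$, and then to produce a primitive for $\phi^{(n)}_{UV}$ by a telescoping homotopy. First I would record that a $\mathbb{C}$-linear functor is automatically additive: writing $\iota^{(m)}_j\colon R\to R^m$ and $\pi^{(m)}_j\colon R^m\to R$ for the canonical inclusions and projections, the relations $\pi_i\iota_j=\delta_{ij}\one_R$ and $\sum_j\iota_j\pi_j=\one_{R^m}$ are preserved by $U$ (since $U(\sum_j\iota_j\pi_j)=\sum_j U(\iota_j)U(\pi_j)$), so $U(R^m)\cong U(R)^m$ and likewise $V(R^m)\cong V(R)^m$. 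Consequently every value $\phi^{(n)}_{UV}(f_n,\dotsc,f_1)\in\Hom(U(R^{m_1}),V(R^{m_{n+1}}))$ becomes a matrix with entries in $\Hom(U(R),V(R))$.

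Second, I would decompose morphisms into elementary pieces. Any $f\colon R^a\to R^b$ satisfies $f=\sum_{i,j}\iota^{(b)}_i\,g_{ij}\,\pi^{(a)}_j$ with $g_{ij}=\pi^{(b)}_i f\iota^{(a)}_j\in\Hom(R,R)=R$. Since $\phi^{(n)}_{UV}$ is $\mathbb{C}$-linear in each slot, this expresses $\phi^{(n)}_{UV}(f_n,\dotsc,f_1)$ as a finite sum of its values on sequences in which every entry is an \emph{elementary} homomorphism of the form $\iota_a\,g\,\pi_b$ with $g$ an endomorphism of $R$.

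Third, and this is the core, I would show that such values agree modulo exact terms with the values of $\phi$ on sequences of endomorphisms of $R$, which vanish by hypothesis. Conceptually this is the additive dévissage (Morita invariance) identifying the Hochschild--Mitchell complex of the matrix category $R\fmod$ with the Hochschild complex of $R$ that underlies Proposition~\ref{prop:reltoHochschild}. Concretely I would build a restriction chain map $\rho$ (evaluate $\phi$ only on endomorphisms of $R$) and an extension chain map that strings the matrix entries along paths through the intermediate indices, check $\rho\circ\text{ext}=\id$, and construct a homotopy $h$ with $\id-\text{ext}\circ\rho=d\,h+h\,d$. The homotopy collapses the interior of an elementary sequence one step at a time, using $\pi_b\iota_a=\delta_{ab}\one_R$ to fuse an adjacent projection--inclusion pair together with the corresponding term of the cocycle identity $d\phi=0$, while the outer $U(f_1)$ and $V(f_{n+1})$ terms of the differential absorb the boundary inclusion and projection. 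For a closed $\phi$ with $\rho\phi=0$ this yields $\phi=\text{ext}\,\rho\,\phi+d\,h\phi+h\,d\phi=d\,h\phi$, so $\phi$ is exact.

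The main obstacle is the explicit homotopy $h$ and the accompanying sign bookkeeping in the third step: one must verify that the interior collapses and the two boundary contributions assemble exactly into $\id-\text{ext}\circ\rho$, with signs matching those in~\eqref{defdiff}. One may streamline this by first replacing $\phi$ with its reduced representative $P\phi$, using the projection~\eqref{projectionP}, so that entries equal to $\one$ drop out and the telescoping terminates cleanly. This combinatorial verification, rather than any conceptual difficulty, is where the real work lies.
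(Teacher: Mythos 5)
Your plan coincides with the paper's own proof: the paper fixes isomorphisms $M\cong R^{m_M}$ with $\sum_k\iota^k\pi^k=\one_M$ and $\pi^k\iota^k=\one_R$, and the explicit primitive $\phi^{(n-1)}=\sum_{j=1}^{n}(-1)^{n-j+1}\phi^{(n-1)}_{[j]}$ it constructs in appendix~\ref{sec:app-proof-of-lemma} is exactly your telescoping homotopy $h$ applied to $\phi^{(n)}$, each $\phi^{(n-1)}_{[j]}$ inserting a projection--inclusion resolution of the identity at slot $j$, and the computed identity $d\phi^{(n-1)}=\phi^{(n)}-V(\iota)\,\phi^{(n)}(p,\dotsc,p)\,U(\pi)$ is your relation $\phi=\mathrm{ext}\,\rho\,\phi+d\,h\,\phi$ for closed $\phi$. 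Since the restriction to sequences of $R$-endomorphisms vanishes by hypothesis, both arguments conclude identically; the sign bookkeeping you defer is precisely the multi-page verification the paper carries out.
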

The proof can be found in appendix~\ref{sec:app-proof-of-lemma} on page~\pageref{proof:phideterminedbyRendomorphisms}.
We are now ready to prove proposition~\ref{prop:reltoHochschild}.
\begin{proof}
To prove the statement we first define a map 
\[
\Pi : \Hom_{n} (U,V) \longrightarrow \mathrm{Mult}^{n}_\bC\big(R,\Hom_{R'} (U (R),V (R))\big)
\]
from the space of functor morphisms of degree $n$ to the space of $n$-($\mathbb{C}$-)multilinear homomorphisms from $R^{n}$ to $\Hom_{R'} (U (R),V (R))$ by
\begin{equation}
\Pi (\phi^{(n)}_{UV}) (p_{n},\dotsc ,p_{1}) := \phi^{(n)}_{UV} (R\xleftarrow{p_{n}} R \xleftarrow{p_{n-1}}\dotsb \xleftarrow{p_{1}}R) \in \Hom_{R'} (U (R),V (R))\ .
\end{equation}
Here we use the notation $R\xleftarrow{p}R$ for the $R$-module endomorphism $p\cdot \one_R$ (multiplication by $p\in R$).

On elements $h^{(n)}\in \mathrm{Mult}_{\mathbb{C}}^n\big(R,\Hom_{R'} (U (R),V (R))\big)$ the Hochschild differential acts as
\begin{align}
d_{H\!H} h^{(n)} (p_{n+1},\dotsc ,p_{1}) &= V (p_{n+1})h^{(n)} (p_{n},\dotsc ,p_{1}) - h^{(n)} (p_{n+1}p_{n},\dotsc ,p_{1})\nonumber \\
&\quad  + \dotsb + (-1)^{n+1}h^{(n)} (p_{n+1},\dotsc ,p_{2})\, U (p_{1}) \ ,
\end{align}
and we directly observe that
\begin{equation}
d_{H\!H}\Pi = \Pi d \ .
\end{equation}
Hence, $\Pi$ descends to a well-defined map on cohomology that we also denote by $\Pi$. We now construct an inverse map
\begin{equation}
I : H\!H^{n} \big(R,\Hom_{R'} (U (R),V (R))\big) \longrightarrow H^{n} (\Hom (U,V)) \ .
\end{equation}
For a representative $h^{(n)}$ of a cohomology class in $H\!H^{n} \big(R,\Hom_{R'} (U (R),V (R))\big)$, we define a representative $Ih^{(n)}$ of a class in $\Hom_{n}(U,V)$ by the following steps: First we define $Ih^{(n)}$ on a chain of morphisms involving only the module $R$ by
\begin{equation}
Ih^{(n)} (R\xleftarrow{p_{n}} R \xleftarrow{p_{n-1}}\dotsb \xleftarrow{p_{1}}R) := h^{(n)} (p_{n},\dotsc ,p_{1}) \ .
\end{equation}
This definition guarantees that $\Pi I=\mathrm{id}$. We now have to declare how $Ih^{(n)}$ acts on homomorphisms between general modules. Let us start with the modules $R^{m}=R\times \dotsb \times R$. We denote the projection to the $k^{\text{th}}$ factor by $\pi_{k}$, and the embedding of $R$
as the $k^{\text{th}}$ factor in $R^{m}$ as $\iota_{k}$, such that
\begin{equation}
\sum_{k=1}^{m} \iota_{k}\, \pi_{k} = \one_{R^{m}} \ .
\end{equation} 
Then we define
\begin{multline}\label{defIfreemodules}
Ih^{(n)} (R^{m_{n+1}}\xleftarrow{f_{n}}R^{m_{n}}\xleftarrow{f_{n-1}}\dotsb
\xleftarrow{f_{1}}R^{m_{1}}) \\
:= \sum_{k_{1}=1}^{m_{1}}\dotsb
\sum_{k_{n+1}=1}^{m_{n+1}}
V (\iota_{k_{n+1}}) 
h^{(n)} (\pi_{k_{n+1}}\, f_{n}\,\iota_{k_{n}},\dotsc
,\pi_{k_{2}}\, f_{1}\, \iota_{k_{1}})\, U
(\pi_{k_{1}}) \ .
\end{multline}
Because of linearity this is compatible with the differentials (in the following a summation over the labels $k_{i}$ that appear in an expression is understood):
\begin{align}
&I\,d_{H\!H}h^{(n)} (R^{m_{n+2}}\xleftarrow{f_{n+1}}R^{m_{n+1}}\xleftarrow{f_{n}}\dotsb
\xleftarrow{f_{1}}R^{m_{1}})\\
 &\quad = 
V (\iota_{k_{n+2}}) \, 
d_{H\!H}h^{(n)} (\pi_{k_{n+2}}\,  f_{n+1}\,  \iota_{k_{n+1}},\dotsc
,\pi_{k_{2}}\,  f_{1}\,  \iota_{k_{1}})\,  U
(\pi_{k_{1}}) \\
&\quad = 
V (f_{n+1}\,  \iota_{k_{n+1}}) \,  h^{(n)} (\pi_{k_{n+1}}\,  f_{n}\,  \iota_{k_{n}},\dotsc
,\pi_{k_{2}}\,  f_{1}\,  \iota_{k_{1}}) \nonumber\\
&\qquad -  V (\iota_{k_{n+2}}) \,  h^{(n)} (\pi_{k_{n+2}}\,  f_{n+1}\,  f_{n}\,  \iota_{k_{n}},\dotsc
,\pi_{k_{2}}\,  f_{1}\,  \iota_{k_{1}})\,  U
(\pi_{k_{1}})  \nonumber\\
&\qquad + \dotsb + (-1)^{n+1}  V (\iota_{k_{n+2}}) \,  h^{(n)} (\pi_{k_{n+2}}\,  f_{n+1}\,  \iota_{k_{n+1}},\dotsc
,\pi_{k_{3}}\,  f_{2}\,  \iota_{k_{2}})\,  U
(\pi_{k_{2}}\,  f_{1})\\
&\quad = V (f_{n+1}) \,  I\,h^{(n)} ( R^{m_{n+1}}\xleftarrow{f_{n}}R^{m_{n}}\xleftarrow{f_{n-1}}\dotsb
\xleftarrow{f_{1}}R^{m_{1}}) \nonumber\\
&\qquad +\dotsb + (-1)^{n} I\,h^{(n)} ( R^{m_{n+2}}\xleftarrow{f_{n+1}}R^{m_{n+1}}\xleftarrow{f_{n}}\dotsb
\xleftarrow{f_{2}}R^{m_{2}}) \, U (f_{1}) \\
& \quad = d\,Ih^{(n)} (R^{m_{n+2}}\xleftarrow{f_{n+1}}R^{m_{n+1}}\xleftarrow{f_{n}}\dotsb
\xleftarrow{f_{1}}R^{m_{1}}) \ .
\end{align}
To define $Ih^{(n)}$ in general, we fix for every free finite-rank module $M$ an $R$-module isomorphism
\begin{equation}
g_{M}:M \xrightarrow{\ \sim\ } R^{m_{M}} \ .
\end{equation}
Then we set
\begin{multline}
Ih^{(n)} (M_{n+1}\xleftarrow{f_{n}}M_{n}\xleftarrow{f_{n-1}}\dotsb
\xleftarrow{f_{1}}M_{1}) \\
:= V (g_{n+1}^{-1}) 
Ih^{(n)} (g_{n+1}\, f_{n}\,g_{n}^{-1},\dotsc
,g_{2}\, f_{1}\, g_{1}^{-1}) U
(g_{1}) \ ,
\end{multline}
where $g_{i}=g_{M_{i}}$ and $m_{i}=m_{M_{i}}$. The $Ih^{(n)}$ appearing on the right-hand side only acts on homomorphisms between modules of the form $R^{m_{i}}$ for which it has been defined before in~\eqref{defIfreemodules}. As can be straightforwardly checked, this definition is compatible with the differential,
\begin{equation}
d\,I= I \,d_{H\! H}\ .
\end{equation}
The definition seems to depend on the precise choice of the isomorphisms $g_{i}$, but as we will see shortly, the cohomology class of $Ih^{(n)}$ does not depend on it.

We conclude that $I\,\Pi$ is the identity in cohomology: Let $\phi^{(n)}_{UV}$ be a closed morphism. Then $I\Pi \phi^{(n)}_{UV}$ is also closed and by construction coincides with $\phi^{(n)}_{UV}$ on morphisms between trivial modules $R$. According to lemma~\ref{lem:phideterminedbyRendomorphisms} the difference of $I\Pi \phi^{(n)}_{UV}$ and $\phi^{(n)}_{UV}$ is exact, hence $I\,\Pi$ is the identity in cohomology.

This proves the proposition. On cohomology spaces, $I$ is fixed by being the inverse of $\Pi$, therefore the choices of isomorphisms $g_{i}$ that enter the construction of $I$ do not matter in cohomology.  
\end{proof}

\subsection{Structure of the morphism spaces}\label{sec:structuremorphismspaces}
We now analyze the structure of the cohomology
$H^{\bullet}(\Hom(U,V))$ more closely. For degree-$0$ morphisms $\phi^{(0)}$,
the only condition is the closure with respect to $d$, so
$H^{0}(\Hom(U,V))$ consists of all natural transformations
of $U$ and $V$. As discussed in the previous section, we can restrict to the case of the
rank-one module $R$. The closure condition reads
\begin{equation}\label{closuredegreezero}
\phi^{(0)}(R)\, U (p) = V (p) \, \phi^{(0)} (R) \ ,
\end{equation}
where $p$ is any polynomial (any element in $R$) and is viewed as an endomorphism $p\cdot \one_R$ of $R$. Because $U$ and $V$
are $\mathbb{C}$-linear functors, it is enough to require this
condition for $p$ being a linear monomial, 
$p\in\{x_{1},\dotsc ,x_{d}\}$. Therefore, the degree-$0$ cohomology is given by
\begin{equation}
H^{0} (\Hom (U,V)) \cong \{f \in \Hom_{R'}(U(R),V(R)):
\forall i\in \{1,\dotsc ,d \}\  f\, U(x_{i}) = V(x_{i})\, f  \} \ .
\end{equation}
\begin{ex}\label{ex:degree0forinvertiblefunctors}
    For $R=\mathbb{C}[x]$, consider the functor $G_{\xi}$ from Example~\ref{ex:invertiblefunctors} with $G_\xi(R)=R$, and which on $x$ (seen as an endomorphism of $R$) acts as $G_\xi(x)=\xi \,x$ ($\xi\not= 0$). Because $f\,G_\xi(x)=G_\xi(x)\,f$ is satisfied for all $f\in \Hom_R(R,R)$, we have
    \begin{equation}
        H^{0} (\Hom (G_\xi,G_\xi)) \cong \Hom_R(R,R) \cong R \,.
    \end{equation}
    If we consider functors $G_{\xi_1}$, $G_{\xi_2}$ for $\xi_1\not=\xi_2$, we cannot satisfy $f\,G_{\xi_1}(x)=G_{\xi_2}(x)\,f$ except for $f=0$, hence
    \begin{equation}
        H^{0} (\Hom (G_{\xi_1},G_{\xi_2})) \cong \{0\}\,.
    \end{equation}
\end{ex}
Alternatively we can identify the degree-$0$ morphisms with bimodule
homomorphisms from $R$ (viewed as the bimodule $_{R}R_{R}$) to $\Hom_{R'}(U(R),V(R))$:
\begin{lem}
\begin{equation}
H^{0}(\Hom(U,V)) \cong \Hom_{R,R}\!\big({}_{R}R_{R},\Hom_{R'}(U(R),V(R))
\big) \ .
\end{equation}
\end{lem}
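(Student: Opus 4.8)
The plan is to identify both sides of the claimed isomorphism with one and the same subspace of $\Hom(U(R),V(R))$. The discussion preceding the lemma already provides the identification
\[
H^{0}(\Hom(U,V)) \cong \{f \in \Hom(U(R),V(R)) : f\,U(x_{i}) = V(x_{i})\,f \ \text{for all } i\}\ ,
\]
so the task reduces to showing that $\Hom\big({}_{R}R_{R},\Hom(U(R),V(R))\big)$ is naturally isomorphic to this same set, via evaluation of a bimodule homomorphism at $1\in R$.

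First I would equip $\Hom(U(R),V(R))$ with its $R$-bimodule structure. The key observation is that restricting the functor $U$ to $\mathrm{End}_{R}(R)\cong R$ gives a unital $\mathbb{C}$-algebra homomorphism $R\to\mathrm{End}_{R'}(U(R))$, $p\mapsto U(p)$: it preserves composition because $U$ is a functor and multiplication by $pq$ on $R$ is the composite of the multiplications by $p$ and by $q$, it preserves sums and $\mathbb{C}$-scalars by $\mathbb{C}$-linearity, and it sends $\one_{R}$ to $\one_{U(R)}$. The same holds for $V$. I then define the right action by $f\cdot p:=f\,U(p)$ and the left action by $p\cdot f:=V(p)\,f$. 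These two actions commute, since one acts through the domain and the other through the codomain, so $\Hom(U(R),V(R))$ is a genuine $R$-bimodule.

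Next I would invoke the standard fact that a bimodule homomorphism $\Phi$ out of ${}_{R}R_{R}$ is determined by $\Phi(1)$, and that $\Phi(1)$ may be any element $f$ with $p\cdot f=f\cdot p$ for all $p\in R$ (because $1$ generates $R$ as an $R$-bimodule and $p\cdot 1=1\cdot p$). Translating this centrality condition through the bimodule structure yields $V(p)\,f=f\,U(p)$ for all $p\in R$, so evaluation at $1$ is a bijection from $\Hom\big({}_{R}R_{R},\Hom(U(R),V(R))\big)$ onto $\{f : V(p)\,f=f\,U(p)\ \text{for all } p\in R\}$. The only point requiring care — not a real obstacle — is that demanding this relation for all $p$ is equivalent to demanding it only on the generators $x_{i}$: since the $x_{i}$ generate $R$ as a $\mathbb{C}$-algebra and both $p\mapsto U(p)$ and $p\mapsto V(p)$ are algebra homomorphisms, the relation $f\,U(x_{i})=V(x_{i})\,f$ propagates to every monomial by induction and then to every polynomial by $\mathbb{C}$-linearity. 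Combining this with the identification recalled above gives the stated isomorphism.
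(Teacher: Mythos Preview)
Your proposal is correct and follows essentially the same approach as the paper's proof: both identify a bimodule homomorphism $\eta$ with its value $\eta(1)$, observe that the bimodule condition forces $V(p)\,\eta(1)=\eta(1)\,U(p)$, and conversely reconstruct $\eta$ from such an element via $\eta(p)=V(p)\,\eta(1)$. Your version is slightly more explicit in setting up the bimodule structure on $\Hom(U(R),V(R))$ and in reducing the intertwining condition to the generators $x_i$, but the underlying argument is the same.
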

\begin{proof}
A bimodule homomorphism $\eta\in
\Hom_{R,R}\!\big({}_{R}R_{R},\Hom_{R'}(U(R),V(R))\big)$ satisfies for all $p,r\in R$
\begin{equation}
\eta(r\cdot p) = V(r) \, \eta(p) \ ,\quad \eta (p \cdot r) =
\eta(p) \, U (r) \ .
\end{equation}
Hence
\begin{equation}
\eta (1)\, U (p) = \eta (1\cdot p)= \eta (p) = \eta (p\cdot 1) = V
(p)\, \eta (1) \ ,
\end{equation}
and therefore $\eta(1)\in \Hom_{R'}(U(R),V(R))$ obeys the closure
condition~\eqref{closuredegreezero}, and it can be identified with
$\phi^{(0)}(R)$. Conversely, given a $\phi^{(0)}(R)$
satisfying~\eqref{closuredegreezero}, we can define a bimodule
homomorphism $\eta$ by setting $\eta(p):=V(p)\, \phi^{(0)}(R)$.
\end{proof}

Let us now look at degree-$1$ morphisms. The closure condition $d\phi^{(1)}=0$ for a degree-$1$ morphism can be written as 
\begin{equation}\label{productpropertydegree1}
\phi^{(1)} (f_{2}\, f_{1}) = V (f_{2})\, \phi^{(1)} (f_{1}) + \phi^{(1)} (f_{2}) \, U (f_{1})
\end{equation}
for all homomorphisms $f_{1}\in \Hom_R (M_{1},M_{2})$ and $f_{2}\in \Hom_R (M_{2},M_{3})$. As a direct consequence, a closed degree-$1$ morphism satisfies the following factorization property:
\begin{lem}
Let $\phi^{(1)}$ be a morphism between $U$ and $V$. It is closed if and only if for any family of homomorphisms $f_{i}\in \Hom_R (M_{i},M_{i+1})$, $i=1,\dotsc ,p\geq 1$, we have
\begin{equation}
\phi^{(1)} (f_{p} \dotsb f_{1}) = \sum_{i=1}^{p} V (f_{p}\dotsb  f_{i+1})\, \phi^{(1)} (f_{i})\, U (f_{i-1} \dotsb  f_{1}) \ .
\end{equation}
\end{lem}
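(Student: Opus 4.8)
The plan is to derive both directions from the degree-$1$ closure condition \eqref{productpropertydegree1}, treating the factorization formula simply as its iterate. The backward direction is immediate: specializing the claimed identity to $p=2$ gives
\[
\phi^{(1)}(f_2\, f_1) = V(f_2)\,\phi^{(1)}(f_1) + \phi^{(1)}(f_2)\,U(f_1),
\]
because the boundary factors $V(f_2\cdots f_3)$ and $U(f_0\cdots f_1)$ are empty products equal to the identity. By \eqref{productpropertydegree1} this is exactly the statement $d\phi^{(1)}=0$, so the formula forces $\phi^{(1)}$ to be closed.

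For the forward direction I would argue by induction on $p$. The case $p=1$ is trivial: both boundary products $V(f_1\cdots f_2)$ and $U(f_0\cdots f_1)$ are empty, and the identity reduces to $\phi^{(1)}(f_1)=\phi^{(1)}(f_1)$. For the inductive step, assuming $\phi^{(1)}$ closed, I would peel off the leftmost factor by writing $f_p\cdots f_1 = f_p\,(f_{p-1}\cdots f_1)$ and applying \eqref{productpropertydegree1} to obtain
\[
\phi^{(1)}(f_p\cdots f_1) = V(f_p)\,\phi^{(1)}(f_{p-1}\cdots f_1) + \phi^{(1)}(f_p)\,U(f_{p-1}\cdots f_1)\ .
\]
Then I would substitute the induction hypothesis for the $(p-1)$-fold product $f_{p-1}\cdots f_1$ into the first summand and use functoriality of $V$ to merge $V(f_p)\,V(f_{p-1}\cdots f_{i+1}) = V(f_p\cdots f_{i+1})$, turning that summand into $\sum_{i=1}^{p-1} V(f_p\cdots f_{i+1})\,\phi^{(1)}(f_i)\,U(f_{i-1}\cdots f_1)$. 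The leftover term $\phi^{(1)}(f_p)\,U(f_{p-1}\cdots f_1)$ is precisely the $i=p$ contribution, since there the prefactor $V(f_p\cdots f_{p+1})$ is an empty product equal to $\id$. The two pieces therefore assemble into the full sum $\sum_{i=1}^{p}$, completing the induction.

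The argument carries no genuine obstacle; the only points requiring care are the bookkeeping of the empty products at the two ends of the sum, namely $V(f_p\cdots f_{i+1})$ at $i=p$ and $U(f_{i-1}\cdots f_1)$ at $i=1$, together with the single appeal to functoriality of $V$ that collapses $V(f_p)\,V(f_{p-1}\cdots f_{i+1})$. Conceptually, since \eqref{productpropertydegree1} is itself just the $p=2$ instance of the claim, the lemma is best understood as the assertion that the two-step Leibniz rule for $\phi^{(1)}$ propagates unchanged to chains of arbitrary length.
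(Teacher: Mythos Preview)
Your proof is correct. The paper does not supply an explicit proof of this lemma; it simply states the result as ``a direct consequence'' of the closure condition~\eqref{productpropertydegree1}, so your induction argument is exactly the natural way to fill in the omitted details and matches the intended reasoning.
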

We can again restrict ourselves to the case where all modules involved are just given by
$R$, and the homomorphisms are given by polynomials $p,q\in R$. From the above lemma, we conclude that a closed morphism $\phi^{(1)}$ is completely determined by specifying $\phi^{(1)}$ on elements that generate $R$. For $R=\mathbb{C}[x_{1},\dotsc ,x_{d}]$ this means that $\phi^{(1)}$ is determined by specifying $d$ elements $e_{i}\in \Hom_{R'}(U (R),V (R))$ such that
\begin{equation}
\forall i=1,\dotsc ,d \qquad \phi^{(1)} (x_{i}) = e_{i} \ . 
\end{equation}
On the other hand, given any such $d$ elements $e_{i}$ we can try to define a morphism $\phi^{(1)}$ on monomials as
\begin{equation}
\phi^{(1)} (x_{i_{k}}\cdot \dotsb \cdot x_{i_{1}}) = \sum_{j=1}^{k} V (x_{i_{k}}\cdot \dotsb \cdot x_{i_{j+1}})\,  e_{i_j} \, U (x_{i_{j-1}}\cdot \dotsb \cdot x_{i_{1}}) \ .
\end{equation}
This is only well-defined if the expression on the right-hand side is invariant under reordering the $x_{i}$ which is equivalent to the condition that
\begin{equation}
\forall i,j =1,\dotsc ,d \qquad V (x_{i}) \, e_{j} + e_{i} \, U (x_{j}) = 
V (x_{j}) \, e_{i} + e_{j} \, U (x_{i}) \ . 
\end{equation}
$B=\Hom_{R'}(U (R),V (R))$ is an $(R,R)$-bimodule, where the multiplication by $r\in R$ from the left and right on $e\in B$ is given by\footnote{It also naturally carries the structure of an $R'$-module.}
\begin{equation}
r\cdot e = V (r)\, e \ ,\quad e \cdot r = e \, U (r) \ .
\end{equation}
The closed degree-$1$ morphisms can therefore be characterized as
\begin{equation}
\{ \phi \in \Hom_{1} (U,V)\vert\, d\phi =0 \} \cong \{(e_{i})\in B^{d} \vert\,\forall i,j \quad x_{i}\cdot e_{j} - e_{j}\cdot x_{i} = x_{j}\cdot e_{i}  - e_{i}\cdot x_{j}  \}\ .
\end{equation}
An exact degree-$1$ morphism $\phi^{(1)}=d\phi^{(0)}$ applied to polynomials (seen as homomorphisms from $R$ to $R$) acts as
\begin{equation}
\phi^{(1)} (x_{i}) = d\phi^{(0)} (x_{i}) = V (x_{i})\, \phi^{(0)} (R) - \phi^{(0)} (R)\, U (x_{i})
\end{equation}
with some $\phi^{(0)} (R)\in B$.
The degree-$1$ cohomology is therefore given as
\begin{equation}
H^{1} (\Hom (U,V)) \cong \frac{\{(e_{i})\in B^{d} \vert \,\forall i,j \quad x_{i} \cdot e_{j} - e_{j} \cdot x_{i} = 
x_{j} \cdot e_{i}  - e_{i} \cdot x_{j}  \}}{\{(e_{i})\in B^{d}\vert\, \exists e\in B:\ e_{i} = x_{i}\cdot e-e\cdot x_{i}\}} \ .
\end{equation}
This becomes particularly simple in the case of one variable, $R=\mathbb{C}[x]$. Then
\begin{equation}
H^{1} (\Hom (U,V)) \cong \frac{\Hom_{R'} (U (R),V (R))}{\{ x\cdot e -e\cdot x\, \vert \,e\in \Hom_{R'} ( U (R),V (R)) \}} \ .
\end{equation}
\begin{ex}
    Consider again the functors $G_\xi$ from Example~\ref{ex:invertiblefunctors} for which we discussed the degree-0 morphisms in Example~\ref{ex:degree0forinvertiblefunctors}. For $U=V=G_\xi$, we then have 
    \begin{equation}
        x\cdot e - e\cdot x = \xi x\,e - e\,\xi x=0 \quad \text{for} \ e\in \Hom_R(R,R)\,,
    \end{equation}
    and hence
    \begin{equation}
        H^{1} (\Hom (G_\xi,G_\xi)) \cong \Hom_{R} (R,R)\cong R\,.
    \end{equation}
    For $U=G_{\xi_1}$, $V=G_{\xi_2}$ and $\xi_1\not= \xi_2$, we instead have
    \begin{equation}
        x\cdot e-e\cdot x = (\xi_1-\xi_2)x\,e \quad \text{for} \ e\in \Hom_R(R,R)\cong R\,,
    \end{equation}
    and we find
    \begin{equation}
        H^{1} (\Hom (G_{\xi_1},G_{\xi_2})) \cong \frac{R}{xR} \cong \mathbb{C}\,.
    \end{equation}
    The space of degree-1 morphisms in this case is one-dimensional as a complex vector space. A basis element is given by the degree-1 morphism $\phi^{(1)}_{G_{\xi_{1}}G_{\xi_{2}}}$ of Example~\ref{ex:degree1morphism}.
\end{ex}
The higher cohomology groups for the one-variable case are trivial. In the general case, we have the following result:
\begin{prop}\label{prop:highercohomologytrivial}
Let $R=\mathbb{C}[x_{1},\dotsc ,x_{d}]$ be a polynomial ring in $d$ variables. Then
\begin{equation}
\text{for}\ n>d:\ H^{n} (\Hom (U,V))\cong 0\ . 
\end{equation}
\end{prop}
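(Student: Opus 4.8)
The plan is to translate the statement into a vanishing result for Hochschild cohomology and then invoke the finite homological dimension of a polynomial ring over its enveloping algebra. By Proposition~\ref{prop:reltoHochschild} we have a $\mathbb{C}$-linear isomorphism $H^{n}(\Hom(U,V)) \cong H\!H^{n}\big(R,\Hom(U(R),V(R))\big)$, so with $M:=\Hom(U(R),V(R))$ regarded as the $R$-bimodule whose left and right actions are $r\cdot e = V(r)\,e$ and $e\cdot r = e\,U(r)$, it suffices to show that $H\!H^{n}(R,M)=0$ for every $n>d$ and \emph{every} $R$-bimodule $M$ (no finiteness hypothesis on $M$ is needed).

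The key is that Hochschild cohomology is an $\mathrm{Ext}$-functor over the enveloping algebra and that $R$ admits a short projective resolution there. Writing $R^{e}=R\otimes_{\mathbb{C}}R$, an $R$-bimodule is the same as an $R^{e}$-module and $H\!H^{n}(R,M)\cong \mathrm{Ext}^{n}_{R^{e}}(R,M)$, where $R$ is viewed as an $R^{e}$-module via the multiplication map $\mu:R^{e}\to R$. I would identify $R^{e}\cong \mathbb{C}[x_{1},\dotsc ,x_{d},y_{1},\dotsc ,y_{d}]$ (with $x_{i}=x_{i}\otimes 1$ and $y_{i}=1\otimes x_{i}$), under which $\ker\mu$ is generated by the elements $z_{i}:=x_{i}\otimes 1 - 1\otimes x_{i}$. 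These form a regular sequence in $R^{e}$, so the Koszul complex on $z_{1},\dotsc ,z_{d}$ provides a free resolution of $R$ over $R^{e}$ of length exactly $d$.

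Since $\mathrm{Ext}^{n}_{R^{e}}(R,M)$ can be computed from any projective resolution of $R$, applying $\Hom_{R^{e}}(-,M)$ to this length-$d$ Koszul resolution yields a cochain complex concentrated in degrees $0,\dotsc ,d$. Its cohomology therefore vanishes in all degrees $n>d$, which is the claim.

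The only genuine verification is that $z_{1},\dotsc ,z_{d}$ is a regular sequence generating $\ker\mu$; this is the heart of the argument but is routine after the linear change of variables $u_{i}=x_{i}-y_{i}$, $v_{i}=y_{i}$, which turns $R^{e}$ into $\mathbb{C}[u_{1},\dotsc ,u_{d},v_{1},\dotsc ,v_{d}]$ with $\mu$ the quotient by the evidently regular sequence $u_{1},\dotsc ,u_{d}$. I expect no real obstacle beyond this; in particular the argument is uniform in $M$, because the resolution of $R$ does not depend on $M$, so the bound $n>d$ applies to the arbitrary bimodule $\Hom(U(R),V(R))$ arising here. A more self-contained alternative would bypass $\mathrm{Ext}$ and construct the comparison map between the bar and Koszul complexes explicitly, producing a contracting homotopy on Hochschild cochains in degrees above $d$; this is more laborious but avoids citing the enveloping-algebra formalism.
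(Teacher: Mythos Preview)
Your proposal is correct and follows essentially the same approach as the paper's proof: reduce to Hochschild cohomology via Proposition~\ref{prop:reltoHochschild}, identify $H\!H^{n}(R,M)$ with $\mathrm{Ext}^{n}_{R^{e}}(R,M)$, and use the length-$d$ Koszul resolution of $R$ over $R^{e}$ coming from the regular sequence $x_{i}\otimes 1-1\otimes x_{i}$ to conclude vanishing for $n>d$. Your write-up is in fact slightly more detailed than the paper's, as you include the change of variables verifying regularity and make explicit that the bound is uniform in $M$.
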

This follows from a result on the corresponding Hochschild cohomology; for completeness a proof is given in appendix~\ref{sec:app-proof-of-prop} on page~\pageref{proof:highercohomologytrivial}.

The morphisms become particularly simple for the identity functors $U=V=\id$:
\begin{prop}\label{prop:spectrumid}
Let $R=\mathbb{C}[x_{1},\dotsc ,x_{d}]$ be a polynomial ring in $d$ variables. Then the space of morphisms between the identity functors is given by the exterior algebra of $R^{d}$,
\begin{equation}
 H^{n} (\Hom (\id ,\id ))\cong \mathsf{\Lambda}^{n}R^{d}\ .
\end{equation}
\end{prop}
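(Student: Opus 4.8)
The plan is to reduce the statement to a standard Hochschild cohomology computation and then evaluate it through an explicit free resolution. First I would apply Proposition~\ref{prop:reltoHochschild} with $U=V=\id$. Since $\id(R)=R$, the relevant coefficient bimodule is $\Hom(R,R)$, the module of $R$-module endomorphisms of the rank-one module; under the identification $\Hom(R,R)\cong R$ via $f\mapsto f(1)$, and using that $R$ is commutative, the left and right actions $r\cdot e = \id(r)\, e$ and $e\cdot r = e\, \id(r)$ both become ordinary multiplication in $R$. Hence the coefficient bimodule is the regular bimodule ${}_{R}R_{R}$, and Proposition~\ref{prop:reltoHochschild} yields
\begin{equation}
H^{n}(\Hom(\id,\id)) \cong H\!H^{n}(R,R)\ .
\end{equation}

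Next I would compute $H\!H^{n}(R,R)$ via the Hochschild--Kostant--Rosenberg theorem, which for the smooth algebra $R=\mathbb{C}[x_{1},\dots,x_{d}]$ identifies $H\!H^{n}(R,R)$ with the module $\mathsf{\Lambda}^{n}_{R}\,\mathrm{Der}_{\mathbb{C}}(R)$ of polyvector fields. As $\mathrm{Der}_{\mathbb{C}}(R)$ is free of rank $d$ with basis $\partial_{x_{1}},\dots,\partial_{x_{d}}$, this is exactly $\mathsf{\Lambda}^{n}R^{d}$, which is the claim. To keep the argument self-contained I would exhibit the isomorphism through the Koszul resolution: writing $R^{e}=R\otimes_{\mathbb{C}}R\cong\mathbb{C}[x_{1},\dots,x_{d},x_{1}',\dots,x_{d}']$, the diagonal bimodule $R$ is resolved by the Koszul complex on the sequence $(x_{i}\otimes 1 - 1\otimes x_{i})_{i=1}^{d}$, with $p$-th term $R^{e}\otimes_{\mathbb{C}}\mathsf{\Lambda}^{p}\mathbb{C}^{d}$.

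Applying $\Hom_{R^{e}}(-,R)$ to this resolution gives a cochain complex whose $n$-th term is $\Hom_{R^{e}}(R^{e}\otimes_{\mathbb{C}}\mathsf{\Lambda}^{n}\mathbb{C}^{d},R)\cong\mathsf{\Lambda}^{n}R^{d}$, and whose differentials are multiplication by the images of $x_{i}\otimes 1 - 1\otimes x_{i}$ acting on the bimodule $R$. The decisive observation is that on the symmetric bimodule ${}_{R}R_{R}$ both $x_{i}\otimes 1$ and $1\otimes x_{i}$ act as multiplication by $x_{i}$, so each generator of the sequence acts as zero; consequently all differentials vanish and the cohomology equals the complex itself, $H\!H^{n}(R,R)\cong\mathsf{\Lambda}^{n}R^{d}$.

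I expect the only point genuinely needing care to be the verification that the Koszul complex does resolve the diagonal bimodule, i.e.\ that $(x_{i}\otimes 1 - 1\otimes x_{i})_{i}$ is a regular sequence in $R^{e}$ with quotient $R^{e}/(x_{i}\otimes 1 - 1\otimes x_{i})\cong R$ of codimension $d$; this is the standard input underlying HKR. Everything else --- the freeness of $\mathrm{Der}_{\mathbb{C}}(R)$, the identification of the $\Hom_{R^{e}}$-terms, and the vanishing of the differentials --- is routine once commutativity of $R$ is used. As consistency checks, this recovers the low-degree computations above, $H^{0}\cong R=\mathsf{\Lambda}^{0}R^{d}$ and $H^{1}\cong R^{d}=\mathsf{\Lambda}^{1}R^{d}$ (the defining relations and exact terms being trivial since $R$ is commutative), and it is compatible with Proposition~\ref{prop:highercohomologytrivial}, since $\mathsf{\Lambda}^{n}R^{d}=0$ for $n>d$.
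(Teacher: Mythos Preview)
Your proposal is correct and follows essentially the same route as the paper: the paper's proof is a one-line appeal to the Hochschild cohomology of polynomial rings via Proposition~\ref{prop:reltoHochschild}, and the computation $H\!H^{n}(R,R)\cong\mathsf{\Lambda}^{n}R^{d}$ is carried out in the appendix precisely through the Koszul resolution on the regular sequence $(x_{i}\otimes 1-1\otimes x_{i})$, with the observation that all induced differentials on $\Hom_{R^{e}}(-,R)$ vanish. Your write-up simply makes the intermediate identifications (the symmetric bimodule structure on $\Hom(R,R)$, the HKR interpretation) more explicit than the paper does.
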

\begin{proof}
The statement follows from the well-known structure of the Hochschild cohomology for polynomial algebras (see~\eqref{HHcohomologyrings}).
\end{proof}

\subsection{Cones}\label{sec:Cones}
Morphisms $\phi^{(1)}_{UV}$ of degree $1$ are special because they can be used to define a new object $\mathcal{C} (\phi^{(1)}_{UV})$ in analogy to the usual cone construction.\footnote{Usually one considers degree-$0$ morphisms from objects $A$ to $B $ for the cone construction and then shifts it to a degree-$1$ morphism between $A[1]$ and $B$. As we do not have a shift operation in our case, we directly work with degree-$1$ morphisms.} Physically, this corresponds to turning on an interface field in a configuration of two interfaces to produce a new interface (configuration).
\begin{Def}
Let $\phi^{(1)}_{UV}$ be a closed morphism of degree~1 between the functors $U$ and $V$. We define a $\mathbb{C}$-linear functor $\mathcal{C} (\phi^{(1)}_{UV})$ from $R\fmod$ to $R'\fmod$ by
\begin{equation}
\mathcal{C} (\phi^{(1)}_{UV}) (M) = U (M) \oplus V (M)
\end{equation}
and  
\begin{equation}\label{cone}
\mathcal{C}(\phi^{(1)}_{UV}) (f) = \begin{pmatrix}
U (f) & 0 \\
\phi^{(1)}_{UV} (f) & V (f)
\end{pmatrix} \ ,
\end{equation}
and call it the cone of $\phi^{(1)}_{UV}$. 
\end{Def}
One can straightforwardly verify that $\mathcal{C}(\phi^{(1)}_{UV})$ is indeed a functor. The functorial property follows from the closure of $\phi^{(1)}_{UV}$:
\begin{align}
\mathcal{C}(\phi^{(1)}_{UV}) (f) \,\mathcal{C}(\phi^{(1)}_{UV}) (g) &=
\begin{pmatrix}
U (f) & 0\\
\phi^{(1)}_{UV} (f) & V (f)
\end{pmatrix}
\begin{pmatrix}
U (g) & 0\\
\phi^{(1)}_{UV} (g) & V (g)
\end{pmatrix}\\
&= \begin{pmatrix}
U (f)U (g) & 0\\
\phi^{(1)}_{UV} (f)U (g) +V (f)\phi^{(1)}_{UV} (g) & V (f)V (g)
\end{pmatrix}\\
&= \begin{pmatrix}
U (fg) & 0\\
\phi^{(1)}_{UV} (fg) & V (fg)
\end{pmatrix}\\
&=\mathcal{C} (\phi^{(1)}_{UV}) (fg)\ .
\end{align}
The cone construction will be useful to construct new fusion functors from known ones.
\begin{ex}\label{ex:cone}
Let $R'=R=\mathbb{C}[x]$. Furthermore, let $\phi^{(1)}_{G_{\xi_{1}}G_{\xi_{2}}}$ be the morphism between the invertible functors  $G_{\xi_{1}}$ and $G_{\xi_{2}}$ from Example~\ref{ex:degree1morphism}. Then the cone 
\begin{equation}\label{coneexample}
\mathcal{C} (\phi^{(1)}_{G_{\xi_{1}}G_{\xi_{2}}}) = \begin{pmatrix}
G_{\xi_{1}} & 0 \\
\frac{1}{x} \big(G_{\xi_{1}}-G_{\xi_{2}} \big) & G_{\xi_{2}}
\end{pmatrix}
\end{equation}
provides our first example of a functor that is not the extension of scalars of a ring homomorphism $\omega$.
\end{ex}

\subsection{Functors on graded modules}

Before we close this section, we need to consider functors acting on $\mathbb{Z}_{2}$-graded modules in preparing for the introduction of fusion functors in Section~\ref{sec:fusionfunctors}. Fusion functors act on matrix factorizations, which can be viewed as homomorphisms on a graded vector space where the grading $\sigma=(-1)^F$ can be understood in terms of the fermion number.

A $\mathbb{Z}_{2}$-graded $R$-module is a pair $(M,\sigma)$ of an $R$-module $M$ together with an $R$-linear automorphism $\sigma :M\to M$ that is an involution, $\sigma^{2}=\one$. We denote the category of $\mathbb{Z}_{2}$-graded free finite-rank $R$-modules by $R\fmod^{\mathbb{Z}_{2}}$ where the morphism spaces also have $\mathbb{Z}_{2}$-grading and can be split into spaces of even and odd morphisms ($R\fmod^{\mathbb{Z}_{2}}$ therefore has the structure of a supercategory, see, e.g., \cite{Brundan:2017}). A $\mathbb{C}$-linear functor $U$ on $R\fmod$ can be extended to a (super-)functor on $R\fmod^{\mathbb{Z}_{2}}$ where the grading on $U (M)$ is induced by\footnote{One could also induce the grading on $U (M)$ by $-U (\sigma)$. This option could potentially be interesting in the context of fusion functors, which we consider in the next section, but we restrict the discussion here to functors that do not twist the $\mathbb{Z}_{2}$-grading.} $U (\sigma)$. As morphisms between two functors we consider all normalized morphisms that are compatible with the graded structure, and we define the following category:
\begin{Def}
The category $\Fun^{\mathrm{norm},\mathbb{Z}_{2}} (R',R)$ has as objects $\mathbb{C}$-linear functors from $R\fmod^{\mathbb{Z}_{2}}$ to $R'\fmod^{\mathbb{Z}_{2}}$ satisfying $U (M,\sigma )= (U (M),U (\sigma))$. For two functors $U,V$ the space of morphisms is
\begin{equation}
\Hom^{\mathrm{norm},\mathbb{Z}_{2}} (U,V) = \bigoplus_{n=0}^{\infty} \Hom_{n}^{\mathrm{norm},\mathbb{Z}_{2}} (U,V) \ .
\end{equation}
For $n=0$, a morphism $\phi^{(0)}_{UV}\in \Hom_{0}^{\mathrm{norm},\mathbb{Z}_2}(U,V)$ assigns to every graded module $(M,\sigma)$ an $R'$-module homomorphism
\begin{equation}
\phi^{(0)}_{UV}(M,\sigma):(U(M),U(\sigma))\longrightarrow (V(M),V(\sigma))\ ,
\end{equation}
which is \textsl{parity preserving}, i.e.\ it commutes with the grading operators:
\begin{equation}\label{paritypreserving}
    V(\sigma)\,\phi^{(0)}_{UV}(M,\sigma)=\phi^{(0)}_{UV}(M,\sigma)\,U(\sigma)\ .
\end{equation}
For $n\geq 1$, a degree-$n$ morphism $\phi_{UV}^{(n)}\in \Hom_{n}^{\mathrm{norm},\mathbb{Z}_{2}} (U,V)$ assigns to every chain of composable homomorphisms 
\begin{equation}
(M_{n+1},\sigma_{n+1}) \xleftarrow{f_n} (M_n,\sigma_n) \xleftarrow{f_{n-1}}
\dotsb \xleftarrow{f_1} (M_1,\sigma_1)
\end{equation}
between graded modules an element
\begin{align}
\phi^{(n)}_{UV}(f_{n},\dotsc ,f_{1}) \in 
\Hom_{R'} \big(( U (M_{1}),U (\sigma_{1})),( V (M_{n+1}),V (\sigma_{n+1}))\big) \ .
\end{align}
The map
\begin{equation}
    (f_n,\dots,f_1)\mapsto \phi^{(n)}_{UV}(f_{n},\dotsc ,f_{1})
\end{equation}
is $\mathbb{C}$-linear in each entry $f_{i}$. It satisfies compatibility conditions with the grading,
\begin{subequations}\label{gradedcompatibility}
\begin{align}
\phi^{(n)}_{UV} (\sigma_{n+1}f_{n},f_{n-1},\dotsc ,f_{1}) &= V (\sigma_{n+1}) \phi^{(n)}_{UV} (f_{n},f_{n-1},\dotsc ,f_{1})\\
\phi^{(n)}_{UV} (f_{n},\dotsc ,f_{2},f_{1}\sigma_{1}) &= \phi^{(n)}_{UV} (f_{n},\dotsc ,f_{2},f_{1})U (\sigma_{1})\\
\phi^{(n)}_{UV} (f_{n},\dotsc ,f_{i}\sigma_{i},f_{i-1},\dotsc ,f_{1}) &= \phi^{(n)}_{UV} (f_{n},\dotsc ,f_{i},\sigma_{i}f_{i-1},\dotsc ,f_{1})
\ ,
\end{align}
\end{subequations}
where the last condition is only imposed for $n\ge 2$ and $2\leq i\leq n$. In addition we demand that the morphisms are normalized,
\begin{equation}
\phi^{(n)}_{UV} (f_{n},\dotsc ,f_{1}) = 0 \ \text{if any of the}\
f_{i}=\one \ \text{(identity homomorphism)} \ .
\end{equation}
Composition of morphisms is defined as before (see~\eqref{compofmorphisms}).
\end{Def}
The compatibility conditions~\eqref{gradedcompatibility} guarantee that if the homomorphisms $f_{i}$ have $\mathbb{Z}_{2}$-degree $n_{i}$, then $\phi^{(n)}_{UV} (f_{n},\dotsc ,f_{1})$ has degree $n_{1}+\dotsb +n_{n}$,
\begin{equation}
V (\sigma_{n+1})\phi^{(n)}_{UV} (f_{n},\dotsc ,f_{1}) = (-1)^{n_{1}+\dotsb +n_{n}}\phi^{(n)}_{UV} (f_{n},\dotsc ,f_{1}) U (\sigma_{1})\ .
\end{equation}
The definition~\eqref{defdiff} of the differential $d$ is compatible with the graded structure. We denote the corresponding cohomology category by $H^{\bullet}\Fun^{\mathbb{Z}_{2}} (R',R)$.

Note that this definition could be generalized when one also allows for signs in the compatibility conditions~\eqref{paritypreserving} and~\eqref{gradedcompatibility}. With the current definition, the closed morphisms $\phi_{UV}^{(0)}$ of degree $0$ are the \textsl{even} supernatural transformations (i.e.\ parity-preserving natural transformations) between the superfunctors $U$ and $V$, and a modified definition would allow for odd ones. It could be interesting to investigate the consequences of such a generalization for the category of fusion functors defined in the next section.

\section{Fusion functors}\label{sec:fusionfunctors}
When an interface between Landau--Ginzburg models of superpotentials $W'$ and $W$ is fused to a boundary in the model with potential $W$, we obtain a boundary condition for the $W'$-model. For B-type boundary conditions, which can be encoded by matrix factorizations, this leads to a map of one matrix factorization of $W$ to a matrix factorization of $W'$. We can view the matrices appearing in the factorizations as homomorphisms between free finite-rank modules of $R$ or $R'$, hence it is natural to ask whether this map can be realized as a functor on the categories of modules. In this section, we define categories of such \textsl{fusion functors} as subcategories of the category of functors between categories of modules that we investigated in the previous section. We start with a brief summary of some concepts for matrix factorization and then turn towards fusion functors.

\subsection{Background on matrix factorizations}\label{sec:BackgroundMF}

Let $R,R'$ be polynomial rings and $W\in R$, $W'\in R'$ two polynomial superpotentials.
\begin{Def}
A $(W',W)$-matrix factorization $(M,\sigma ,Q)$ consists of a free $\mathbb{Z}_{2}$-graded $(R'\otimes_{\mathbb{C}}R)$-module $(M,\sigma)$ and a homomorphism $Q:M\to M$ that is odd ($Q\sigma +\sigma Q=0$) and satisfies
\begin{equation}
Q^{2} = (W'-W)\,\one_{M} \ .
\end{equation}
 If the module $M$ is of finite rank, we call $(M,\sigma ,Q)$ a finite-rank matrix factorization.
\end{Def}
We often simply say that $Q$ is a matrix factorization of $W'-W$. It corresponds to an interface between B-type topologically twisted Landau--Ginzburg models with superpotentials $W'$ and $W$~\cite{Brunner:2007qu}: $Q$ has the interpretation of the boundary part of the BRST charge, and $\sigma=(-1)^F$ is the grading associated to the fermion number. Similarly, one can describe boundary conditions in a model with superpotential $W'$ by matrix factorizations of $W'$ \cite{Kapustin:2002bi,Orlov:2003yp,Brunner:2003dc,Kapustin:2003ga,Herbst:2004ax} (which corresponds to the case that $R=\mathbb{C}$ is trivial and $W=0$). To describe interface fields, we need to define morphisms between matrix factorizations:
\begin{Def}
A morphism between the $(W',W)$-matrix factorizations $(M_{1},\sigma_{1},Q_{1})$ and $(M_{2},\sigma_{2},Q_{2})$ is given by a module homomorphism $\varphi:M_{1}\to M_{2}$. We distinguish even ($n=0$) and odd ($n=1$) morphisms $\varphi^{(n)}$ satisfying $\sigma_{2}\varphi^{(n)}= (-1)^{n}\varphi^{(n)} \sigma_{1}$. This leads to a category $\MF_{W',W}$ of $(W',W)$-matrix factorizations. The subcategory of finite-rank factorizations is denoted by $\mf_{W',W}$.
\end{Def}
Note that due to the $\mathbb{Z}_{2}$-grading, $\MF_{W',W}$ and $\mf_{W',W}$ can be considered as supercategories in the sense of~\cite{Brundan:2017}.
On this category one can introduce a differential and then consider the corresponding cohomology category.
\begin{Def}
Given two factorizations as above, a differential on the space of morphisms from $M_{1}$ to $M_{2}$ is defined by
\begin{equation}\label{MFdifferential}
\delta_{Q_{1},Q_{2}} \varphi^{(n)} = Q_{2}\, \varphi^{(n)} +
(-1)^{n+1}\varphi^{(n)}\, Q_{1}  \ . 
\end{equation}
Its cohomology defines the space of morphisms in the cohomology category of matrix factorizations that is denoted by $\HMF_{W',W}$ (similarly we denote the subcategory of finite-rank factorizations as $\Hmf_{W',W}$). 
\end{Def}
The cohomology classes of morphisms are interpreted as interface fields in the topologically twisted Landau--Ginzburg model~\cite{Brunner:2007qu}.

Given an odd morphism $\varphi^{(1)}$ between matrix factorizations $Q_{1}$ and $Q_{2}$, one can obtain a new matrix factorization 
\begin{equation}\label{mfcone}
\mathcal{C} (\varphi^{(1)}) = \begin{pmatrix}
Q_{1} & 0 \\
\varphi^{(1)} & Q_{2}
\end{pmatrix}
\end{equation}
which is known as the cone construction.\footnote{Usually this would not be denoted as the mapping cone of the odd morphism $\varphi^{(1)}$, but of its even counterpart between the shifted factorization $Q_{1}[-1]$ and $Q_{2}$; we choose the notation given here for comparison with an analogous construction in the category of fusion functors (see Section~\ref{sec:Cones}).} This new matrix factorization corresponds to the configuration of two interfaces deformed by an interface field~\cite{Herbst:2004zm}.

From a $(W',W)$-factorization $Q_{1}$ and a $(W,W'')$-factorization $Q_{2}$ one can build the tensor product factorization \cite{Yoshino:1998,Khovanov:2004,Khovanov:2004bc,Ashok:2004zb}
\begin{equation}
Q_{1}\boxtimes Q_{2}=Q_{1}\otimes \one + \sigma_{1}\otimes Q_{2}  \ , 
\end{equation}
which is a $(W',W'')$-factorization realized on the $(R'\otimes_{\mathbb{C}} R'')$-module $M_{1}\otimes_{R}M_{2}$. Physically, the tensor product realizes the fusion of the corresponding interfaces~\cite{Brunner:2007qu}.

\subsection{Fusion functors and their morphisms}\label{sec:subsec-fusionfunctors}
We define fusion functors as linear functors on graded ring modules with a special
property that guarantees that a matrix factorization is mapped again to a matrix factorization (see Section~\ref{sec:actingonMF}). Let $R$ and $R'$ be polynomial rings over $\mathbb{C}$,
and $W\in R$ and $W'\in R'$ specific polynomials. We then define
\begin{Def}
A $(W',W)$-fusion functor $U$ is a $\mathbb{C}$-linear functor from $R\fmod^{\mathbb{Z}_{2}}$ to
$R'\fmod^{\mathbb{Z}_{2}}$ with the property
\begin{equation}\label{definingpropFF}
U (W\, f) = W'\, U (f)
\end{equation}
for any module homomorphism $f$.
\end{Def}
The condition~\eqref{definingpropFF} says that $U$ intertwines multiplication by the superpotential. In particular, $U(W\cdot \one)= W' \cdot U(\one)$, which can be interpreted physically as the condition that the F-term of one model is mapped to the F-term of the other model. As we discuss shortly, this property guarantees that matrix factorizations of $W$ are mapped to matrix factorizations of $W'$ (see Proposition~\ref{prop:UQismF}).

Morphisms are defined similarly as for functors between categories of modules, but with an intertwining property with respect to the superpotentials:
\begin{Def}
We define a morphism of degree $n$ between $(W',W)$-fusion functors $U,V$ to be
a morphism $\phi^{(n)}_{UV}\in\Hom^{\mathrm{norm},\mathbb{Z}_{2}}_{n}(U,V)$ that for $n\geq 1$ satisfies
\begin{equation}\label{morphismfusionfunctors}
\phi^{(n)} (f_{n},\dotsc ,W\, f_{i},\dotsc ,f_{1}) = W'\, 
\phi^{(n)} (f_{n},\dotsc ,f_{1})
\end{equation}
for every $1\leq i \leq n$.
\end{Def}
For $n=0$, the condition~\eqref{morphismfusionfunctors} is empty, so degree-$0$ morphisms are precisely natural transformations. For $n=1$, the motivation for imposing~\eqref{morphismfusionfunctors} is that we want the cone $\mathcal{C}(\phi^{(1)}_{UV})$ (see~\eqref{cone}) of fusion functors to be a fusion functor again; this requires compatibility with the superpotentials in the sense of~\eqref{morphismfusionfunctors}. For higher $n$, we impose the analogous condition as a natural extension that is compatible with the vertical and horizontal compositions (see Proposition~\ref{prop:horizontalFF}); in particular, $n=2$ morphisms obtained as composites of two degree-$1$ morphisms automatically satisfy it.

With the definitions above, we define for each pair $W\in R$, $W'\in R'$ a category $\FF_{W',W}$ of fusion
functors which is a subcategory of $\Fun^{\mathbb{Z}_{2}}_{R',R}$. We denote the space of morphisms between the $(W',W)$-fusion functors $U$ and $V$ by $\Hom^{W',W}  (U,V)$. The action of the differential $d$ carries over and equips $\FF_{W',W}$ with the structure of a differential graded category. We denote its cohomology category by $\HFF_{W',W}$.
\begin{ex}
Let $R=\mathbb{C}[x]$ and $R'=\mathbb{C}[x']$, and we consider the superpotentials $W=x^k$ and $W'=x'^{kd}$ with integers $k,d$ greater than $1$. To construct a functor that maps $W$ to $W'$, we introduce the ring homomorphism $\omega_{1} :R\to R'$ defined by $\omega_{1} (x)=x'^{d}$. The extension of scalars functor $U_1=\omega_1^*$ induced by $\omega_1$ acts on a homomorphism between free modules given by a matrix with entries in $R$ by applying $\omega_1$ entrywise to the matrix entries. It is a $(W',W)$-fusion functor,
\begin{equation}\label{U1fusionfunctor}
    U_1(x^k f) =(x'^d)^{k}\, U_1(f)=  x'^{dk}\, U_1(f)\ .
\end{equation}
Consider now another ring homomorphism determined by $\omega_{\xi} (x)=\xi x'^{d}$ with $\xi^{k}=1$. Then the extension of scalars functor $U_{\xi}=\omega_{\xi}^{*}$ also is a $(W',W)$-fusion functor, 
\begin{equation}\label{Uxifusionfuntor}
    U_{\xi} (x^{k}f)= (\xi x'^{d})^{k}\,U_\xi(f)=x'^{dk}\,U_\xi(f)\,.
\end{equation}
Similarly to Example~\ref{ex:degree1morphism}, we define a closed degree-$1$ morphism by
\begin{equation}
\phi^{(1)}_{U_{1}U_{\xi}} (f) = \frac{1}{x'} \big(U_{1} (f)-U_{\xi} (f) \big)\ .
\end{equation}
Because of~\eqref{U1fusionfunctor} and~\eqref{Uxifusionfuntor}, this also satisfies the property~\eqref{morphismfusionfunctors} of a morphism between fusion functors,
\begin{align}
\phi^{(1)}_{U_{1}U_{\xi}} (x^{k}f) &= \frac{1}{x'} \big(U_{1} (x^k f)-U_{\xi} (x^k f) \big) \\
&= \frac{1}{x'} \big(x'^{dk} U_{1} (f)- x'^{dk} U_{\xi} ( f) \big)\\
&= x'^{kd} \,\phi^{(1)}_{U_{1}U_{\xi}} (f) \ .
\end{align}
In physical terms, $W=x^{k}$ and $W'=x'^{kd}$ are the Landau--Ginzburg superpotentials of $A$-type minimal models of different levels \cite{Vafa:1988uu}. Using the relation between fusion functors and interfaces discussed in Section~\ref{sec:ffdescribefusion}, the functors $U_1=\omega_1^*$ and $U_\xi=\omega_\xi^*$ can be viewed as describing operator-like interfaces between these models, with $U_\xi$ differing from $U_1$ by composition with the $\mathbb{Z}_k$ phase symmetry $x\mapsto \xi x$.
\end{ex}

The fusion functor conditions~\eqref{definingpropFF} and~\eqref{morphismfusionfunctors} are compatible with the horizontal composition introduced in Section~\ref{sec:horizontal}:
\begin{prop}\label{prop:horizontalFF}
    Let $U_{1},V_{1}$ be $(W_2,W_1)$-fusion functors and let $U_{2},V_{2}$ be $(W_3,W_2)$-fusion functors for $W_i\in R_i$ $(i=1,2,3)$. Then:
    \begin{itemize}
        \item The composed functors $U_2U_1$ and $V_2V_1$ are $(W_3,W_1)$-fusion functors.
        \item For two morphisms $\phi^{(n_{i})}_{U_{i}V_{i}}\in \Hom_{n_{i}}^{W_{i+1},W_i}(U_{i},V_{i})  $ ($i=1,2$), the tensor product is again a morphism of fusion functors,
        \begin{equation}
\phi^{(n_{2})}_{U_{2}V_{2}} \otimes \phi^{(n_{1})}_{U_{1}V_{1}} \in \Hom_{n_{1}+n_{2}}^{W_3,W_1}(U_{2}U_{1},V_{2}V_{1}) \ .
\end{equation}
    \end{itemize}
    \end{prop}
\begin{proof} We discuss the two statements separately:
    \begin{itemize}
        \item We have to verify the fusion functor property~\eqref{definingpropFF} for the composite $U_2 U_1$: for any $R_1$-module homomorphism $f$ we have
        \begin{equation}
            U_2 U_1 (W_1 \,f) = U_2 \big(W_2 \,U_1(f)\big) = W_3 \,U_2 U_1 (f) \ .
        \end{equation}
        The argument is analogous for $V_2V_1$.
        \item We have to show that the tensor product satisfies the property~\eqref{morphismfusionfunctors} for morphisms of fusion functors. Using the definition~\eqref{horizontalcompdef} of the tensor product, we have to distinguish two cases: either we multiply one of the first $n_2$ entries by $W_1$ (for example $f_{n_1+n_2}$), or we multiply one of the last $n_1$ entries by $W_1$ (for example $f_1$). In the first case, we have
        \begin{align}
            &\phi^{(n_{2})}_{U_{2}V_{2}} \otimes \phi^{(n_{1})}_{U_{1}V_{1}} (W_1\,f_{n_{1}+n_{2}},\dotsc ,f_{1})\\
            &= \phi^{(n_{2})}_{U_{2}V_{2}} \big(V_{1} (W_1\,f_{n_{1}+n_{2}}),\dotsc ,V_{1} (f_{n_{1}+1})\big) \, U_{2} \big(\phi^{(n_{1})}_{U_{1}V_{1}} (f_{n_{1}},\dotsc ,f_{1})\big)\\
            &= \phi^{(n_{2})}_{U_{2}V_{2}} \big(W_2\, V_{1} (f_{n_{1}+n_{2}}),\dotsc ,V_{1} (f_{n_{1}+1})\big) \, U_{2} \big(\phi^{(n_{1})}_{U_{1}V_{1}} (f_{n_{1}},\dotsc ,f_{1})\big)\\
            &= W_3\,\phi^{(n_{2})}_{U_{2}V_{2}} \big( V_{1} (f_{n_{1}+n_{2}}),\dotsc ,V_{1} (f_{n_{1}+1})\big) \, U_{2} \big(\phi^{(n_{1})}_{U_{1}V_{1}} (f_{n_{1}},\dotsc ,f_{1})\big)\ .
        \end{align}
        Similarly, in the second case, we find
        \begin{align}
            &\phi^{(n_{2})}_{U_{2}V_{2}} \otimes \phi^{(n_{1})}_{U_{1}V_{1}} (f_{n_{1}+n_{2}},\dotsc ,W_1\,f_{1})\\
            &= \phi^{(n_{2})}_{U_{2}V_{2}} \big(V_{1} (f_{n_{1}+n_{2}}),\dotsc ,V_{1} (f_{n_{1}+1})\big) \, U_{2} \big(\phi^{(n_{1})}_{U_{1}V_{1}} (f_{n_{1}},\dotsc ,W_1\,f_{1})\big)\\
            &= \phi^{(n_{2})}_{U_{2}V_{2}} \big( V_{1} (f_{n_{1}+n_{2}}),\dotsc ,V_{1} (f_{n_{1}+1})\big) \, U_{2} \big(W_2\,\phi^{(n_{1})}_{U_{1}V_{1}} (f_{n_{1}},\dotsc ,f_{1})\big)\\
            &= W_3\,\phi^{(n_{2})}_{U_{2}V_{2}} \big( V_{1} (f_{n_{1}+n_{2}}),\dotsc ,V_{1} (f_{n_{1}+1})\big) \, U_{2} \big(\phi^{(n_{1})}_{U_{1}V_{1}} (f_{n_{1}},\dotsc ,f_{1})\big)\ ,
        \end{align}
        where in the last line we used the fusion functor property of $U_2$ to replace $U_2(W_2\,\cdot)$ by $W_3\,U_2(\cdot)$, and the $R_3$-linearity of $\phi^{(n_2)}_{U_2V_2}(\dots)$ to pull out the factor $W_3$.
    \end{itemize}
\end{proof}

In particular, for $W'=W$ the class of fusion
functors is closed under composition (and contains $\id$), and the space of morphisms between fusion functors is closed under horizontal composition. Hence $\HFF_{W,W}$ inherits a monoidal supercategory structure: the tensor product on objects is given by composition of functors, the unit object is $\id$, and the $\mathbb{Z}_2$-grading is given by the parity of the (cohomological) degree of morphisms. On morphisms, the tensor product is induced by horizontal composition~\eqref{horizontalcompdef} and satisfies the super interchange law~\eqref{prop:horizontalcomposition}.

\subsection{Action on matrix factorizations}\label{sec:actingonMF}
The defining property~\eqref{definingpropFF} of fusion functors guarantees that they map matrix factorizations of $W$ to matrix factorizations of $W'$, which is the basis of the interpretation of fusion functors as interfaces. Let $(M,\sigma ,Q)$ be a matrix factorization of $W$ with a $\mathbb{Z}_{2}$-graded free finite-rank $R$-module $(M,\sigma)$, and an odd module 
homomorphism $Q$ with $Q^{2}=W\cdot \one_{M}$. Acting on it with $U$ results in a matrix factorization of $W'$:
\begin{prop}\label{prop:UQismF}
Let $U$ be a $(W',W)$-fusion functor, and $(M,\sigma ,Q)$ a matrix factorization
of $W$. Then $(U (M),U (\sigma), U (Q))$ is a matrix factorization of $W'$.
\end{prop}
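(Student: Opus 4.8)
The plan is to verify directly the three conditions in the definition of a $W'$-matrix factorization for the triple $(U(M),U(\sigma),U(Q))$: that $(U(M),U(\sigma))$ is a $\mathbb{Z}_{2}$-graded free finite-rank $R'$-module, that $U(Q)$ is odd with respect to $U(\sigma)$, and that $U(Q)^{2}=W'\,\one_{U(M)}$. Each of these will follow from combining the functoriality of $U$ with its $\mathbb{C}$-linearity on homomorphisms, and---crucially for the last condition---the defining property $U(W\,f)=W'\,U(f)$ of a fusion functor.

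First I would address the graded module structure. Since $U$ is by definition a $\mathbb{C}$-linear functor from $R\fmod^{\mathbb{Z}_{2}}$ to $R'\fmod^{\mathbb{Z}_{2}}$ sending $(M,\sigma)$ to $(U(M),U(\sigma))$, the object $U(M)$ is already a free finite-rank $R'$-module, so it only remains to check that $U(\sigma)$ is an involution. This is immediate from functoriality, since $U$ preserves composition and identities: $U(\sigma)^{2}=U(\sigma^{2})=U(\one_{M})=\one_{U(M)}$.

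Next I would establish the oddness of $U(Q)$. Starting from the oddness $Q\sigma+\sigma Q=0$ of $Q$ and applying $U$, the $\mathbb{C}$-linearity of $U$ on the endomorphisms of $M$ lets me distribute $U$ over the sum, while functoriality turns each composite into the composite of the images. Thus $U(Q)\,U(\sigma)+U(\sigma)\,U(Q)=U(Q\sigma)+U(\sigma Q)=U(Q\sigma+\sigma Q)=U(0)=0$, so $U(Q)$ anticommutes with $U(\sigma)$, i.e.\ it is odd.

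The heart of the matter is the square condition, and this is where the fusion property enters. Functoriality gives $U(Q)^{2}=U(Q^{2})$, and by hypothesis $Q^{2}=W\,\one_{M}$, so $U(Q)^{2}=U(W\,\one_{M})$. Applying the defining relation $U(W\,f)=W'\,U(f)$ with $f=\one_{M}$ yields $U(W\,\one_{M})=W'\,U(\one_{M})=W'\,\one_{U(M)}$, which is exactly the required identity. I do not expect a genuine obstacle here: the whole proposition is a direct consequence of the definitions, and the only reason the computation closes is that the fusion-functor condition was designed precisely to convert $Q^{2}=W\,\one$ into $U(Q)^{2}=W'\,\one$.
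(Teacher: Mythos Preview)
Your proof is correct and follows essentially the same approach as the paper, using functoriality together with the fusion-functor property $U(Wf)=W'U(f)$ to verify oddness and the square condition. You are slightly more thorough than the paper in that you also explicitly verify that $U(\sigma)$ is an involution, but otherwise the arguments are the same.
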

\begin{proof}
$U (Q)$ is an odd homomorphism,
\begin{equation}
U (Q)U (\sigma) = U (Q\sigma) = -U (\sigma Q)= - U (\sigma)U (Q)\ .
\end{equation}
To see that it defines a matrix factorization, we evaluate
\begin{equation}
U(Q)\, U(Q) = U(Q^{2}) = U(W\cdot \one_{M}) 
= W'\cdot U(\one_{M}) = W'\cdot \one_{U(M)}\ .
\end{equation}
\end{proof}
A fusion functor not only acts on matrix factorizations, but also on morphisms between matrix factorizations.
Indeed, a fusion functor $U$ induces a differential graded functor $\tilde{\Pi}(U)$ from the category $\mf_{W}$ to $\mf_{W'}$:
\begin{prop}
Let $U$ be a $(W',W)$-fusion functor. Then the functor
\begin{align}
\tilde{\Pi} (U): \mf_{W} &\to \mf_{W'}\nonumber\\
(M,\sigma ,Q)&\mapsto (U (M),U (\sigma),U (Q))\nonumber\\
(\varphi:M_{1}\to M_{2}) &\mapsto \big(U (\varphi):U (M_{1})\to U (M_{2})\big)
\label{Utilde}
\end{align}
is differential graded. In particular, it induces a functor $\Pi(U)$ between the cohomology categories,
\begin{equation}\label{defPiU}
    \Pi (U):\Hmf_{W} \longrightarrow \Hmf_{W'}\,.
\end{equation}
\end{prop}
\begin{proof}
Let $\varphi$ be a morphism from\footnote{As before we often denote a matrix factorization $(M,\sigma ,Q)$ just by $Q$.} $Q_{1}$ to $Q_{2}$ of degree $n$. Then $U (\varphi)$ has the same degree:
\begin{equation}
U (\sigma_{2})U (\varphi)+ (-1)^{n+1}U (\varphi)U (\sigma_{1}) = U (\sigma_{2}\varphi + (-1)^{n+1}\varphi \sigma_{1}) = 0\ .
\end{equation}
Furthermore, the functor is compatible with the differential defined in~\eqref{MFdifferential},
\begin{align}
U (\delta_{Q_{1},Q_{2}}\varphi) &= U (Q_{2}\varphi + (-1)^{n+1}\varphi Q_{1})\nonumber\\
&= U (Q_{2})U (\varphi) + (-1)^{n+1}U (\varphi) U (Q_{1})\nonumber\\
&= \delta_{U (Q_{1}),U (Q_{2})} U (\varphi)\,.
\end{align}
\end{proof}
The question arises whether the map $U\mapsto \tilde{\Pi} (U)$ itself gives rise to a functor from $\FF_{W,W'}$ to the category of functors from $\mf_{W}$ to $\mf_{W'}$. For this we would need to map a morphism $\phi_{UV}$ to a natural transformation between $\tilde{\Pi} (U)$ and $\tilde{\Pi} (V)$, which however we can only do for the cohomology categories as we will see later (see Proposition~\ref{prop:superfunctor}). To prepare this construction, we first keep the matrix factorization $Q$ fixed, and consider the map $U\mapsto U(Q)$ that maps fusion functors to matrix factorizations. It can be extended to a functor: the morphisms of fusion functors $U,V$ can be mapped to morphisms between the
matrix factorization $U(Q)$ and $V(Q)$ by applying them to a matrix factorization $Q$:
\begin{Def}
Let $(M,\sigma ,Q)$ be a finite-rank $W$-matrix factorization. Then we define a functor $\tPi^{Q}$ from the category $\FF_{W',W}$ of $(W',W)$-fusion functors to the category $\mf_{W'}$ of finite-rank $W'$-matrix factorizations by
\begin{align}
\tPi^{Q} (U) &= \big(U (M),U (\sigma), U (Q) \big)\\
\tPi^{Q} (\phi^{(n)}_{UV})&= \phi^{(n)}_{UV} (Q,\dotsc ,Q)\ .
\end{align}
\end{Def}
\begin{prop}
$\tPi^{Q}$ is a differential graded functor.
\end{prop}
\begin{proof}
Because of the compatibility conditions~\eqref{gradedcompatibility} of $\phi^{(n)}_{UV}$ with the grading we have
\begin{align}
V (\sigma ) \phi^{(n)}_{UV} (Q,\dotsc ,Q) &= \phi^{(n)}_{UV} (\sigma Q,\dotsc ,Q)\\
&= - \phi^{(n)}_{UV} (Q\sigma ,\dotsc ,Q)\\
&= -\phi^{(n)}_{UV} (Q,\sigma Q,\dotsc ,Q)\\
&=\dotsb \\
&= (-1)^{n}\phi^{(n)}_{UV} (Q,\dotsc ,Q) U (\sigma) \ ,
\end{align}
hence $\tPi^{Q} (\phi^{(n)}_{UV})=\phi^{(n)}_{UV} (Q,\dotsc ,Q)$ has degree $n$.

We now show that the functor is compatible with the differential,
\begin{equation}\label{Pirelation}
\tPi^{Q} \big(d \phi^{(n)}_{UV} \big) =
\delta_{U(Q),V(Q)} \tPi^{Q} (\phi^{(n)}_{UV}) \ .
\end{equation}
Start with the left-hand side
of~\eqref{Pirelation}, 
\begin{align}
\tPi^{Q} \big(d \phi^{(n)}_{UV} \big) &= d\phi^{(n)}_{UV} (Q,\dotsc ,Q)\nonumber\\
& = V(Q) \phi^{(n)}_{UV} (Q,\dotsc ,Q)\nonumber\\
&\quad  - \phi^{(n)}_{UV}(Q^2,Q,\dotsc ,Q)+\dotsb 
+ (-1)^{n} \phi^{(n)}_{UV} (Q,\dotsc ,Q,Q^2)\nonumber\\
&\quad + (-1)^{n+1}\phi^{(n)}_{UV} (Q,\dotsc,Q) U(Q)\ .
\label{lhs}
\end{align}
Consider the second term on the right-hand side,
\begin{equation}
\phi^{(n)}_{UV} (Q^2,Q,\dotsc ,Q) = \phi^{(n)}_{UV} (W\cdot
\one,Q,\dotsc ,Q) = W'\, \phi^{(n)}_{UV} (\one,Q,\dotsc ,Q)
= 0 \ ,
\end{equation}
where in the last step we made use of the fact that we took
$\phi^{(n)}_{UV}$ from the space of normalized morphisms: it vanishes when any of
its arguments is the identity. This reasoning also applies to the other
terms involving $Q^{2}$, and~\eqref{lhs} reduces to
\begin{align}
\tPi^{Q} \big(d \phi^{(n)}_{UV} \big)&=
V(Q) \phi^{(n)}_{UV}(Q,\dotsc ,Q) +(-1)^{n+1}
\phi^{(n)}_{UV} (Q,\dotsc,Q) U(Q)
\nonumber\\
&= \delta_{U(Q),V(Q)}  \tPi^{Q} (\phi^{(n)}_{UV})\ ,
\end{align}
which proves~\eqref{Pirelation}.
\end{proof}
Being a differential graded functor, $\tPi^{Q}$ induces a functor between the cohomology categories,
\begin{equation}\label{PiQ}
\Pi^{Q} : \HFF_{W',W} \to \Hmf_{W'}  \ .
\end{equation}
The functor $\Pi^{Q}$ depends on the fixed factorization $Q$. When we consider two factorizations $Q_1$ and $Q_2$, then a morphism between the factorizations will induce a natural transformation between the functors $\Pi^{Q_{1}}$ and $\Pi^{Q_{2}}$ -- this is the content of the following proposition. When the matrix factorizations $Q_{1}$ and $Q_{2}$ are isomorphic, the proposition implies that there is a natural isomorphism between the functors $\Pi^{Q_{1}}$ and $\Pi^{Q_{2}}$. In particular, the factorizations $U (Q_{1})$ and $U (Q_{2})$ are isomorphic for isomorphic factorizations $Q_1$, $Q_2$. 
\begin{prop}\label{prop:naturaltransformationPiQ}
Let $\varphi$ be an even $\delta_{Q_{1},Q_{2}}$-closed morphism between two $W$-matrix factorizations $Q_{1}$ and $Q_{2}$. For each object $U$ in $\HFF_{W',W}$ we set
\begin{equation}
\pi^{\varphi} (U):= U (\varphi)\ .
\end{equation}
This defines a natural transformation $\pi^{\varphi}$ from $\Pi^{Q_{1}}$ to $\Pi^{Q_{2}}$: for all closed morphisms $\phi^{(n)}_{UV}$ between $(W',W)$-fusion functors $U$ and $V$ we have
\begin{equation}\label{naturalityxi}
V (\varphi)\, \phi^{(n)}_{UV} (Q_{1},\dotsc ,Q_{1}) = \phi^{(n)}_{UV} (Q_{2},\dotsc ,Q_{2})\,U (\varphi) + (\delta_{U (Q_{1}),V (Q_{2})}\text{-exact terms}) \ .
\end{equation}
\end{prop}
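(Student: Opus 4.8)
The plan is to separate the claim into two parts: first that each component $\pi^{\varphi}(U)=U(\varphi)$ is a legitimate morphism in $\Hmf_{W'}$, and then that the naturality square~\eqref{naturalityxi} commutes in cohomology, the latter via an explicit chain homotopy built by threading $\varphi$ through a string of $Q$'s. For the first part I would simply apply the functor $U$ to the two hypotheses on $\varphi$: being even means $\sigma_2\varphi=\varphi\sigma_1$ and $\delta_{Q_1,Q_2}$-closed means $Q_2\varphi=\varphi Q_1$, so functoriality and $\mathbb{C}$-linearity give $U(\sigma_2)U(\varphi)=U(\varphi)U(\sigma_1)$ and $U(Q_2)U(\varphi)=U(\varphi)U(Q_1)$, i.e.\ $U(\varphi)$ is an even $\delta_{U(Q_1),U(Q_2)}$-closed homomorphism $U(M_1)\to U(M_2)$ between $\Pi^{Q_1}(U)$ and $\Pi^{Q_2}(U)$. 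With the components in place, naturality is exactly the assertion that $\pi^{\varphi}(V)\circ\Pi^{Q_1}(\phi^{(n)}_{UV})$ and $\Pi^{Q_2}(\phi^{(n)}_{UV})\circ\pi^{\varphi}(U)$ agree in $\Hmf_{W'}$, which unwinds to~\eqref{naturalityxi}. For $n=0$ this is immediate, since $\phi^{(0)}_{UV}$ is a natural transformation and~\eqref{naturalityxi} is literally its naturality square for the homomorphism $\varphi$, with no exact correction needed.

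For general $n$ I would produce the homotopy by inserting $\varphi$ at every position in the chain of $Q$'s. Concretely, for $i=0,\dots ,n-1$ set
\[
\psi_i := \phi^{(n)}_{UV}\big(\underbrace{Q_2,\dots ,Q_2}_{i},\varphi ,\underbrace{Q_1,\dots ,Q_1}_{n-1-i}\big)\in \Hom\big(U(M_1),V(M_2)\big)\ ,
\]
which is composable because the chain reads $M_1\to \dots \to M_1\xrightarrow{\varphi}M_2\to \dots \to M_2$, and which carries $\mathbb{Z}_2$-degree $n-1$. The candidate homotopy is $\Psi:=\sum_{i=0}^{n-1}(-1)^i\psi_i$, and the target identity becomes $V(\varphi)\,\phi^{(n)}_{UV}(Q_1,\dots ,Q_1)-\phi^{(n)}_{UV}(Q_2,\dots ,Q_2)\,U(\varphi)=\delta_{U(Q_1),V(Q_2)}\Psi$.

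The heart of the computation is to evaluate the closedness relation $d\phi^{(n)}_{UV}=0$ on the $n+1$ sequences $S_j=(\underbrace{Q_2,\dots ,Q_2}_{j},\varphi ,\underbrace{Q_1,\dots ,Q_1}_{n-j})$ for $j=0,\dots ,n$, and then to form the alternating sum $\sum_{j=0}^{n}(-1)^j\,d\phi^{(n)}_{UV}(S_j)=0$. Three mechanisms drive the simplification. Any term merging two adjacent equal factors $Q_aQ_a=W\,\one$ dies, because the fusion-morphism property~\eqref{morphismfusionfunctors} turns it into $W'\phi^{(n)}_{UV}(\dots ,\one ,\dots)$, which vanishes by reducedness. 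Any term merging $\varphi$ with an adjacent $Q$ uses $Q_2\varphi=\varphi Q_1$ and produces an auxiliary family $\beta_i:=\phi^{(n)}_{UV}(\dots ,\varphi Q_1,\dots)$. Finally the boundary $V$- and $U$-terms yield $V(Q_2)\psi_i$, $\psi_i\,U(Q_1)$, together with the two target expressions $V(\varphi)\phi^{(n)}_{UV}(Q_1,\dots ,Q_1)$ (from $S_0$) and $\phi^{(n)}_{UV}(Q_2,\dots ,Q_2)U(\varphi)$ (from $S_n$). The alternating signs are chosen precisely so that each $\beta_i$ cancels, and what survives reassembles into $\sum_{i=0}^{n-1}(-1)^i\big(V(Q_2)\psi_i+(-1)^n\psi_i\,U(Q_1)\big)=\delta_{U(Q_1),V(Q_2)}\Psi$, establishing~\eqref{naturalityxi}.

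I expect the main obstacle to be the sign bookkeeping in this telescoping sum. One must verify that each auxiliary term $\beta_i$ appears exactly twice, arising from the expansions of $S_i$ and $S_{i+1}$ with opposite signs, which is what forces the alternating coefficients $(-1)^j$; and one must confirm that the $\mathbb{Z}_2$-degree of $\psi_i$ equals $n-1$, so that the surviving combination is genuinely $\delta_{U(Q_1),V(Q_2)}\psi_i$ and not some off-by-a-sign variant. Everything else — composability of the inserted chains, the vanishing of the $Q_a^2$-terms, and the rewriting via $Q_2\varphi=\varphi Q_1$ — is routine once the indexing is pinned down, and closely parallels the chain argument already used in the proof of the horizontal-composition proposition.
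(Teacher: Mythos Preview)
Your proposal is correct and follows essentially the same approach as the paper: both arguments evaluate the closedness relation $d\phi^{(n)}_{UV}=0$ on the $n+1$ sequences $S_j=(\underbrace{Q_2,\dots,Q_2}_{j},\varphi,\underbrace{Q_1,\dots,Q_1}_{n-j})$, kill the $Q_a^2$-terms via the reduced/fusion property, use $Q_2\varphi=\varphi Q_1$ to pair the mixed terms, and take the alternating sum so that the auxiliary $\beta_i$-terms telescope away, leaving the two target expressions and the $\delta$-exact correction built from the $\psi_i$. The paper organises this slightly differently---it adds the $\delta$-exact piece $\delta\,\phi^{(n)}_{UV}(\underbrace{Q_2,\dots,Q_2}_{s},\varphi,\underbrace{Q_1,\dots,Q_1}_{n-s-1})$ to each individual $d\phi^{(n)}_{UV}(S_s)=0$ relation before summing---whereas you sum first and then identify the homotopy $\Psi=\sum_i(-1)^i\psi_i$ all at once, but the underlying computation and the homotopy itself are the same.
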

The proof can be found in Appendix~\ref{sec:app-proof-naturality}.

\subsection{A monoidal functor}

This subsection continues the discussion of the relation between fusion functors and functors on categories of matrix factorizations. It does not have direct implications on the connection between fusion functors and interface factorizations -- readers wishing to skip this part can therefore directly proceed to Section~\ref{sec:ffdescribefusion}.

In the last subsection, we have discussed the action of fusion functors on matrix factorizations, and we have seen that every fusion functor gives rise to a functor between categories of matrix factorizations. We show here that this structure gives rise to a superfunctor from the cohomology category of fusion functors to the supercategory  of functors between categories of matrix factorizations, and we study its properties, in particular with respect to horizontal composition. 

We first define the supercategory $\mathrm{Fun} (\Hmf_{W},\Hmf_{W'})$ of functors between categories of matrix factorizations.
\begin{Def}
We denote by $\mathrm{Fun} (\Hmf_{W},\Hmf_{W'})$ the supercategory that has as objects $\mathbb{C}$-linear functors from $\Hmf_{W}$ to $\Hmf_{W'}$, and where the morphisms between two functors $U,V$ are supernatural transformations $\phi^{(n)}$ of degree $n=0,1$, which assign to any matrix factorization $Q$ of $W$ a morphism $\phi^{(n)} (Q)$ of degree $n$ from $U(Q)$ to $V(Q)$. For any morphism $\varphi:Q_{1}\to Q_{2}$ of degree $m$, they satisfy
\[
V (\varphi)\phi^{(n)} (Q_1) = (-1)^{nm}\phi^{(n)} (Q_2)U (\varphi)\ .
\]
\end{Def}

We can now introduce a superfunctor from $\HFF_{W',W}$ to $\mathrm{Fun}(\Hmf_{W},\Hmf_{W'})$:
\begin{prop}\label{prop:superfunctor}
For every $(W',W)$-fusion functor $U$, we consider the functor $\Pi (U)$ from $\Hmf_{W}$ to $\Hmf_{W'}$ as in~\eqref{defPiU}. 
For every morphism $[\phi_{UV}] \in H^{n}(\Hom(U,V))$ between fusion functors,
we then define for every $W$-matrix factorization $Q$ the morphism $\Pi^{Q} ([\phi_{UV}])=[\phi_{UV} (Q,\dotsc ,Q)]$ in $\Hmf_{W'}$ between $U (Q)$ and $V (Q)$. This defines a supernatural transformation between $\Pi (U)$ and $\Pi (V)$ (which is even for $n$ even, and odd for $n$ odd), so the following defines a superfunctor:
\begin{equation}
\begin{array}{rl}
\Pi_{W',W} : \HFF_{W',W} & \longrightarrow \mathrm{Fun}(\Hmf_{W},\Hmf_{W'})\\
U & \longmapsto (Q \mapsto U (Q))\\
\mbox{$[ \phi_{UV} ]$} & \longmapsto \big( Q \mapsto [ \phi_{UV} (Q,\dotsc ,Q)] \big)\ .
\end{array}
\end{equation}
\end{prop}
\begin{proof}
From~\eqref{naturalityxi} (naturality of $\pi^{\varphi}$) we conclude that the following diagram is commutative:
\begin{center}
\begin{tikzpicture}[thick, scale=1.2, rotate=0]

\coordinate (A) at (-2,1);
\coordinate (B) at ( 2,1);
\coordinate (C) at (-2,-1);
\coordinate (D) at ( 2,-1);

\draw[->] (-1.5,1)  -- (1.5,1) ;
\draw[->] (-1.5,-1) -- (1.5,-1) ;
\draw[->] (-2,0.7)  -- (-2,-0.7)  ;
\draw[->] (2,0.7)   -- (2,-0.7) ;

\draw (A) node[color=black] {$U (Q_{1})$};
\draw (B)+(0.1,0) node[color=black] {$V (Q_{1})$};
\draw (C) node[color=black] {$U (Q_{2}) $};
\draw (D)+(0.1,0) node[color=black] {$V (Q_{2})$};

\draw (0,1.4) node[color=black] {$[\phi_{UV} (Q_{1},\dotsc ,Q_{1})] $};
\draw (0,-1.3) node[color=black] {$[\phi_{UV} (Q_{2},\dotsc ,Q_{2})]   $};
\draw (-2.5,0) node[color=black] {$U (\varphi)  $};
\draw ( 2.5,0) node[color=black] {$ V (\varphi)$};
\end{tikzpicture}
\end{center}
for an even morphism $\varphi$. Similarly, the diagram is anti-commutative for an odd morphism $\varphi$ if also $n$ is odd. 
\end{proof}
For the category of functors on matrix factorizations, there naturally is a notion of a tensor product (or horizontal composition):
\begin{Def}
We define for functors $F\in \mathrm{Fun}(\Hmf_{W'},\Hmf_{W''})$ and $G\in \mathrm{Fun}(\Hmf_{W},\Hmf_{W'})$
the tensor product
as the composite $F\otimes G = F\,G$. For morphisms $\phi_{F_{i}G_{i}}$ between functors $F_{i}$ and $G_{i}$ ($i=1,2$) we define the tensor product as the supernatural transformation that assigns to a matrix factorization $Q$ the morphism
\begin{equation}
(\phi_{F_{2}G_{2}}\otimes \phi_{F_{1}G_{1}}) (Q) = \phi_{F_{2}G_{2}} \big(G_{1} (Q) \big) \,F_{2}\big(\phi_{F_{1}G_{1}} (Q) \big)\ .
\end{equation}
\end{Def}
For equal superpotentials, $\mathrm{Fun}(\Hmf_{W},\Hmf_{W})$ is a strict monoidal supercategory.
The functor $\Pi_{W'W}$ is compatible with the horizontal composition:
\begin{prop}
The following diagram of functors is commutative:
\begin{center}
\includegraphics[scale=.95]{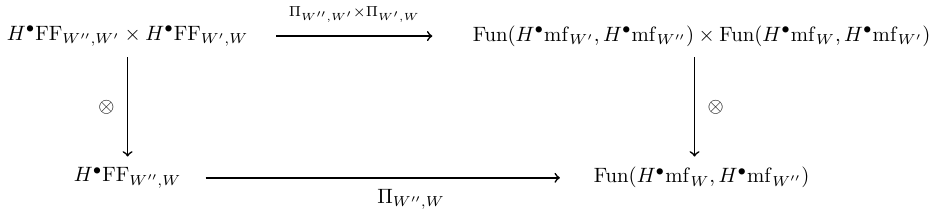}
\end{center}
\end{prop}
\begin{proof}
Applying the functors to an object $(U_{2}, U_{1})$ and following the diagram at the left and at the bottom, we obtain
\begin{align}
\Pi_{W'',W}\big(U_{2}\otimes U_{1} \big) = \Pi_{W'',W} ( U_{2}U_{1}) = \big(Q\mapsto U_{2}U_{1} (Q) \big)\ .
\end{align}
This coincides with the result of following the diagram at the top and at the right,
\begin{align}
\Pi_{W'',W'} (U_{2})\otimes \Pi_{W',W}(U_1)&= \big(Q'\mapsto U_{2} (Q') \big)\otimes \big(Q\mapsto U_{1} (Q) \big)\\
 &= \big(Q\mapsto U_{2}U_{1} (Q) \big)\ .
\end{align}
Applying the functors to a morphism $(\phi_{U_{2}V_{2}},\phi_{U_{1}V_{1}})$, we find by following the diagram along the left and the bottom arrow
\begin{align}\label{leftbottomresult}
\Pi_{W'',W} \big( \phi_{U_{2}V_{2}}\otimes \phi_{U_{1}V_{1}}\big) &=\Big(Q\mapsto 
\phi_{U_{2}V_{2}}\big(V_{1} (Q),\dotsc ,V_{1} (Q) \big)U_{2}\big(\phi_{U_{1}V_{1}} (Q,\dotsc ,Q) \big) \Big)
\end{align}
where we have used the definition~\eqref{horizontalcompdef} of $\phi_{U_{2}V_{2}}\otimes \phi_{U_{1}V_{1}}$. By following the diagram at the top and at the right side, we obtain
\begin{align}
&\Pi_{W'',W'} (\phi_{U_{2}V_{2}})\otimes  \Pi_{W',W} (\phi_{U_{1}V_{1}})\nonumber\\
&\qquad =
\big(Q'\mapsto \phi_{U_{2}V_{2}} (Q',\dotsc ,Q') \big)\otimes \big(Q\mapsto \phi_{U_{1}V_{1}} (Q,\dotsc ,Q) \big)\\
&\qquad =\Big(Q\mapsto \phi_{U_{2}V_{2}}\big(V_{1} (Q),\dotsc ,V_{1} (Q) \big)U_{2}\big(\phi_{U_{1}V_{1}} (Q,\dotsc ,Q) \big) \Big) \ ,
\end{align}
which coincides with~\eqref{leftbottomresult}.
\end{proof}
For equal superpotentials, we can then conclude:
\begin{corr}
$\Pi_{W,W} : \HFF_{W,W} \longrightarrow \mathrm{Fun}(\Hmf_{W},\Hmf_{W})$ is a strict monoidal superfunctor.
\end{corr}

\subsection{Fusion functors describe fusion}\label{sec:ffdescribefusion}
We have seen in the previous subsection that a $(W',W)$-fusion functor $U$ gives rise to a functor $\Pi (U)$ that maps  matrix factorizations of $W$ to those of $W'$. As we discuss in the following, this can be interpreted as fusing a $(W',W)$-interface to a boundary condition in $W$, mathematically realized as tensoring a $(W',W)$-matrix factorization to a $W$-matrix factorization.

There is a special defect called the identity defect. Physically, it corresponds to not introducing a defect at all, and fusing such a defect to any interface will not change the interface. A $(W,W)$-identity defect can be realized as a $(W,W)$-matrix factorization $(M_{I_{W}},\sigma_{I_{W}},I_{W})$ \cite{Khovanov:2004,Brunner:2007qu}. This means that for any $W$-matrix factorization $Q$ there exist
closed (even) homomorphisms
\begin{equation}\label{lambdaandinverse}
\lambda_{Q}: M_{I_{W}}\otimes_R M_{Q} \to M_{Q} \ ,\quad 
\lambda^{-1}_{Q}: M_{Q}\to M_{I_{W}}\otimes_R M_{Q} \ ,
\end{equation}
such that
\begin{subequations}\label{inverses}
\begin{align}
\lambda_{Q}\lambda_{Q}^{-1} &= \one_{M_{Q}} +
(\delta_{Q,Q}\text{-exact terms}) \label{inversea}\\
\lambda_{Q}^{-1}\lambda_{Q} & = \one_{M_{I_{W}}\otimes_R M_{Q}} +
(\delta_{I_{W}\boxtimes Q,I_{W}\boxtimes Q}\text{-exact terms})\ .\label{inverseb}
\end{align}
\end{subequations}
This realizes an isomorphism between the matrix factorizations $I_{W}\boxtimes Q$ and $Q$.\footnote{Note that $M_{I_{W}}\otimes_R M_{Q}$ are not of finite rank as $R$-modules, so this isomorphism holds in the category $\HMF$ of matrix factorizations of not necessarily finite rank.} The construction can be straightforwardly extended
to $(W,W'')$-matrix factorizations $Q$ for some $W''\in R''$ by treating the second factor trivially.

We are now ready to show that the action of a fusion functor $U$ on matrix factorizations has the same effect as tensoring with the interface factorization $U(I_{W})$. For this we need to extend $U$ to a functor on the category of free $(R\otimes_{\mathbb{C}}\tilde{R})$-modules by acting trivially on the second factor. More concretely, we can for any finite-rank $(R\otimes_{\mathbb{C}}\tilde{R})$-module $N$ fix an isomorphism $\xi_{N}:N\to M\otimes_{\mathbb{C}} \tilde{M}$ to a tensor product of a finite-rank $R$-module $M$ and a finite-rank $\tilde{R}$-module $\tilde{M}$.  Then we set  $U (N)=U (M\otimes_{\mathbb{C}} \tilde{M})$, and also for homomorphisms $h:N_{1}\to N_{2}$ we set $U (h)=U (\xi_{N_{2}}\,h\,\xi_{N_{1}}^{-1})$ where $U$ on these tensor products is defined as
\begin{align}
U ( M\otimes_{\mathbb{C}} \tilde{M}) &= U (M)\otimes_{\mathbb{C}} \tilde{M}\\
U (f\otimes g) &= U (f) \otimes g\\
\phi_{UV} (f_{n}\otimes g_{n},\dotsc ,f_{1}\otimes g_{1}) &= \phi_{UV} (f_{n},\dotsc ,f_{1})\otimes (g_{n}\dotsb g_{1})\ .
\end{align}
This extension depends on the precise identifications $N\cong M\otimes_{\mathbb{C}} \tilde{M}$, but different extensions result in isomorphic functors (see Appendix~\ref{sec:app-bimodules}). In a similar fashion, we can extend $U$ to a functor on the category of all free $R$-modules (not necessarily of finite rank). For this we identify every infinite-rank free $R$-module $M\cong M_{f}\otimes_{\mathbb{C}}V_{M}$ as a tensor product of a finite-rank free $R$-module $M_{f}$ and an infinite-dimensional vector space $V_M$, and then set $U (M)=U (M_{f})\otimes_{\mathbb{C}}V_M$.

In analogy to the functor $\Pi^Q$ (see~\eqref{PiQ}), we can then introduce the functor $\Pi^{I_W}$ that evaluates fusion functors and morphisms on the identity factorization:
\begin{Def}\label{def:PiIW}
The functor $\Pi^{I_W}:\HFF_{W',W}\longrightarrow \Hmf_{W',W}$ maps a $(W',W)$-fusion functor $U:R\fmod\to R'\fmod$ to the associated interface factorization
\begin{equation}
\Pi^{I_W}(U)=U(I_W)\ .
\end{equation}
A degree-$n$ morphism $\phi^{(n)}_{UV}$ is mapped to
\begin{equation}
\Pi^{I_W}(\phi^{(n)}_{UV})=\phi^{(n)}_{UV}(I_W,\dots, I_W) \ .
\end{equation}
\end{Def}

As we show in the following, fusing $U(I_W)$ to another factorization has the same effect as applying the functor $U$ to the factorization. Indeed, when we apply $U$ to the tensor product factorization $I_{W}\boxtimes Q=I_{W}\otimes \one + \sigma_{I_{W}}\otimes Q$, we obtain
\begin{equation}
U (I_{W}\otimes \one + \sigma_{I_{W}}\otimes Q) \cong U (I_{W})\otimes \one + U (\sigma_{I_{W}})\otimes Q = U (I_{W})\boxtimes Q \ .
\end{equation}
This is isomorphic to the factorization that is obtained by applying $U$ to $Q$:
\begin{prop}
Let $(M_{I_{W}},\sigma_{I_{W}},I_{W})$ be an identity matrix factorization, and $Q$ a $W$-matrix factorization. Then for all $(W',W)$-fusion functors $U$ 
\begin{equation}\label{isomorphism}
U (I_{W})\boxtimes Q \cong U (Q) \quad \text{in}\ \HMF_{W'}\ .
\end{equation}
\end{prop}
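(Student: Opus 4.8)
The plan is to transport the known isomorphism $I_{W}\boxtimes Q\cong Q$ through the functor $U$. Recall from~\eqref{lambdaandinverse} and~\eqref{inverses} that this isomorphism is witnessed by even $\delta$-closed homomorphisms $\lambda_{Q}\colon M_{I_{W}}\otimes M_{Q}\to M_{Q}$ and $\lambda_{Q}^{-1}\colon M_{Q}\to M_{I_{W}}\otimes M_{Q}$ satisfying $\lambda_{Q}\lambda_{Q}^{-1}=\one+(\delta_{Q,Q}\text{-exact})$ and $\lambda_{Q}^{-1}\lambda_{Q}=\one+(\delta_{I_{W}\boxtimes Q,I_{W}\boxtimes Q}\text{-exact})$. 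Since it has already been observed that $U(I_{W}\boxtimes Q)=U(I_{W})\boxtimes Q$, it suffices to show that $U$ carries $\lambda_{Q}$ and $\lambda_{Q}^{-1}$ to a mutually inverse pair of morphisms in $\Hmf_{W'}$ between $U(I_{W})\boxtimes Q$ and $U(Q)$, which then establishes~\eqref{isomorphism}.

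First I would check that $U(\lambda_{Q})$ is $\delta_{U(I_{W})\boxtimes Q,\,U(Q)}$-closed. Closedness of the even morphism $\lambda_{Q}$ reads $Q\,\lambda_{Q}=\lambda_{Q}\,(I_{W}\boxtimes Q)$; applying the functor $U$ and using $U(AB)=U(A)\,U(B)$ together with $U(I_{W}\boxtimes Q)=U(I_{W})\boxtimes Q$ gives $U(Q)\,U(\lambda_{Q})=U(\lambda_{Q})\,(U(I_{W})\boxtimes Q)$, which is precisely the closure condition for an even morphism, and the same argument applies to $\lambda_{Q}^{-1}$. Next I would verify that $U(\lambda_{Q})$ and $U(\lambda_{Q}^{-1})$ are mutually inverse \emph{in cohomology}. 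Applying $U$ to the relations~\eqref{inverses} and using $\mathbb{C}$-linearity and $U(\one)=\one$ yields $U(\lambda_{Q})\,U(\lambda_{Q}^{-1})=\one+U(\delta_{Q,Q}\mu)$ for the odd morphism $\mu$ producing the even exact correction, and analogously for the other composition. The key observation is that $U$ preserves $\delta$-exactness: since $\delta_{Q,Q}\mu=Q\mu+\mu Q$ for $\mu$ odd, functoriality gives $U(\delta_{Q,Q}\mu)=U(Q)\,U(\mu)+U(\mu)\,U(Q)=\delta_{U(Q),U(Q)}\,U(\mu)$, and $U(\mu)$ is again odd because $U$ commutes with the grading involution, $U(\sigma)U(\mu)=U(\sigma\mu)=-U(\mu\sigma)=-U(\mu)U(\sigma)$. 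Hence both compositions equal the identity modulo $\delta$-exact terms, so $U(\lambda_{Q})$ is an isomorphism in $\Hmf_{W'}$.

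The hard part will not be the algebra above but making sense of $U$ applied to the identity-defect data. As noted in the text, $M_{I_{W}}\otimes M_{Q}$ is not of finite rank over $R$, and $\lambda_{Q}$ is a genuinely ``mixing'' homomorphism rather than a pure tensor $f\otimes g$, so the extension of $U$ to $(R,\tilde{R})$-bimodules defined before the proposition must be invoked and checked to remain functorial and grading-compatible on exactly these morphisms; once that extension is in place the computation is purely formal. Alternatively, one can avoid the explicit witnesses altogether: the morphism $\lambda_{Q}$ exhibits $I_{W}\boxtimes Q$ and $Q$ as isomorphic $W$-matrix factorizations, so the corollary of Proposition~\ref{prop:naturaltransformationPiQ} that isomorphic factorizations have isomorphic images under $U$ gives $U(I_{W}\boxtimes Q)\cong U(Q)$ immediately, and $U(I_{W}\boxtimes Q)=U(I_{W})\boxtimes Q$ then finishes the argument — subject to the same caveat that Proposition~\ref{prop:naturaltransformationPiQ} be read in the not-necessarily-finite-rank setting.
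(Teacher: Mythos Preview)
Your proof is correct and follows exactly the paper's approach: apply $U$ to the homomorphisms $\lambda_{Q}$ and $\lambda_{Q}^{-1}$ and to the relations~\eqref{inverses}, using functoriality and $\mathbb{C}$-linearity to conclude that $U(\lambda_{Q})$ and $U(\lambda_{Q}^{-1})$ realize the isomorphism~\eqref{isomorphism}. The paper states this in a single sentence, while you have spelled out the verification of closedness and preservation of $\delta$-exactness; your caveats about the extension of $U$ to the non-finite-rank bimodule setting are well taken but are handled (tacitly) by the paragraph preceding the proposition.
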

\begin{proof}
By applying $U$ to the relations~\eqref{inverses} we verify that $U(\lambda_{Q})$ and $U
(\lambda_{Q}^{-1})$ implement the isomorphism~\eqref{isomorphism}.
\end{proof}
A morphism $\phi_{UV}$ induces a morphism $\varphi =\phi_{UV} ( I_{W},\dotsc ,I_{W})$ between the factorizations $U (I_{W})$ and $V (I_{W})$. Taking the tensor product with the factorization $Q$ one induces the morphism $\varphi \otimes \one$ between $U (I_{W})\boxtimes Q$ and $V (I_{W})\boxtimes Q$. This corresponds to the morphism $V (\lambda_{Q}) (\varphi \otimes \one)U (\lambda_{Q}^{-1})$ between $U (Q)$ and $V (Q)$. It is equivalent to $\phi_{UV} (Q,\dotsc ,Q)$ as the following proposition shows.
\begin{prop}
Let $\lambda_{Q},\lambda_{Q}^{-1}$ be the homomorphisms from~\eqref{lambdaandinverse}. Then for all morphisms $\phi_{UV}$ between $(W',W)$-fusion functors $U,V$ we have
\begin{subequations}\label{naturalityUIQ-UQ}
\begin{align}
V (\lambda_{Q})\big(\phi_{UV} (I_{W},\dotsc ,I_{W})\otimes \one \big) &= \phi_{UV} (Q,\dotsc ,Q)\,U (\lambda_{Q})+\text{exact terms}\\
\big(\phi_{UV} (I_{W},\dotsc ,I_{W})\otimes \one \big)U (\lambda_{Q}^{-1}) &= V (\lambda_{Q}^{-1})\, \phi_{UV} (Q,\dotsc ,Q)+\text{exact terms} \ .
\end{align}
\end{subequations}
\end{prop}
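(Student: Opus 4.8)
The plan is to deduce both identities in~\eqref{naturalityUIQ-UQ} from the naturality result of Proposition~\ref{prop:naturaltransformationPiQ}, applied to the closed even homomorphisms $\lambda_{Q}$ and $\lambda_{Q}^{-1}$ of~\eqref{lambdaandinverse}, combined with one key simplification: that $\phi_{UV}$ evaluated on $n$ copies of the tensor factorization $I_{W}\boxtimes Q$ collapses to $\varphi\otimes\one$, where $\varphi=\phi_{UV}(I_{W},\dotsc ,I_{W})$. Concretely, for the first identity I would regard $\lambda_{Q}$ as a $\delta$-closed even morphism between the two $W$-matrix factorizations $Q_{1}:=I_{W}\boxtimes Q$ and $Q_{2}:=Q$ and invoke Proposition~\ref{prop:naturaltransformationPiQ} to get
\[
V(\lambda_{Q})\,\phi_{UV}(I_{W}\boxtimes Q,\dotsc ,I_{W}\boxtimes Q)=\phi_{UV}(Q,\dotsc ,Q)\,U(\lambda_{Q})+(\text{exact terms})\ .
\]
The second identity follows in the same way from $\lambda_{Q}^{-1}$, now viewed as a closed morphism from $Q$ to $I_{W}\boxtimes Q$. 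Note that $I_{W}\boxtimes Q$ has infinite rank, but the proof of Proposition~\ref{prop:naturaltransformationPiQ} is purely algebraic (it only uses the differential, the reduced property, and $Q^{2}=W$), so it applies verbatim in $\HMF$.

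The crucial step, which I expect to be the main obstacle, is establishing the collapse identity
\[
\phi_{UV}(I_{W}\boxtimes Q,\dotsc ,I_{W}\boxtimes Q)=\phi_{UV}(I_{W},\dotsc ,I_{W})\otimes\one=\varphi\otimes\one\ .
\]
To prove it I would write $I_{W}\boxtimes Q=I_{W}\otimes\one+\sigma_{I_{W}}\otimes Q$ and expand the left-hand side multilinearly into $2^{n}$ terms, each having every slot equal to either $I_{W}\otimes\one$ or $\sigma_{I_{W}}\otimes Q$. Using the extension rule $\phi_{UV}(f_{n}\otimes g_{n},\dotsc ,f_{1}\otimes g_{1})=\phi_{UV}(f_{n},\dotsc ,f_{1})\otimes(g_{n}\dotsb g_{1})$, each such term factorizes; the single term with all slots equal to $I_{W}\otimes\one$ produces exactly $\varphi\otimes\one$. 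It then remains to show that every term containing at least one factor $\sigma_{I_{W}}\otimes Q$ vanishes.

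The vanishing is where the graded compatibility conditions~\eqref{gradedcompatibility} and the reduced property~\eqref{defreducedspace} do the work. In such a term the first tensor factor is $\phi_{UV}(a_{n},\dotsc ,a_{1})$ with $a_{i}\in\{I_{W},\sigma_{I_{W}}\}$ and at least one $a_{i_{0}}=\sigma_{I_{W}}$. Writing $\sigma_{I_{W}}=\one_{M_{I_{W}}}\circ\sigma_{I_{W}}$ in slot $i_{0}$ and using~\eqref{gradedcompatibility} to slide the involution $\sigma_{I_{W}}$ out of that slot (to the neighbouring slot via the middle relation if $2\le i_{0}\le n$, or onto $U(\sigma_{I_{W}})$ at the right boundary if $i_{0}=1$), one is left with the identity homomorphism $\one_{M_{I_{W}}}$ sitting in slot $i_{0}$, so that the whole expression vanishes because $\phi_{UV}$ is reduced. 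For $n=1$ this is already the computation $\phi_{UV}(\sigma_{I_{W}}\otimes Q)=V(\sigma_{I_{W}})\,\phi_{UV}(\one_{M_{I_{W}}})\otimes Q=0$. Since the first factor already vanishes, the precise value of the second factor (a power of $Q$) is irrelevant and no signs intervene.

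Finally I would assemble the pieces: substituting the collapse identity into the two consequences of Proposition~\ref{prop:naturaltransformationPiQ} yields
\begin{align}
V(\lambda_{Q})\big(\varphi\otimes\one\big)&=\phi_{UV}(Q,\dotsc ,Q)\,U(\lambda_{Q})+(\text{exact terms})\ ,\\
\big(\varphi\otimes\one\big)\,U(\lambda_{Q}^{-1})&=V(\lambda_{Q}^{-1})\,\phi_{UV}(Q,\dotsc ,Q)+(\text{exact terms})\ ,
\end{align}
which are precisely the two claimed relations. The only genuine content beyond Proposition~\ref{prop:naturaltransformationPiQ} is the collapse identity of the second and third paragraphs; everything else is bookkeeping of domains and signs.
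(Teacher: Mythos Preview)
Your proposal is correct and follows essentially the same approach as the paper: reduce to Proposition~\ref{prop:naturaltransformationPiQ} applied to the closed even morphisms $\lambda_{Q}$ and $\lambda_{Q}^{-1}$, and establish the collapse identity $\phi_{UV}(I_{W}\boxtimes Q,\dotsc ,I_{W}\boxtimes Q)=\phi_{UV}(I_{W},\dotsc ,I_{W})\otimes\one$ using the graded compatibility~\eqref{gradedcompatibility} together with the reduced property. Your treatment is in fact slightly more explicit than the paper's, which only illustrates the vanishing for the all-$\sigma_{I_{W}}$ term and leaves the mixed terms implicit, whereas you handle any term containing at least one $\sigma_{I_{W}}$.
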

\begin{proof}
The relations~\eqref{naturalityUIQ-UQ} are proven analogously to the proof of proposition~\ref{prop:naturaltransformationPiQ}. In fact, they are essentially the statement that $U\mapsto U (\lambda_{Q}^{-1})$ and $U\mapsto U (\lambda_{Q})$ define natural isomorphisms between $\Pi^{Q}$ and $\Pi^{I_{W}\boxtimes Q}$. Here, we can use that a degree-$n$ morphism $\phi_{UV}$ acts trivially on the factorization $Q$ appearing in the second factor:
\begin{align}
\phi_{UV} (\sigma_{I_{W}}\otimes Q,\dotsc ,\sigma_{I_{W}}\otimes Q) &= \phi_{UV} (\sigma_{I_{W}},\dotsc ,\sigma_{I_{W}})\otimes Q^{n} \\
&= V (\sigma_{I_{W}}) \phi_{UV} (\one ,\sigma_{I_{W}},\dotsc ,\sigma_{I_{W}})\otimes Q^{n} = 0\ 
\end{align}
because $\phi_{UV}$ vanishes when one of its arguments is the identity. Therefore,
\begin{equation}
\phi_{UV} (I_{W}\boxtimes Q,\dotsc ,I_{W}\boxtimes Q) = \phi_{UV} (I_{W},\dotsc ,I_{W})\otimes \one \ ,
\end{equation}
which explains the appearance of these terms in~\eqref{naturalityUIQ-UQ}.
\end{proof}

The results above can be summarised in the following proposition:
\begin{prop}\label{prop:FFdescribeFusion}
Let $(M_{I_{W}},\sigma_{I_{W}},I_{W})$ be an identity matrix factorization for the superpotential $W$. Recall from Proposition~\ref{prop:superfunctor} the functor 
\begin{equation}
    \Pi_{W',W} : \HFF_{W',W} \to \mathrm{Fun} (\Hmf_{W},\Hmf_{W'})
\end{equation}
whose action on objects is $\Pi_{W',W}(U)=\big(P\mapsto U(P)\big)$. Via the canonical embedding $\Hmf_{W'}\subset \HMF_{W'}$ we may regard $\Pi_{W',W}(U)$ as a functor $\Hmf_W\to \HMF_{W'}$. Then the following diagram of functors commutes, up to natural isomorphisms:
\begin{center}
\begin{tikzpicture}[thick, scale=1.2, rotate=0]

\coordinate (A) at (-3,1);
\coordinate (B) at ( 3,1);
\coordinate (C) at (0,-1);

\draw[->] (-2.2,1)  -- (1.5,1) ;
\draw[->] (.3,-.7)  -- (2.7,.7) ;
\draw[->] (-2.7,.7)  -- (-.3,-.7) ;

\draw (A)+(-.05,0) node[color=black] {$\HFF_{W',W}$};
\draw (B)+(0.3,0) node[color=black] {$\mathrm{Fun} (\Hmf_{W},\HMF_{W'})$};
\draw (C) node[color=black] {$\Hmf_{W',W}$};

\draw (0,1.3) node[color=black] {$\Pi_{W',W}$};
\draw (-1.8,-.2) node[color=black] {$\Pi^{I_{W}}$};
\draw (2.7,-.2) node[color=black] {$\big(Q\mapsto Q \boxtimes \underline{\ \ } \big)$};
\end{tikzpicture}
\end{center}
Here, the functor on the right-hand side sends $Q\in\Hmf_{W',W}$ to the functor whose action on objects in $\Hmf_W$ is $P\mapsto Q\boxtimes P$.
\end{prop}
Note that the functor $\Pi^{I_{W}}$ depends on the concrete realization of the identity defect as a factorization. Different realizations $I_{W}\cong \tilde{I}_{W}$ lead to isomorphic functors, $\Pi^{I_{W}}\cong \Pi^{\tilde{I}_{W}}$ (see Proposition~\ref{prop:naturaltransformationPiQ}).

The $(W',W)$-interfaces $\Pi^{I_{W}} (U)= U (I_{W})$ that arise from applying fusion functors to the identity defect factorization will be called \textsl{operator-like interfaces}, because the effect of fusing such an interface to another matrix factorization is implemented by simply applying the fusion functor to the factorization.

The fusion of operator-like interfaces can be realized by the composition of the corresponding fusion functors:
\begin{prop}\label{prop:PiIWCompatibleWithFusion}
The following diagram commutes up to natural isomorphisms:
\begin{center}
\begin{tikzpicture}[thick, scale=1.2, rotate=0]

\coordinate (A) at (-3,1);
\coordinate (B) at ( 3,1);
\coordinate (C) at (-3,-1);
\coordinate (D) at ( 3,-1);

\draw[->] (-1,1)  -- (1.1,1) ;
\draw[->] (-2.1,-1) -- (2.1,-1) ;
\draw[->] (-3,0.7)  -- (-3,-0.7)  ;
\draw[->] (3,0.7)   -- (3,-0.7) ;

\draw (A) node[color=black] {$\HFF_{W'',W'}\times \HFF_{W',W}$};
\draw (B)+(0.1,0) node[color=black] {$\Hmf_{W'',W'}\times \Hmf_{W',W}$};
\draw (C) node[color=black] {$\HFF_{W'',W}$};
\draw (D)+(0.1,0) node[color=black] {$\HMF_{W'',W}$};

\draw (0.2,1.3) node[color=black] {$\Pi^{I_{W'}}\times \Pi^{I_{W}}$};
\draw (0,-1.3) node[color=black] {$\Pi^{I_{W}}$};
\draw (-3.5,0) node[color=black] {$\otimes $};
\draw ( 3.5,0) node[color=black] {$\boxtimes$};
\end{tikzpicture}
\end{center}
\end{prop}
\begin{proof}
This follows straightforwardly from the previous results.
\end{proof}

\begin{ex}\label{ex:topdefectsinminimalmodels}
Consider $R=R'=\mathbb{C}[x]$ and $W=W'=x^{k}$. This Landau--Ginzburg model corresponds to a superconformal minimal model in the infrared~\cite{Vafa:1988uu}. The invertible functor $G_{\xi}$ from Example~\ref{ex:invertiblefunctors} for $\xi^{k}=1$ is a fusion functor that implements the effect of fusing a symmetry defect $G_{\xi} (I_{W})$~\cite{Brunner:2007qu}. The cone~\eqref{coneexample} we can construct from a degree-1 morphism between $G_{\xi_{1}}$ and $G_{\xi_{2}}$ (see Example~\ref{ex:degree1morphism}) is a fusion functor that describes a cone of a morphism between symmetry defect factorizations. It is isomorphic to a  defect factorization of $x^{k}-x'^{k}$ based on the factors $(x-\xi_{1}x') (x-\xi_{2}x')$ and $\frac{x^{k}-x'^{k}}{(x-\xi_{1}x') (x-\xi_{2}x')}$. For specific pairs of $\xi_{1}$ and $\xi_{2}$ these correspond to rational topological defects~\cite{Brunner:2005fv}. In such a way, one can construct fusion functors for any rational topological defects in minimal models.
\end{ex}

\begin{ex}\label{ex:KR}
Let $R=\mathbb{C}[y_{1},y_{2}]$, $R'=\mathbb{C}[x_{1},x_{2}]$, and
$W'=x_{1}^{m}+x_{2}^{m}$. Consider the homomorphism $\Omega :R\to R'$ given by
$\Omega (y_{1})=x_{1}+x_{2}$ and $\Omega (y_{2})=x_{1}x_{2}$.
Besides extension of scalars $\Omega^*:R\fmod\to R'\fmod$ we use the
restriction-of-scalars functor $\Omega_*:R'\fmod\to R\fmod$, which views an
$R'$-module as an $R$-module via $\Omega$. Consider the endofunctor
\begin{equation}
F:=\Omega^*\circ \Omega_*:R'\fmod\longrightarrow R'\fmod .
\end{equation}
Its concrete action on $R'$-modules is
\begin{equation}
F(M) = R'\otimes_R M,
\end{equation}
where $R$ acts on $R'$ and on $M$ via $\Omega$.

We introduce two degree-$0$ morphisms (natural transformations). The first is a
morphism $\varepsilon$ from $F$ to $\id$ given by multiplication,
\begin{equation}
\varepsilon(M):R'\otimes_R M\to M\ ,\quad a\otimes m\longmapsto a\cdot m \ .
\end{equation}
The second is a morphism $\eta$ from $\id$ to $F$ defined by
\begin{equation}
\eta(M): M \to R'\otimes_R M\ ,\quad
m \longmapsto \Delta\otimes m + 1\otimes (\Delta m)\ ,
\quad \Delta:=x_1-x_2 .
\end{equation}
This is well defined and $R'$-linear since $(a\otimes 1)e=e(1\otimes a)$ for all
$a\in R'$, where $e:=\Delta\otimes 1 + 1\otimes \Delta \in R'\otimes_R R'$. Note that $R'$ is free of rank $2$ as an $R$-module with basis $\{1,\Delta\}$.

Applying $\Pi^{I_{W'}}$ to $F$ yields an operator-like interface matrix
factorization, and $\varepsilon$ and $\eta$ induce morphisms between this
factorization and the identity factorization. Functors of the form
$R'\otimes_R(-)$ together with such maps are closely related to the algebraic
building blocks that appear in the matrix factorization approach to
Khovanov--Rozansky link homology~\cite{Khovanov:2004}; we do not pursue the
explicit identification here.
\end{ex}

Let us summarize the action of operator-like interfaces $U(I_W)$ on boundary (or interface) matrix factorizations. Conceptually, operator-like interfaces form a class closed under fusion: fusing two such interfaces corresponds to composition of the underlying fusion functors. Moreover, the functorial framework provides functorial representatives for a class of interface fields and their vertical/horizontal compositions, which under $\Pi^{I_W}$ induce the standard operations on the matrix-factorization side. The following diagrams focus on the fusion of an operator-like interface (with or without a field insertion) to a boundary/interface factorization (again with or without a field insertion). Since the interfaces considered here are topological, they can be freely deformed on the worldsheet unless operator insertions are hit; in particular we may slide $U(I_W)$ to the boundary (or onto another interface) without changing correlators. The arrow in the following diagrams denotes this fusion operation (composition of interfaces), i.e.\ it maps a factorization $Q$ to the fused factorization
\begin{equation}
Q \longmapsto U(I_W)\boxtimes Q \cong U(Q)\ .
\end{equation}
We label the interfaces $U(I_W)$ by the fusion functor $U$; as before, we use circled symbols to distinguish whether an interface is labelled by a functor or a factorization.
Factorizations in the diagrams are always understood up to isomorphism in the cohomology category of matrix factorizations.
\begin{itemize}\label{listoffigures}
\item \begin{minipage}[t]{6.8cm}
An operator-like interface $U (I_{W})$ (represented by the fusion functor $U$) fused to an interface factorization $Q$ results in the interface factorization $U (Q)$.
\end{minipage}\hspace*{8mm} \begin{minipage}[t]{6cm}\vspace*{-.3cm}
\scalebox{.9}{\includegraphics{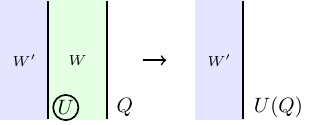}}
\end{minipage}
\item \begin{minipage}[t]{6.8cm}A configuration of operator-like interfaces $V (I_{W})$ and $U (I_{W})$ supporting a morphism $\phi_{UV} (I_{W}\listdots I_{W})$ at their contact point can be fused to $Q$. It results in the factorizations $V (Q)$ and $U (Q)$ with the morphism $\phi_{UV} (Q\listdots Q)$ in between.\end{minipage}\hspace*{8mm} \begin{minipage}[t]{6cm}\vspace*{-.3cm}
\scalebox{.9}{\includegraphics{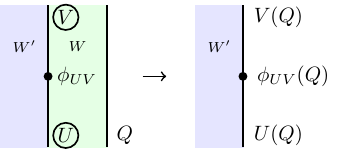}}
\end{minipage}
\item \begin{minipage}[t]{6.8cm}An operator-like interface $U ( I_{W})$ fused to a configuration of interface (boundary) factorizations $Q_{2}$ and $Q_{1}$ joined by a morphism $\varphi$ results in the factorizations $U (Q_{2})$ and $U (Q_{1})$ joined by the morphism $U (\varphi)$.\end{minipage}\hspace*{8mm}\begin{minipage}[t]{6cm}\vspace*{-.3cm}
\scalebox{.9}{\includegraphics{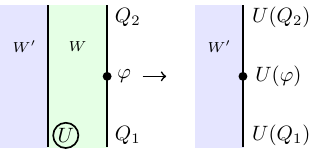}}
\end{minipage}
\item We can also combine the processes above in two different orders as depicted below. We arrive at configurations with the interfaces $V (Q_{2})$ and $U (Q_{1})$ connected by an intermediate interface. In one case it is the interface $U (Q_{2})$ connected by the morphisms $\phi_{UV} (Q_{2}\listdots Q_{2})$ and $U (\varphi)$, in the other case, the linking interface is $V (Q_{1})$ connected by $V (\varphi)$ and $\phi_{UV} (Q_{1}\listdots Q_{1})$.
\begin{center}\scalebox{.9}{\includegraphics{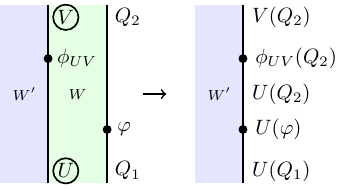}}\ \ \ \ \ \ \ \ \ \ \ \ \scalebox{.9}{\includegraphics{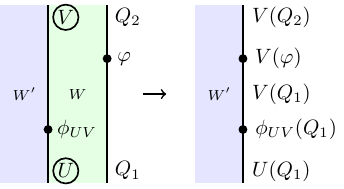}}\end{center}
Composing the connecting morphisms (which means to shrink the intermediate piece) results in a morphism between $V (Q_{2})$ and $U ( Q_{1})$ that agrees in both cases up to a sign if both $\phi_{UV}$ and $\varphi$ have odd degree,
\begin{equation}
\phi_{UV} (Q_{2}\listdots Q_{2})U (\varphi) = (-1)^{nm} V (\varphi ) \phi_{UV} (Q_{1}\listdots Q_{1}) +\text{exact terms}\ ,
\end{equation}
where $n$ is the degree of $\phi_{UV}$ and $m$ the degree of $\varphi$. For even morphisms $\varphi$ this follows from~\eqref{naturalityxi}, the proof for odd $\varphi$ is analogous.
\end{itemize}

\section{Relation between interfaces and fusion functors}\label{sec:relationinterfacesandfusionfunctors}
In the last section, we have seen that a fusion functor $U$ can be related to the interface factorization $U (I_{W})$. This gives rise to a functor $\Pi^{I_{W}}$ from $\HFF_{W',W}$ to $\Hmf_{W',W}$ (see Definition~\ref{def:PiIW}) that maps fusion functors and their morphisms to interface factorizations and their morphisms (interface fields). To better understand the relation of these categories, we need to discuss properties of this functor:
\begin{itemize}
\item In general, $\Pi^{I_{W}}$ is not essentially surjective, which means that there are interface factorizations which cannot be realized in terms of a fusion functor. An extreme example that illustrates this is the situation of a boundary (which can be thought of as an interface between a theory with nontrivial superpotential $W'$ and a trivial theory with $R=\mathbb{C}$ and $W=0$). For a $(W',0)$-fusion functor $U$, we would have 
\begin{equation}
0=U (0\cdot \one )=U (W\cdot \one ) = W' \cdot U (\one )=W' \cdot \one \not=0 \ ,
\end{equation}
a contradiction from which we conclude that no such fusion functors exist. 
\item $\Pi^{I_{W}}$ being a functor implies that isomorphic fusion functors $U\cong \tilde{U}$ are mapped to isomorphic matrix factorizations $U (I_{W})\cong \tilde{U} (I_{W})$. The converse is not true: there are non-isomorphic fusion functors $U,\tilde{U}$ that are mapped to isomorphic matrix factorizations (for an example, see Appendix~\ref{sec:nonisoFF}).
\item $\Pi^{I_{W}}$ is not injective on the morphism spaces (i.e.\ not a faithful functor). This can be most easily seen when we apply it to the identity functors $U=V= \id$. Then $\Pi^{I_{W}} (\id)=I_{W}$ is just the identity factorization, and the space of even (bosonic) morphisms between $I_{W}$ and $I_{W}$ is isomorphic~\cite{Khovanov:2004,Kapustin:2004df,Brunner:2007qu} to the Jacobi ring 
\begin{equation}
J_{W} = \frac{\mathbb{C}[x_{1},\dotsc ,x_{d}]}{\langle\partial_{1}W,\dotsc ,\partial_{d}W\rangle}\quad \text{for}\ R=\mathbb{C}[x_{1},\dotsc ,x_{d}]\ ,
\end{equation}
which is finite-dimensional for suitable $W$. On the other hand, the space of degree-$0$ morphisms between $U=V=\id$ is isomorphic to $R$: recall from Proposition~\ref{prop:spectrumid} that the space of degree-$0$ morphisms for the identity functor is isomorphic to $R$; this also holds for fusion functors because the extra condition~\eqref{morphismfusionfunctors} is empty for degree-$0$ morphisms (see also the discussion in Section~\ref{sec:Idefect} below).
\item As we show below in Section~\ref{sec:onevariable}, $\Pi^{I_{W}}$ is surjective on the space of morphisms (i.e.\ a full functor) in the case of one variable, $R=\mathbb{C}[x]$. It can also be shown to be surjective on the space of odd morphisms in the case of two variables (see Section~\ref{sec:morevariables}). It is not clear to us whether it is surjective in general.
\item The functor $\Pi^{I_{W}}$ is compatible with taking cones,
\begin{equation}\label{compatibilityConeandPi}
\Pi^{I_{W}} \big(\mathcal{C} (\phi_{UV}^{(1)})  \big) =\mathcal{C} \big( \Pi^{I_{W}} (\phi_{UV}^{(1)}) \big) =
\begin{pmatrix}
U (I_{W}) & 0 \\
\phi_{UV}^{(1)} (I_{W}) & V (I_{W})
\end{pmatrix}
 \ ,
\end{equation}
as the resulting matrix factorization on the right has the structure of a cone (see~\eqref{mfcone}).
For polynomial rings in one or two variables, the functor $\Pi^{I_{W}}$ is surjective on odd morphisms (as explained in Sections~\ref{sec:onevariableoddmorphisms} and~\ref{sec:morevariables}). Because the higher odd-degree morphism spaces are trivial in this case, every odd morphism between $U (I_{W})$ and $V (I_{W})$ has a degree-$1$ morphism between $U$ and $V$ as a pre-image. Hence, we can realize any cone on $U (I_{W})$ and $V (I_{W})$ as an operator-like interface for $R=\mathbb{C}[x]$ and $R=\mathbb{C}[x_{1},x_{2}]$.
\end{itemize}

In this section, we explore the relation between interfaces and fusion functors in more detail. We start with a comparison of the spectrum of morphisms for the identity functor and the identity defect, and realize the bulk-interface map in the language of fusion functors. Subsequently, we show in Section~\ref{sec:gradedMF} that also the grading of matrix factorizations has an analogue for fusion functors. Finally, Sections~\ref{sec:onevariable} and~\ref{sec:morevariables} discuss concretely the surjectivity of $\Pi^{I_{W}}$ on the space of morphisms in the case of one variable and on the space of odd morphisms in the case of two variables.

\subsection{Identity defect and bulk-interface map}\label{sec:Idefect}
The identity defect is the simplest operator-like interface, and it is described by the identity functor, $U=\id$. Bulk fields can be represented as endomorphisms of the
identity defect; in particular, the space of degree-$0$ endomorphisms is isomorphic
\cite{Khovanov:2004,Kapustin:2004df,Brunner:2007qu} to the Jacobi ring 
\begin{equation}
H^{0}\big(\Hom_{\MF} (I_{W},I_{W})\big)\cong J_{W}=\frac{R}{\langle \partial_{i}W\rangle}\ .
\end{equation}
On the level of fusion functors, a degree-$0$ morphism $\phi^{(0)}$ from
$\id$ to itself is given by multiplication with a polynomial $p\in R$, i.e.\
$\phi^{(0)}(M)=p\cdot \one_M \in \Hom_R (M,M)$. The condition~\eqref{closuredegreezero} is empty in this case, and hence we obtain
\begin{equation}
H^{0}\big(\Hom^{W,W}  (\id ,\id)\big) \cong R \ .
\end{equation}
This $R$ should be regarded as the algebra of polynomial representatives of
degree-$0$ endomorphisms in the functorial framework. Applying $\Pi^{I_W}$
induces on cohomology the canonical quotient map
$R\rightarrow R/\langle\partial_i W\rangle$,
and in particular a surjection
\begin{equation}
H^{0}\big(\Hom^{W,W}(\id,\id)\big)\xrightarrow{\ \Pi^{I_W}\ }
H^{0}\big(\Hom_{\MF} (I_{W},I_{W})\big)\cong J_W \ .
\end{equation}

One can also consider morphisms of higher degrees for the identity functor. As we have discussed in Section~\ref{sec:structuremorphismspaces}, the degree-$1$ morphisms $\phi^{(1)}$ are characterized by specifying $d$ elements $e_{i}=\phi^{(1)} (x_{i})\in R$, where $d$ is the number of variables of the polynomial ring $R=\mathbb{C}[x_1,\dots,x_d]$. For the identity functors there are no further conditions on the $e_{i}$ from closure or exactness with respect to the differential $d$ on the morphism complex, but there is a non-trivial condition from the requirement that $\phi^{(1)}$ is a morphism of $(W,W)$-fusion functors, $\phi^{(1)} (W\, f)=W\, \phi^{(1)} (f)$ (see~\eqref{morphismfusionfunctors}). For a degree-$1$ morphism this condition is equivalent to $\phi^{(1)} (W)=0$ because $\phi^{(1)}$ satisfies the product property~\eqref{productpropertydegree1}. For the identity functors the product property is just the Leibniz rule, and hence $\phi^{(1)}$ acts as a derivation,
\begin{equation}
\phi^{(1)} (f) = \sum_{i=1}^{d}e_{i}\frac{\partial}{\partial x_{i}}f
\end{equation}
with $e_i=\phi^{(1)}(x_i)$.
The space of degree-$1$ morphisms of the identity fusion functor is then
\begin{equation}
H^{1}\big(\Hom^{W,W}  (\id ,\id)\big) \cong \bigg\{(e_{i})\in R^{d}\,\bigm\vert\, \sum_{i=1}^{d}e_{i}\frac{\partial}{\partial x_{i}}W=0 \bigg\}\ .
\end{equation}
We emphasize that this is the degree-$1$ part of the morphism complex in the fusion-functor framework; we do not attempt here to identify it with the standard bulk cohomology of the B-twisted LG model.
In a similar fashion one can analyze morphisms of higher degree.
\smallskip

For general interfaces, we can also analyze those morphisms that correspond to bulk fields which have been moved to the interface. These morphisms can be identified in the language of fusion functors by interpreting bulk fields as morphisms of identity defects and then using horizontal composition. We have to distinguish two situations depending on the direction from which the bulk field approaches the interface.
In the first case, we move a bulk operator represented (in the functorial formalism) by a polynomial $p\in R$ from the right to the interface specified by the fusion functor $U$. We can view $p$ as representing a degree-$0$ morphism $\phi^{(0)}$ from the identity functor to itself by associating $\phi^{(0)} (M)=p\cdot \one_{M}$ (as in the discussion above). Horizontal composition (see~\eqref{horizontalcompdef}) then leads to a degree-$0$ morphism $\phi_{p}$ of $U$, acting by 
\begin{equation}\label{bulkfieldfromright}
\phi_{p} (M) = U (p\cdot \one_{M})\in \Hom_{R'} (U (M),U (M)) \ .
\end{equation}
This is illustrated in Figure~\ref{fig:bulkfieldfromright}.
Similarly, we can move a bulk field given by $p'\in R'$ from the other side to the interface. Again, we view the morphism as sitting on an identity defect $I_{W'}$ that we move to $U$ from the left. Horizontal composition results in the degree-$0$ morphism
\begin{equation}\label{bulkfieldfromleft}
_{p'}\phi (M) = p'\cdot \one_{U (M)} \in \Hom_{R'} (U (M),U (M))\ , 
\end{equation}
which is illustrated in Figure~\ref{fig:bulkfieldfromleft}.

\begin{figure}
\begin{center}
\scalebox{.9}{\includegraphics{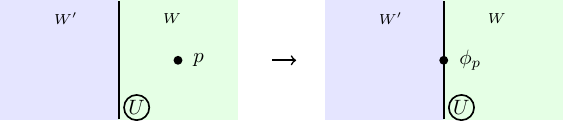}}
\end{center}
\caption{\label{fig:bulkfieldfromright}When a bulk field with polynomial representative $p\in R$ (in the functorial framework) is moved to the interface, it results in the degree-$0$ endomorphism $\phi_p$ of $U$ given by $\phi_p(M)=U(p\cdot \one_M)$, see~\eqref{bulkfieldfromright}.}
\end{figure}
\begin{figure}
\begin{center}
\scalebox{.9}{\includegraphics{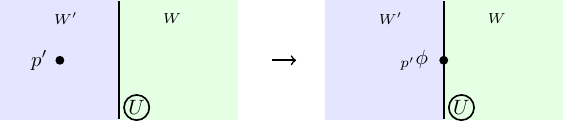}}
\end{center}
\caption{\label{fig:bulkfieldfromleft}When a bulk field with polynomial representative $p'\in R'$ (in the functorial framework) is moved to the interface, it induces the degree-$0$ endomorphism $_{p'}\phi$ of $U$ given by $_{p'}\phi(M)=p'\cdot \one_{U(M)}$, see~\eqref{bulkfieldfromleft}.}
\end{figure}

In this way, the functorial formalism provides a notion of bulk-to-interface transport compatible with fusion; upon applying $\Pi^{I_W}$ it reduces to the usual matrix factorization description where bulk fields are taken modulo the Jacobi relations.

If we have $U (p\cdot \one_M)=p'\cdot \one_{U (M)}$ for some polynomials $p\in R$, $p'\in R'$, it means that the bulk field can be moved through the interface where it changes from $p$ to $p'$. An example for this are fusion functors $U=\omega^{*}$ that arise from a ring homomorphism $\omega:R\to R'$ as the extension of scalars (see Example~\ref{ex:invertiblefunctors}). Here, $p$ and $\omega(p)$ are glued at the interface (i.e.\ they induce the same endomorphism after transport). Note that the superpotentials $W$ and $W'$ are always glued in this sense,
\begin{equation}
    \phi_W (M)= U(W\cdot \one_M) = W'\cdot U(\one_M) = W'\cdot \one_{U(M)} = \, _{W'}\phi (M)\ .
\end{equation}

\begin{ex}\label{ex:KSMM}
Let $R=\mathbb{C}[y_{1},y_{2}]$, $R'=\mathbb{C}[x_{1},x_{2}]$, and $W'=x_{1}^{m}+x_{2}^{m}$. Let $\Omega:R\to R'$ be the homomorphism introduced in Example~\ref{ex:KR}, and let $W\in R$ be the polynomial determined by $\Omega (W)=W'$.
We consider the fusion functor $\Omega^{*}$ (extension of scalars). The Landau--Ginzburg model with superpotential $W'$ flows in the infrared to a product of two minimal models~\cite{Vafa:1988uu}, while the model with superpotential $W$ corresponds~\cite{Gepner:1988wi,Lerche:1989uy} to an $SU (3)/U (2)$ Kazama--Suzuki model~\cite{Kazama:1988uz,Kazama:1989qp}. The fusion functor $\Omega^{*}$ therefore has the interpretation of an interface between a Kazama--Suzuki model and a product of two minimal models~\cite{Behr:2012xg,Behr:2014bta,Behr:2020gqw}.
Any bulk field labelled by a symmetric polynomial $p' (x_{1},x_{2})\in R'$ can be moved through the interface, where it becomes the bulk field labelled by $p\in R$ such that $\Omega (p)=p'$.
\end{ex}

\subsection{Graded matrix factorizations and fusion functors}\label{sec:gradedMF}

Often one is interested in superpotentials $W\in R=\mathbb{C}[x_{1},\dots ,x_{d}]$ that are quasi-homogeneous,\footnote{Such superpotentials correspond in the infrared to superconformal field theories.} i.e.\ there exists an assignment of rational charges $q_{j}\in\mathbb{Q}$ to the
variables $x_{j}$ such that
\begin{equation}\label{Wquasihom}
W\big\vert_{x_{j}\mapsto e^{i\lambda q_{j}}x_{j}} = e^{2i\lambda} \, W  \quad
\text{for all}\ \lambda \in \mathbb{R} \ .
\end{equation}
When we introduce the ring isomorphism $\omega_{\lambda}:R\to R$ by
\begin{equation}
\omega_{\lambda} (x_{j}) = e^{i\lambda q_{j}} x_{j}\ ,
\end{equation}
the condition~\eqref{Wquasihom} becomes
\begin{equation}
\omega_{\lambda} (W) = e^{2i\lambda}\, W \ .
\end{equation}
Associated to $\omega_{\lambda}$ we define the extension of scalars functor $G_{\lambda}=\omega_{\lambda}^{*}$. We can then consider graded matrix factorizations:
\begin{Def}
A graded matrix factorization $(M,\sigma ,Q,\rho^{Q})$ is a matrix factorization together with a family of isomorphisms $\rho^{Q}_{\lambda}$ between $G_{\lambda} (Q)$ and $e^{i\lambda}Q$,
\begin{equation}\label{eq:UoneRfunct}
G_{\lambda} (Q) = \big( \rho^{Q}_{\lambda} \big)^{-1} e^{i\lambda}
Q \, \rho^{Q}_{\lambda} \ ,
\end{equation}
that satisfy
\begin{equation}\label{propertyrhoQ}
\rho^{Q}_{\lambda_{1}} G_{\lambda_{1}} (\rho^{Q}_{\lambda_{2}}) = \rho^{Q}_{\lambda_{1}+\lambda_{2}}\ .
\end{equation}
\end{Def}
The isomorphisms $\rho^{Q}_{\lambda}$ can be used to define charges of
morphisms, which leads to a $\mathbb{Z}$ grading of the spaces of
morphisms \cite{Hori:2004ja,Walcher:2004tx}.

We can define an analogous concept for fusion functors.
\begin{Def}
A graded fusion functor $(U,\rho^{U})$ is a fusion functor $U$ together with a
family of natural isomorphisms $\rho^{U}_{\lambda}$ between the functors $G_{\lambda} \, U\, G_{-\lambda}$ and $U$,
\begin{equation}
\big(G_{\lambda} \, U\, G_{-\lambda}\big) (f) =
\rho^{U}_{\lambda} (M')^{-1}\, U (f)
\, \rho^{U}_{\lambda} (M) \quad \text{for a morphism}\ f:M\to M' \ ,
\end{equation}
that satisfy for all modules $M$
\begin{equation}
\rho^{U}_{\lambda_{1}} (M)\, G_{\lambda_{1}}\!\big(\rho^{U}_{\lambda_{2}} (M) \big) =
\rho^{U}_{\lambda_{1}+\lambda_{2}} (M)\ .
\end{equation}
\end{Def}
On the morphism space $\Hom^{(W',W)}(U,V)$ between graded fusion functors, one then has an action of $\mathbb{R}$ defined by
\begin{equation}
R_{\lambda} (\phi^{(n)}_{UV}) (f_{n} \listdots f_{1}):= \rho^{V}_{\lambda} (M_{n+1})\,G_{\lambda}\big(\phi^{(n)}_{UV} (G_{-\lambda} (f_{n}) \listdots G_{-\lambda} (f_{1})) \big)\,\rho^{U}_{\lambda} (M_{1})^{-1}\ .
\end{equation}
It is straightforward to check that this action is compatible with the differential,
\begin{equation}
dR_{\lambda} = R_{\lambda}d \ ,
\end{equation}
and that it satisfies
\begin{equation}
R_{\lambda_{1}}R_{\lambda_{2}} = R_{\lambda_{1}+\lambda_{2}}\ .
\end{equation}
In this way one can define a $\mathbb{Z}$-grading on the space of morphisms.\footnote{If $N_{0}$ is the smallest positive integer with $N_{0}q_{i}\in \mathbb{Z}$ for all $x_{i}$, then $\omega_{2\pi N_{0}}$ is the identity and all charges of morphisms multiplied by $N_{0}$ are integers.}

\begin{ex}
Let $R=\mathbb{C}[x]$ and $W=x^{n}$, so that we can assign the charge $q=\frac{2}{n}$ to $x$. Consider the morphism $\phi^{(1)}_{G_{\xi_{1}}G_{\xi_{2}}}$ between the functors $G_{\xi_{1}}$ and $G_{\xi_{2}}$ from Example~\ref{ex:degree1morphism} with $\xi_{1}\not = \xi_{2}$. $G_{\xi_{i}}$ are graded fusion functors with $\rho^{G_{\xi_{i}}}_{\lambda}=\one$. Then we have
\begin{equation}
R_{\lambda} (\phi^{(1)}_{G_{\xi_{1}}G_{\xi_{2}}}) = e^{-iq\lambda}\, \phi^{(1)}_{G_{\xi_{1}}G_{\xi_{2}}}\ .
\end{equation}
The morphism $\phi^{(1)}_{G_{\xi_{1}}G_{\xi_{2}}}$ therefore has the rational charge $q$.

The fusion functor $\mathcal{C}=\mathcal{C} (\phi^{(1)}_{G_{\xi_{1}}G_{\xi_{2}}})$ that we obtain as a cone (see~\eqref{coneexample}) from $\phi^{(1)}_{G_{\xi_{1}}G_{\xi_{2}}}$ is again a graded fusion functor: for $f:M\to N$ we have
\begin{align}
\big(G_{\lambda} \, \mathcal{C}\, G_{-\lambda}\big) (f) &= \begin{pmatrix}
G_{\xi_{1}} (f) & 0\\
e^{-i\lambda q} \frac{1}{x} (G_{\xi_{1}}-G_{\xi_{2}}) (f) & G_{\xi_{2}} (f) \end{pmatrix}\\
&=  \begin{pmatrix}
e^{i\lambda q/2} & 0\\
0& e^{-i\lambda q/2} 
\end{pmatrix} \begin{pmatrix}
G_{\xi_{1}} (f) & 0\\
\frac{1}{x}(G_{\xi_{1}}-G_{\xi_{2}}) (f)  & G_{\xi_{2}} (f)
\end{pmatrix} \begin{pmatrix}
e^{-i\lambda q/2} & 0\\
0& e^{i\lambda q/2} 
\end{pmatrix}\ ,
\end{align}
therefore we can associate to $\mathcal{C}$ the natural transformations
\begin{equation}
\rho^{\mathcal{C}}_{\lambda} (M) =\begin{pmatrix}
e^{-i\lambda q/2}\cdot \one  & 0\\
0 & e^{i\lambda q/2}\cdot \one 
\end{pmatrix} \ .
\end{equation}
\end{ex}

A graded fusion functor $(U,\rho^{U})$ maps a graded matrix factorization $(M,\sigma ,Q,\rho^{Q})$ to a graded matrix factorization:
\begin{equation}
(U,\rho_{\lambda}^{U}): (M,\sigma ,Q,\rho_{\lambda}^{Q}) \mapsto \big(U (M),U (\sigma),U (Q), U (\rho_{\lambda}^{Q})\rho^{U}_{\lambda} (M)\big)\ .
\end{equation}
To check that this is indeed a graded matrix factorization, we consider the transformation of $U (Q)$:
\begin{align}
G_{\lambda} \big(U (Q)\big) &= G_{\lambda}\, U\, G_{-\lambda}
\big(G_{\lambda} (Q) \big)\nonumber\\
&= \rho^{U}_{\lambda} (M)^{-1} \, U\big(G_{\lambda} (Q)
\big) \, \rho^{U}_{\lambda} (M) \\
&= \rho^{U}_{\lambda} (M)^{-1}\, U\big(e^{i\lambda}
(\rho^{Q}_{\lambda})^{-1} Q \rho^{Q}_{\lambda}
\big) \, \rho^{U}_{\lambda} (M) \\
&= e^{i\lambda} \rho^{U}_{\lambda} (M)^{-1}U\big(\rho^{Q}_{\lambda}\big)^{-1}\, U(Q) 
\, U\big(\rho^{Q}_{\lambda}\big)\rho^{U}_{\lambda} (M)
\ .
\end{align}
It is also straightforward to check the property~\eqref{propertyrhoQ}:
\begin{align}
& U\big(\rho^{Q}_{\lambda_{1}}\big)\rho^{U}_{\lambda_{1}} (M)\cdot G_{\lambda_{1}}\Big(   U\big(\rho^{Q}_{\lambda_{2}}\big)\rho^{U}_{\lambda_{2}} (M)\Big) \nonumber\\
&\qquad = U\big(\rho^{Q}_{\lambda_{1}}\big)\rho^{U}_{\lambda_{1}} (M)\,\big( G_{\lambda_{1}}UG_{\lambda_{1}^{-1}}\big)\big( G_{\lambda_{1}} (\rho^{Q}_{\lambda_{2}})\big) G_{\lambda_{1}}\big(\rho^{U}_{\lambda_{2}} (M)\big)\\
&\qquad = U\big(\rho^{Q}_{\lambda_{1}}\big)\, U\big(G_{\lambda_{1}} (\rho_{\lambda_{2}}^{Q}) \big)\,\rho_{\lambda_{1}}^{U} (M)\, G_{\lambda_{1}}\big(\rho^{U}_{\lambda_{2}} (M)\big)\\
&\qquad = U (\rho_{\lambda_{1}+\lambda_{2}}^{Q})\,\rho_{\lambda_{1}+\lambda_{2}}^{U} (M)\ .
\end{align}
The composition $U_{2}U_{1}$ of two fusion functors can also be extended to graded fusion functors. For any homomorphism $f:M\to M'$ we have
\begin{align}
&\big(G_{\lambda}U_{2}U_{1}G_{-\lambda} \big) (f) = G_{\lambda}U_{2}G_{-\lambda} \big(G_{\lambda}U_{1}G_{-\lambda} (f) \big)\\
&\qquad = \rho_{\lambda}^{U_{2}}\big(U_{1} (M') \big)^{-1}U_{2}\big(G_{\lambda}U_{1}G_{-\lambda} (f) \big)\rho^{U_{2}}_{\lambda}\big(U_{1} (M) \big) \\
&\qquad = \rho_{\lambda}^{U_{2}}\big(U_{1} (M') \big)^{-1} U_{2}\big(\rho_{\lambda}^{U_{1}} (M')^{-1}U_{1} (f)\rho_{\lambda}^{U_{1}} (M) \big) \rho_{\lambda}^{U_{2}}\big(U_{1} (M) \big)\\
&\qquad = \Big(U_{2}\big(\rho_{\lambda}^{U_{1}} (M') \big)\rho_{\lambda}^{U_{2}}\big(U_{1} (M') \big) \Big)^{-1} U_{2}U_{1} (f)\, U_{2}\big(\rho_{\lambda}^{U_{1}} (M) \big) \rho_{\lambda}^{U_{2}}\big(U_{1} (M) \big)\ .
\end{align}
From this result, it follows that we can define a composition of graded fusion functors (denoted by a tensor product symbol) by
\begin{align}
(U_{2},\rho^{U_{2}}_{\lambda}) \otimes (U_{1},\rho^{U_{1}}_{\lambda}) &= \big(U_{2}U_{1}, \rho^{U_{2}\otimes U_{1}}_{\lambda } \big)\\
\text{with}\ \rho^{U_{2}\otimes U_{1}}_{\lambda} (M)&= U_{2}\big(\rho^{U_{1}}_{\lambda} (M) \big) \rho_{\lambda}^{U_{2}}\big(U_{1} (M) \big) \ .
\end{align}
The composition of $\rho_{\lambda}^{U_{2}}$ and $\rho_{\lambda}^{U_{1}}$ can be viewed as a horizontal composition (as discussed in Section~\ref{sec:horizontal}) of morphisms between functors $G_{\lambda}U_{i}G_{-\lambda}$ and $U_{i}$. 

\subsection{The case of one variable}\label{sec:onevariable}
In this subsection, we consider polynomial rings $R=\mathbb{C}[x]$ in one variable, and show that in this case the functor $\Pi^{I_{W}}$ is surjective on morphisms. Let $W\in R$ be a polynomial. The matrix factorization for the identity defect can be
represented as (see, e.g., \cite{Khovanov:2004,Kapustin:2004df})
\begin{equation}
I_{W} = \begin{pmatrix}
0 & x-\tilde{x}\\
\frac{W (x)-W (\tilde{x})}{x-\tilde{x}} & 0
\end{pmatrix} \ ,
\end{equation}
acting on the graded, free rank-two $(R,R)$-bimodule
\begin{equation}
M_{I_{W}}
= R\otimes_{\mathbb{C}} R
\oplus R\otimes_{\mathbb{C}} R\ \ ,\quad \sigma_{I_{W}}= \begin{pmatrix}
1 & 0\\
0 & -1
\end{pmatrix} \ .
\end{equation}
We represent the variable of the second factor $R$ by $\tilde{x}$. Let 
\begin{equation}
Q=\begin{pmatrix}0 & Q^{(1)}\\ Q^{(0)} & 0 \end{pmatrix}
\end{equation}
be a matrix factorization on the module $M_{Q}=M_{Q,0}\oplus M_{Q,1}$ which we decompose explicitly according to the $\mathbb{Z}_{2}$-grading.
We decompose the module $M_{I_{W}}\otimes_{R}M_{Q}$ as
\begin{equation}
M_{I_{W}}\otimes_{R}M_{Q} = \big( R\otimes_{\bC} M_{Q,0} \oplus
R\otimes_{\bC}
M_{Q,1} \big) \oplus \big( R\otimes_{\bC} M_{Q,0} \oplus R\otimes_{\bC}
M_{Q,1}\big)
\end{equation}
in an even and an odd part. For later considerations, it is useful to realize $M_{Q}\cong R\otimes_{R}M_{Q}$ in a tensor product form.

For a given $Q$ we can give concrete homomorphisms $\lambda_{Q}$ and $\lambda_{Q}^{-1}$ \cite{Carqueville:2009ev,Carqueville:2012st} that realize the isomorphism between $Q$ and $I_{W}\boxtimes Q$:
\begin{align}
\lambda_{Q} &: M_{I_{W}}\otimes_{R} M_{Q} \to R\otimes_{R}M_{Q}\\
\lambda_{Q} &= \begin{pmatrix}
\mu\otimes \one_{M_{Q,0}}   & 0 & 0 & 0\\
0 & 0 & 0 & \mu\otimes \one_{M_{Q,1}} 
\end{pmatrix}\\
\lambda_{Q}^{-1} &: R\otimes_{R} M_{Q} \to 
M_{I_{W}}\otimes_{R} M_{Q}\\
\lambda_{Q}^{-1} &= \begin{pmatrix}
\one & 0 \\
\frac{Q^{(0)}(x)-Q^{(0)}(\tilde{x})}{x-\tilde{x}} & 0\\
0 & \frac{Q^{(1)}(x)-Q^{(1)}(\tilde{x})}{x-\tilde{x}}\\
0& \one
\end{pmatrix}\begin{pmatrix}
\iota_{M_{Q,0}} & 0\\
0 & \iota_{M_{Q,1}}
\end{pmatrix}   \ .\label{deflambdaminusone}
\end{align}
Here, $\mu:R\otimes_{\mathbb{C}}R\to R$ is the multiplication in $R$,
\begin{equation}
\mu (p\otimes q)=pq\ .
\end{equation} 
The definition of $\iota_{M}:R\otimes_{R}M\to R\otimes_{\mathbb{C}}R\otimes_{R}M$ depends on a choice of basis $\{e_{j} \}$ for the free finite-rank module $M$,
\begin{equation}
\iota_{M} \big(p (x)\otimes q (x) e_{j} \big) = p (x)q (x) \otimes 1 \otimes  e_{j}\ .
\end{equation}
Also the expressions $Q(x)$ and $Q(\tilde{x})$ that appear in the definition~\eqref{deflambdaminusone} of $\lambda_{Q}^{-1}$ depend on the chosen basis. After the choice of a basis, we can represent $Q:R\otimes_{R}M_{Q}\to R\otimes_{R}M_{Q}$ as a matrix with polynomial entries,
\begin{equation}
Q (1\otimes e_{i}) = Q_{ji} (x)\otimes e_{j} \ ,
\end{equation}
where a sum over $j$ is understood. Then $Q (x)$ and $Q (\tilde{x})$ denote the $R\otimes_{\mathbb{C}} R$-homomor\-phisms from $R\otimes_{\mathbb{C}} R \otimes_{R}M_{Q}$ to itself satisfying
\begin{align}
Q (x): 1\otimes 1\otimes e_{i}&\mapsto Q_{ji} (x)\otimes 1 \otimes e_{j}\\
Q (\tilde{x}) : 1\otimes 1\otimes e_{i}&\mapsto 1\otimes Q_{ji} (\tilde{x})\otimes e_{j} \ .
\end{align}
The dependence on the basis reflects the fact that $\lambda_{Q}^{-1}$ is not unique. We can use any realization of $\lambda_{Q}^{-1}$, and in the following we assume that we have fixed the basis and chosen $\iota$ and the matrix representations $Q (x)$ and $Q (\tilde{x})$ accordingly. 
\smallskip

We now investigate whether $\Pi^{I_{W}}$ is surjective on morphisms. Let 
$\varphi$ be a morphism between the factorizations $U(I_{W})$ and $V(I_{W})$. We want to construct a morphism $\phi_{UV}$ such that $\Pi^{I_{W}} (\phi_{UV})=\varphi$.

The strategy is as follows. We start with the morphism $\varphi$ and
determine the induced morphism $\varphi_{Q}$ between $U (Q)\cong  U(I_{W})\boxtimes Q$
and $V (Q)\cong  V(I_{W})\boxtimes Q$ for any matrix factorization
$Q$. From this expression we read off a morphism $\phi_{UV}$
between $U$ and $V$ such that $\phi_{UV}(Q,\dotsc ,Q)=\varphi_{Q}$.

\subsubsection{Even morphisms}\label{sec:evenmorphisms}
Let us start with a bosonic morphism $\varphi$ between $U(I_{W})$ and
$V(I_{W})$. It is represented by a $\delta_{U (I_{W}),V (I_{W})}$-closed homomorphism from $U (I_{W})$ to $V (I_{W})$,
\begin{equation}\label{varphiclosed}
V(I_{W}) \, \varphi - \varphi \, U(I_{W})   = 0 \ .
\end{equation}
The induced morphism between $U(I_{W})\boxtimes Q$ and $V(I_{W})\boxtimes
Q$ is $\varphi \otimes \one$. By the maps
$V(\lambda_{Q})$ and $U(\lambda_{Q}^{-1})$ we can determine the
induced morphism between $U(Q)$ and $V(Q)$,
\begin{equation}
\varphi_{Q} = V(\lambda_{Q}) \, \big(\varphi \otimes \one_{M_{Q}} \big)
\, U(\lambda_{Q}^{-1}) \ .
\end{equation}
We write the original morphism as 
\begin{equation}
\varphi =\begin{pmatrix}
\varphi^{(0)} & 0\\
0 & \varphi^{(1)}
\end{pmatrix} \ ,
\end{equation}
where we understand $\varphi^{(i)}$ as maps $\varphi^{(i)}:U(R)\otimes_{\mathbb{C}}R\to V(R)\otimes_{\mathbb{C}}R$.
We get
\begin{align}
\varphi_{Q} &= \begin{pmatrix}
V (\mu)\otimes \one & 0 & 0 & 0\\
0 & 0 & 0 & V (\mu)\otimes \one
\end{pmatrix} \begin{pmatrix}
\varphi^{(0)}\otimes \one & & & \\
& \varphi^{(1)}\otimes \one & & \\
& & \varphi^{(1)}\otimes \one & \\
& & & \varphi^{(0)}\otimes \one
\end{pmatrix} \nonumber\\
&\qquad  \begin{pmatrix}
\one  & 0\\
U \big(\frac{Q^{(0)}(x)-Q^{(0)} (\tilde{x})}{x-\tilde{x}}\big) & 0 \\
0 & U \big(\frac{Q^{(1)} (x)-Q^{(1)} (\tilde{x})}{x-\tilde{x}}\big) \\
0 & \one 
\end{pmatrix} \begin{pmatrix}
U (\iota_{M_{Q,0}}) & 0\\
0 & U (\iota_{M_{Q,1}})
\end{pmatrix}\\
& = \begin{pmatrix}
\big( V(\mu)\otimes \one \big) \, \big( \varphi^{(0)}\otimes \one\big)\, U(\iota_{M_{Q,0}}) & 0\\
0 & \big( V(\mu)\otimes \one \big) \, \big(\varphi^{(0)}\otimes \one\big)\, U(\iota_{M_{Q,1}})
\end{pmatrix} \ .
\end{align}
We observe that $\varphi_{Q}$ is independent of the precise matrix $Q$, only the module
$M_{Q}=M_{Q,0}\oplus M_{Q,1}$ enters. It is then natural to define a
degree-$0$ morphism $\phi_{UV}$ between the functors $U$ and $V$ by
\begin{equation}\label{defofPhi}
\phi_{UV} (R\otimes_{R}M) = 
\big( V(\mu)\otimes \one\big) \, \big( \varphi^{(0)}\otimes \one\big)\, U(\iota_{M}):\ U (R)\otimes_{R}M\to V (R)\otimes_{R}M \ ,
\end{equation}
which reproduces $\varphi_{Q}$ when we apply it to $(M_{Q},\sigma_{Q},Q)$. This definition seems to depend on a chosen basis $\{e_{i} \}$ of $M$ because $\iota_{M}$ does, but we will see in a moment that this is not the case.  Concretely, we have
\begin{align}
U (\iota_{M}) : U (R) \otimes_{R} M &\longrightarrow U (R)\otimes_{\mathbb{C}}R\otimes_{R}M \nonumber\\
u\otimes p (x)e_{j} & \longmapsto U (p (x))u \otimes 1\otimes  e_{j}\ .
\end{align}
On the other hand, the map $V (\mu)$ is independent of the basis,
\begin{align}
V (\mu): V (R)\otimes_{\mathbb{C}}R &\longrightarrow  V (R)\nonumber\\
v \otimes p (\tilde{x}) & \longmapsto V (p(x))v \ .
\end{align}
Evaluating $\phi_{UV} (R\otimes_{R}M)$ on an element $u\otimes e_{i}$, we find
\begin{equation}\label{phiUVonbasis}
\phi_{UV} (R\otimes_{R}M) (u\otimes e_{i}) = V (\mu)\varphi^{(0)} (u\otimes 1) \otimes e_{i}\ .
\end{equation}
Let $\{\tilde{e}_{j} \}$ be another basis with $\tilde{e}_{j}=s_{ij}e_{i}$ where $s_{ij}\in R$ (summation over $i$ is understood). Then $u\otimes \tilde{e}_{j}=U (s_{ij})u\otimes e_{i}$ is mapped to
\begin{equation}
\phi_{UV} (R\otimes_{R}M) (u\otimes \tilde{e}_{j}) =  V (\mu)\varphi^{(0)} (U (s_{ij}) u\otimes 1) \otimes e_{i} \ .
\end{equation}
As we will show shortly, $\varphi^{(0)}$ satisfies for any $p\in R$
\begin{equation}\label{propertyofvarphi}
V (\mu) \varphi^{(0)} \, \big( U (p)\otimes 1\big) = V (p) V (\mu) \varphi^{(0)}\ .
\end{equation}
Therefore we have 
\begin{align}
\phi_{UV} (R\otimes_{R}M) (u\otimes \tilde{e}_{j}) &=  V (s_{ij}) V (\mu)\varphi^{(0)} (u\otimes 1) \otimes e_{i}\\
&=V (\mu)\varphi^{(0)} (u\otimes 1) \otimes s_{ij}e_{i} \\
&=V (\mu)\varphi^{(0)} (u\otimes 1) \otimes \tilde{e}_{j}\ .
\end{align}
Comparing this to~\eqref{phiUVonbasis} we see that the definition~\eqref{defofPhi} is independent of the basis.

We need to prove~\eqref{propertyofvarphi}. For this we have to recall that $\varphi^{(0)}$ is not arbitrary, but part of a homomorphism $\varphi$ that is
closed with respect to $\delta_{U(I_{W}),V(I_{W})}$ (satisfying~\eqref{varphiclosed}).
This implies the equation
\begin{equation}\label{varphiclosedexplicit}
\varphi^{(0)} \, U (x\otimes 1-1\otimes \tilde{x}) = V (x\otimes 1-1\otimes \tilde{x}) \, \varphi^{(1)} \ .
\end{equation}
As before for clarity we denote by $x$ and $\tilde{x}$ the
variables of the first and second factor of $R$, respectively. The above is a map from $U (R)\otimes_{\mathbb{C}}R$ to $V (R)\otimes_{\mathbb{C}} R$. When we apply $V (\mu)$ from the left, we observe that the right-hand side vanishes, and we get
\begin{equation}
0= V (\mu) \varphi^{(0)} \, \big( U (x)\otimes 1 -\one \otimes \tilde{x}\big)\ .
\end{equation}
Rearranging this equation leads to
\begin{align}
V (\mu) \varphi^{(0)} \, \big( U (x)\otimes 1\big) &= V (\mu) \varphi^{(0)} \, (\one \otimes \tilde{x}) \\
&=  V (\mu) (\one \otimes \tilde{x})\,\varphi^{(0)}\\
&= V (x) V (\mu) \varphi^{(0)}\ .
\end{align}
Because of the functorial property of $U$ and $V$, we conclude that this holds when we replace $x$ by any polynomial $p\in R$, which proves~\eqref{propertyofvarphi}.

We still have to check that the definition~\eqref{defofPhi} for $\phi_{UV}$ leads to a morphism that is
$d$-closed. Let $g:M\to N$ be a homomorphism of $R$-modules. Choosing a basis $\{e_{i} \}$ for $M$ and $\{f_{j} \}$ for $N$ we can represent $g$ by a matrix with entries $g_{ij}$ in $R$,
\begin{equation}
g (e_{i}) =  g_{ji}f_{j}\ ,
\end{equation}
where a sum over $j$ is understood.
We now apply $d\phi_{UV}$ to $1\otimes g:R\otimes_{R}M\to R\otimes_{R}N$:
\begin{align}
&d\phi_{UV} (1\otimes g)\nonumber\\
 &\quad = V\big(1 \otimes g\big)\,
\phi_{UV}(M)- \phi_{UV}(N)\, U\big(1 \otimes g \big) \\
&\quad =\big(\one \otimes g \big)\,\big(V (\mu)\otimes \one  \big) \, (\varphi^{(0)}\otimes \one)U (\iota_{M}) - \big(V (\mu)\otimes \one \big)\, (\varphi^{(0)}\otimes \one) U (\iota_{N})\,\big(\one \otimes g \big)\ .\label{dphievaluated}
\end{align}
When we evaluate this on $u\otimes e_{i}$, we obtain
\begin{align}
d\phi_{UV} (1\otimes g) (u\otimes e_{i})
 &= \big(\one \otimes g \big)\,\big(V (\mu)\otimes \one  \big) \, (\varphi^{(0)}\otimes \one)U (\iota_{M}) (u\otimes e_{i}) \nonumber\\
&\qquad - \big(V (\mu)\otimes \one \big)\, (\varphi^{(0)}\otimes \one) U (\iota_{N})\,\big(\one \otimes g \big) (u\otimes e_{i})\\
&= V (\mu)\varphi^{(0)} (u\otimes 1)\otimes g_{ji}f_{j} \nonumber\\
&\qquad - \big(V (\mu)\otimes \one \big)\, (\varphi^{(0)}\otimes \one) U (\iota_{N})\,\big(U (g_{ji})u \otimes f_{j} \big)\\
&= V (g_{ji})V (\mu)\varphi^{(0)} (u\otimes 1)\otimes f_{j} \nonumber\\
&\qquad - V (\mu) \varphi^{(0)}\big(U (g_{ji})u\otimes 1 \big)\otimes f_{j} \ .
\end{align}
This vanishes because of~\eqref{propertyofvarphi}, hence $\phi_{UV}$ is closed. 
\smallskip

We have thus shown by explicit construction that -- for the case of one
variable -- for any bosonic morphism $\varphi$ between $U(I_{W})$ and $V(I_{W})$, there is a closed degree-$0$ morphism $\phi_{UV}$ between the functors $U$ and $V$ such that
\begin{equation}
\phi_{UV} (M_{Q}) = \varphi_{Q}\ .
\end{equation}
In particular, $\phi_{UV} (M_{I_{W}})$ coincides with $\varphi$ up to an exact term as one can check explicitly (see~Appendix~\ref{sec:explicitcheck}).

\subsubsection{Odd morphisms}\label{sec:onevariableoddmorphisms}
We now analyze fermionic morphisms for $R=\mathbb{C}[x]$. Let $\psi$ be a fermionic morphism
between $U(I_{W})$ and $V(I_{W})$, i.e.\ it is a homomorphism
\begin{equation}
\psi = \begin{pmatrix}
0 & \psi^{(1)}\\
\psi^{(0)} & 0 
\end{pmatrix} : U(R)\otimes_{\bC}R\,\oplus \,U(R)\otimes_{\bC}R \to
V(R)\otimes_{\bC}R \, \oplus \, V(R)\otimes_{\bC}R \ , 
\end{equation}
satisfying
\begin{equation}
V(I_{W}) \, \psi+ \psi \, U(I_{W})  = 0 \ .
\end{equation}
It induces a morphism $\psi \otimes \one$ between
$U(I_{W})\boxtimes Q$ and $V(I_{W})\boxtimes Q$ for any matrix
factorization $Q$. Using the maps $U(\lambda_{Q}^{-1})$ and
$V(\lambda_{Q})$ we can determine the induced morphism $\psi_{Q}$
between $U(Q)$ and $V(Q)$,
\begin{align}
\psi_{Q} &= \begin{pmatrix}
V (\mu)\otimes \one  & 0 & 0 & 0\\
0 & 0 & 0 & V (\mu)\otimes \one 
\end{pmatrix} \, \begin{pmatrix}
0 & 0 & \psi^{(1)}\otimes \one & 0 \\
0 & 0 & 0 & \psi^{(0)}\otimes \one \\
\psi^{(0)}\otimes \one & 0 & 0 & 0\\
0 & \psi^{(1)}\otimes \one & 0 & 0
\end{pmatrix} \nonumber\\
&\qquad \, \begin{pmatrix}
\one & 0\\
U\big(\frac{Q^{(0)}(x)-Q^{(0)} (\tilde{x})}{x-\tilde{x}}\big) & 0 \\
0 & U\big(\frac{Q^{(1)} (x)-Q^{(1)} (\tilde{x})}{x-\tilde{x}} \big) \\
0 & \one
\end{pmatrix}\begin{pmatrix}
U (\iota_{M_{Q,0}}) & 0 \\
0 & U (\iota_{M_{Q,1}})
\end{pmatrix}\\
& = \begin{pmatrix}
0&\psi_{Q}^{(1)}\\
\psi_{Q}^{(0)} & 0
\end{pmatrix}\ ,
\end{align}
where
\begin{equation}
\psi_{Q}^{(i)} = (V (\mu)\psi^{(1)}\otimes \one)U\big(\tfrac{Q^{(i)}(x)-Q^{(i)}(\tilde{x})}{x-\tilde{x}} \iota_{M_{Q,i}} \big) \quad (i=0,1)\ .
\end{equation}
This expression suggests defining a morphism $\phi_{UV}$ of degree $1$ between the
functors $U$ and $V$ in such a way that a homomorphism $f:R\otimes_{R}M\to R\otimes_{R}N$ is mapped to the homomorphism
\begin{equation}
\phi_{UV}(f) = \big(V(\mu) \psi^{(1)}\otimes \one \big)
\, U\Big(\tfrac{f(x)-f(\tilde{x})}{x-\tilde{x}}\iota_{M} \Big) : 
U(R)\otimes_{R}M\to V(R)\otimes_{R}N \ .
\end{equation}
Note that the definition depends on the choice of a basis for $M$ through $\iota_{M}$. Because the morphism $\phi_{UV}$ is closed (as we check below), the cohomology class of $\phi_{UV}$ does not depend on the choice of basis (which is a consequence of Lemma~\ref{lem:phideterminedbyRendomorphisms}).

We now check that $\phi_{UV}$ is closed with respect to
$d$. For two homomorphisms $f:M\to N$, $g:N\to O$ we have (to declutter notation we
abbreviate $\psi^{(1)}\otimes \one$ by $\psi^{(1)}$ in the
following and also suppress the map $\iota$)
\begin{align}
d\phi_{UV}(g,f) &= V(g) \, \phi_{UV}(f)- \phi_{UV}(g\, f) +\phi_{UV}(g)\, U(f)
\nonumber\\
&= V(g)\, V(\mu) \, \psi^{(1)}\, 
U\Big(\tfrac{f(x)-f(\tilde{x})}{x-\tilde{x}} \Big) - 
V(\mu)\, \psi^{(1)}\, 
U\Big(\tfrac{g (x)\, f(x)-g (\tilde{x})\, f(\tilde{x})}{x-\tilde{x}}\Big)\nonumber\\
& \quad + V(\mu)\, \psi^{(1)}\, 
U\Big(\tfrac{g(x)-g(\tilde{x})}{x-\tilde{x}} \Big)\, U(f)\\
& = V(g)\, V (\mu) \, \psi^{(1)} \, 
U\Big(\tfrac{f(x)-f(\tilde{x})}{x-\tilde{x}} \Big)\nonumber\\
&\quad -V(\mu) \, \psi^{(1)}\, U\Big(g(\tilde{x})
\tfrac{f(x)-f (\tilde{x})}{x-\tilde{x}} \Big)\\
& = 0 \ .
\end{align}
We also need to check that $\phi_{UV}$ has the right behaviour with respect
to multiplying the morphism $f$ by the superpotential $W$,
\begin{equation}\label{WlinearityofPsi}
\phi_{UV} (W\, f) = W'\, \phi_{UV} (f) \ .
\end{equation}
To show this we have to recall that $\psi^{(1)}$ is part of a homomorphism $\psi$ that is closed
with respect to $\delta_{U(I_{W}),V(I_{W})}$. In particular, this means
that
\begin{equation}
V(x-\tilde{x})\, \psi^{(0)}+\psi^{(1)}\, U\Big(\tfrac{W(x)-W(\tilde{x})}{x-\tilde{x}} \Big)
 = 0\ .
\end{equation}
When we apply $V(\mu)$ from the left, the first term vanishes so that
we obtain the relation
\begin{equation}
V (\mu) \, \psi^{(1)} \, 
U\Big(\tfrac{W(x)-W(\tilde{x})}{x-\tilde{x}} \Big) = 0\ .
\end{equation}
Using this result we can now show that~\eqref{WlinearityofPsi} holds:
\begin{align}
\phi_{UV} (W\, f) &= V(\mu) \, \psi^{(1)} \, 
U\Big(\tfrac{W(x)f(x)-W(\tilde{x})f(\tilde{x})}{x-\tilde{x}}\Big)\nonumber\\
&= V(\mu) \, \psi^{(1)} \, 
U\Big(\tfrac{W(x)-W(\tilde{x})}{x-\tilde{x}} f(x) +
W(\tilde{x})\tfrac{f(x)-f(\tilde{x})}{x-\tilde{x}} \Big)\nonumber\\
&= V(W\cdot \one )\,  V(\mu)\, \psi^{(1)} \, 
U\Big(\tfrac{f(x)-f(\tilde{x})}{x-\tilde{x}} \Big) \nonumber\\
&= W' \, V(\mu) \, \psi^{(1)}\, 
 U\Big(\tfrac{f(x)-f(\tilde{x})}{x-\tilde{x}} \Big) \ ,
\end{align}
where in the last step we used that $V(W\cdot \one )=W'\cdot\one$.
This shows that indeed $\phi_{UV}$ is a valid degree-$1$ morphism between
the fusion functors $U$ and $V$. 
\smallskip

The construction above shows that for any odd homomorphism $\psi$ between $U (I_{W})$ and $V (I_{W})$ there exists a degree-$1$ morphism $\phi_{UV}$ between $U$ and $V$ such that for any matrix factorization $Q$ we have
\begin{equation}
\phi_{UV} (Q) = \psi_{Q}\ .
\end{equation} 
We conclude that in the case of one variable, every
morphism (even or odd) between operator-like defects can be represented by a
morphism between the corresponding fusion functors. As we have seen, one can give explicit
constructions for such morphisms.

\subsection{More variables}\label{sec:morevariables}
The explicit analysis becomes more involved when we consider polynomial rings with $n>1$ variables. When one tries to represent any given even morphism between operator-like defects by a morphism between fusion functors, one has to face the problem that they could arise from a combination of morphisms of degree $0,2,\dotsc ,2\lfloor\frac{n}{2} \rfloor $. Analogously, odd morphisms descend from morphisms of degree $1,3,\dotsc ,2\lfloor\frac{n-1}{2} \rfloor +1$. The only situation -- apart from the case of one variable -- where only one degree contributes occurs when we consider odd morphisms for polynomial rings in two variables. In this case, the analysis is similar to the case of odd morphisms for one variable, and we sketch it in the following.

Let $R=\mathbb{C}[x_{1},x_{2}]$, and let $U,V$ be $(W',W)$-fusion functors for $W,W'\in R$. The identity defect can be represented by the matrix factorization \cite{Khovanov:2004,Carqueville:2009ev}
\begin{equation}
I_{W}= \begin{pmatrix}
0 & 0 & (x_{1}-\tilde{x}_{1}) & (x_{2}-\tilde{x}_{2})\\
0 & 0 & -\frac{W (\tilde{x}_{1},x_{2})-W (\tilde{x}_{1},\tilde{x}_{2})}{x_{2}-\tilde{x}_{2}} & \frac{W (x_{1},x_{2})-W (\tilde{x}_{1},x_{2})}{x_{1}-\tilde{x}_{1}}\\
\frac{W (x_{1},x_{2})-W (\tilde{x}_{1},x_{2})}{x_{1}-\tilde{x}_{1}} & - (x_{2}-\tilde{x}_{2}) & 0 & 0\\
\frac{W (\tilde{x}_{1},x_{2})-W (\tilde{x}_{1},\tilde{x}_{2})}{x_{2}-\tilde{x}_{2}} & ( x_{1}-\tilde{x}_{1}) & 0 & 0 
\end{pmatrix}\ .
\end{equation} 
For a given matrix factorization $Q=\begin{pmatrix}
0 & Q^{(1)}\\
Q^{(0)} & 0 
\end{pmatrix}$ of $W$ one can write down explicit homomorphisms $\lambda_{Q},\lambda_{Q}^{-1}$ \cite{Carqueville:2009ev,Carqueville:2012st} that realize the isomorphism between $Q$ and $I_{W}\boxtimes Q$ as
\begin{align}
\lambda_{Q} & = \begin{pmatrix}
\mu  & 0 & 0 & 0 & 0 & 0 & 0 & 0\\
0 & 0 &0 & 0 & 0 & 0 & \mu  & 0
\end{pmatrix}\\
\lambda_{Q}^{-1} &= \begin{pmatrix}
(\lambda_{Q}^{-1})^{(0)} & 0\\
0 & (\lambda_{Q}^{-1})^{(1)}
\end{pmatrix}
\end{align}
with 
\begin{align}
(\lambda_{Q}^{-1})^{(0)} &= \begin{pmatrix}
\one \\
\frac{Q^{(1)}(\tilde{x}_{1},x_{2})-Q^{(1)} (\tilde{x}_{1},\tilde{x}_{2})}{x_{2}-\tilde{x}_{2}}\frac{Q^{(0)} (x_{1},x_{2})-Q^{(0)} (\tilde{x}_{1},x_{2})}{x_{1}-\tilde{x}_{1}}\\
\frac{Q^{(0)} (x_{1},x_{2})-Q^{(0)} (\tilde{x}_{1},x_{2})}{x_{1}-\tilde{x}_{1}}\\
\frac{Q^{(0)} (\tilde{x}_{1},x_{2})-Q^{(0)} (\tilde{x}_{1},\tilde{x}_{2})}{x_{2}-\tilde{x}_{2}}
\end{pmatrix}\\
(\lambda_{Q}^{-1})^{(1)} &= \begin{pmatrix}
\frac{Q^{(1)} (x_{1},x_{2})-Q^{(1)} (\tilde{x}_{1},x_{2})}{x_{1}-\tilde{x}_{1}}\\
\frac{Q^{(1)} (\tilde{x}_{1},x_{2})-Q^{(1)} (\tilde{x}_{1},\tilde{x}_{2})}{x_{2}-\tilde{x}_{2}}\\
\one \\
\frac{Q^{(0)} (\tilde{x}_{1},x_{2})-Q^{(0)} (\tilde{x}_{1},\tilde{x}_{2})}{x_{2}-\tilde{x}_{2}} \frac{Q^{(1)} (x_{1},x_{2})-Q^{(1)} (\tilde{x}_{1},x_{2})}{x_{1}-\tilde{x}_{1}}
\end{pmatrix}\ .
\end{align}
Here we use a somewhat condensed notation analogously to the one used in Section~\ref{sec:onevariableoddmorphisms}.

Then for an odd morphism 
\begin{equation}
\psi =\begin{pmatrix}
0 & \psi^{(1)}\\
\psi^{(0)} & 0
\end{pmatrix}
\end{equation}
between $U (I_{W})$ and $V (I_{W})$ we find 
\begin{equation}
\psi_{Q} = V (\lambda_{Q}) (\psi \otimes \one) U (\lambda_{Q}^{-1}) = \begin{pmatrix}
0 & \psi_{Q}^{(1)}\\
\psi_{Q}^{(0)} & 0 
\end{pmatrix}
\end{equation}
with
\begin{equation}
\psi_{Q}^{(i)} = \begin{pmatrix}
V (\mu)& 0
\end{pmatrix} \psi^{(1)} \begin{pmatrix}
U\left(\frac{Q^{(i)} (x_{1},x_{2})-Q^{(i)} (\tilde{x}_{1},x_{2})}{x_{1}-\tilde{x}_{1}} \right)\\
U\left(\frac{Q^{(i)} (\tilde{x}_{1},x_{2})-Q^{(i)} (\tilde{x}_{1},\tilde{x}_{2})}{x_{2}-\tilde{x}_{2}} \right)
\end{pmatrix}=\phi_{UV} (Q^{(i)})\ ,
\end{equation}
where the result is used to define a morphism $\phi_{UV}$ between $U$ and $V$ of degree $1$. As in the one-variable case one can show that it is closed and that it satisfies the fusion functor property~\eqref{WlinearityofPsi}. 

We conclude that for two variables, any odd morphism between $U(I_{W})$ and $V(I_{W})$ is induced by a degree-$1$ morphism between $U$ and $V$. 

\section{Summary and Conclusion}

In this article, we have formalized the notion of fusion functors and proposed it as a description of a subclass of interfaces in B-type topologically twisted Landau--Ginzburg models. These functors act on the categories of free finite-rank modules over a polynomial ring. A $(W',W)$-fusion functor directly maps a finite-rank matrix factorization of $W$ seen as a module homomorphism to a finite-rank matrix factorization of $W'$. Fusion functors have been introduced as a tool in~\cite{Behr:2020gqw} to facilitate the explicit computation of fusion of interfaces to boundary conditions. Here, we have shown that there is a larger structure behind them: fusion functors form a category where morphisms are higher-degree generalizations of natural transformations; in the functorial framework they are the analogue of interface fields.

We have introduced a functor $\Pi^{I_{W}}$ from the category of $(W',W)$-fusion functors to the category of $(W',W)$-matrix factorizations (see Definition~\ref{def:PiIW}). For polynomial rings in one variable, we have shown that this functor is full, hence any interface field between $U (I_{W})$ and $V (I_{W})$ has a realization as a morphism between the fusion functors $U$ and $V$. It would be interesting to investigate whether this holds for general polynomial rings. 

Connected to that is the question how to characterize the class of interface factorizations that have a realization as a fusion functor. For matrix factorizations, the fusion of a $(W',W)$-interface factorization via the graded tensor product directly gives rise to a functor from the category $\Hmf_{W}$ of finite-rank matrix factorizations to the category $\HMF_{W'}$ of matrix factorizations of any rank \cite{Khovanov:2004}. It is known that the resulting infinite-rank factorization can always be reduced to a finite-rank factorization \cite{Khovanov:2004,Brunner:2007qu}, and a direct construction \cite{Dyckerhoff:2013} and an explicit algorithm \cite{Carqueville:2011zea} for this reduction are available. It would be interesting to investigate whether these results can shed light on the question for which interface factorizations one can construct a fusion functor, or how one could generalize the notion of a fusion functor to model all interfaces in such a way.

We have given a prescription of horizontal composition which turns the category of defect fusion functors $\HFF_{W,W}$ into a strict monoidal supercategory. More generally, the horizontal composition gives rise to a strict 2-supercategory (in the sense of~\cite{Brundan:2017}) where the objects are the potentials, the 1-morphisms the fusion functors, and the 2-morphisms the morphisms between fusion functors. One would expect that $\Pi^{I_{W}}$ gives rise to a (weak) 2-superfunctor to the corresponding 2-supercategory of matrix factorizations. In the case of matrix factorizations one has more structure because of the existence of adjoints~\cite{Carqueville:2012st}. It would be interesting to investigate whether such a structure is also realized for fusion functors.  

As we have discussed, the functor $\Pi^{I_{W}}$ is not faithful. It would be interesting to work out the additional equivalences on the morphism spaces to obtain a category of fusion functors that allows for a faithful functor to the category of matrix factorizations.

Fusion of matrix factorizations is a basic ingredient in the Khovanov--Rozansky construction of link homology~\cite{Khovanov:2004}. The elementary matrix factorizations and morphisms that enter this framework admit a natural reformulation in terms of fusion functors and morphisms between them, as illustrated in Example~\ref{ex:KR}. At the braid level, one may therefore hope to formulate an analogous construction directly in the fusion-functor language. Implementing braid closure, however, requires an additional operation (on the matrix-factorization side realized by identifying variables) which is not directly built into the present functorial framework. Nevertheless, it would be interesting to explore whether the fusion-functor formulation can simplify parts of the computation by replacing explicit tensor-product calculations of matrix factorizations \cite{Carqueville:2011zea} with functorial fusion.

In a rational CFT there are finitely many elementary rational topological defects, and they close under fusion, forming a fusion (tensor) category. In Landau--Ginzburg models whose infrared fixed points are rational, it is natural to ask how this fusion-category structure is reflected in the matrix-factorization description of B-type topological defects. In the present paper we take as a working assumption that the relevant B-type topological defects of the Landau--Ginzburg model are realized within the matrix-factorization framework. Under this assumption, one is led to expect that the category of defect factorizations contains a tensor subcategory whose objects correspond to the rational defects in the infrared, and whose tensor product agrees with their fusion.  In this sense, and under the assumptions spelled out above, the presence of a non-trivial fusion subcategory is a natural structural feature one expects to appear in rational Landau--Ginzburg models.

On the level of fusion functors, the corresponding substructure is more subtle: there are typically infinitely many non-isomorphic functors that map under $\Pi^{I_W}$ to the same defect. Thus the functorial side does not itself form a fusion category, but its image under $\Pi^{I_W}$ may. Nevertheless, fusion functors can be useful in practice, since fusion is implemented by composition
by construction.

A first step towards identifying such structures in a non-trivial example has been taken in~\cite{Behr:2020gqw}. There, a proposal was formulated that associates to every rational B-type defect in an $SU(3)/U(2)$ Kazama--Suzuki model a fusion functor (and hence a matrix factorization), and for a subclass the fusion was computed and found to agree with the fusion of rational defects. It would be interesting to extend this analysis and to test, in further examples, whether the resulting tensor subcategory of matrix factorizations reproduces the tensor category of rational B-type defects, as is known in the
minimal model case~\cite{Davydov:2014oya}.

\section*{Acknowledgements}

I thank Nils Carqueville for many interesting discussions and useful suggestions. I am grateful to Nicolas Behr for a long-lasting collaboration that has provided the seeds out of which this work has grown. I also thank Peter Allmer and Ingo Runkel for helpful discussions. 

\appendix

\section{Some results on morphisms between functors}

\subsection{A projection to the space of normalized morphisms}\label{sec:projection}
The morphism spaces $\Hom(U,V)$ have the structure of a cosimplicial object with coface maps $\delta^i:\Hom_n(U,V)\to \Hom_{n+1}(U,V)$ given by
\begin{align}
    \delta^i\phi_n (f_{n+1},\dotsc,f_1) = \left\{\begin{array}{ll}
     V(f_{n+1})\,\phi_n ( f_n,\dotsc,f_1)    & i=0  \\
     \phi_n (f_{n+1},\dotsc,f_{n+2-i}f_{n+1-i},\dotsc,f_1)    & 1\leq i\leq n\\
     \phi_n ( f_{n+1},\dotsc,f_2)\,U(f_1) & i=n+1\,,
    \end{array} \right.
\end{align}
and codegeneracy maps $\sigma^j:\Hom_n(U,V)\to \Hom_{n-1}(U,V)$ given by
\begin{equation}\label{codegeneracy}
    \sigma^j\phi_n (f_{n-1},\dotsc,f_1) = \phi_n ( f_{n-1},\dotsc,f_{n-j},\one,f_{n-j-1},\dotsc,f_1)\ .
\end{equation}
In such a cosimplicial space, the cohomology is isomorphic to the cohomology of the subspace spanned by normalized elements, i.e.\ those morphisms $\phi_n$ that vanish when any of the entries is the identity (see, e.g., \cite[chapter 8]{Weibel}\cite[chapter 1.6]{Loday}\cite{Eilenberg:1947}). For completeness we include a proof here that defines a projection to the space of normalized morphisms which becomes an identity on the cohomology.

For any $n\in \mathbb{N}$ we define on $\Hom_{n} (U,V)$ for $s=1,\dotsc ,n$ the following operators,
\begin{align}
R^{(s)}:\Hom_{n} (U,V) & \longrightarrow \Hom_{n-1} (U,V)\\
R^{(s)}\phi (f_{n-1},\dotsc ,f_{1}) & = (-1)^{n-s+1} \phi (f_{n-1},\dotsc ,f_{s},\one,f_{s-1},\dotsc ,f_{1})\ .
\end{align}
Up to a sign and relabelling, these are just the codegeneracy maps defined in~\eqref{codegeneracy}.
We also define the following subspaces of $\Hom_{n} (U,V)$,
\begin{equation}
\Hom_{n}^{(s)} (U,V) = \{\phi \in  \Hom_{n} (U,V)\,\vert\, \phi (f_{n},\dotsc ,f_{1}) = 0 \ \text{if}\ f_{i}=\one  \ \text{for any}\ i\leq s\}\ .
\end{equation}
Obviously, we have
\begin{equation}
\Hom_{n}^{(s)} (U,V) \subset \Hom_{n}^{(t)} (U,V) \ \text{for}\ s\leq t \ .
\end{equation}
For $s=n$,
\begin{equation}
\Hom_{n}^{\mathrm{norm}} (U,V) = \Hom_{n}^{(n)} (U,V)
\end{equation}
we recover the space of normalized morphisms (as in~\eqref{defreducedspace}).

\begin{lem}\label{lem:kerRdplusdR}
\begin{equation}
\Hom_{n}^{(s)} (U,V)\subset \mathrm{Ker}\, (R^{(s)}d+dR^{(s)}) \ .
\end{equation}
\end{lem}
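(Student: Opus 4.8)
The plan is to establish the two halves $dR^{(s)}\phi=0$ and $R^{(s)}d\phi=0$ separately for $\phi\in\Hom_n^{(s)}(U,V)$, which together give $\phi\in\mathrm{Ker}(R^{(s)}d+dR^{(s)})$. The first half is immediate: by definition $R^{(s)}$ inserts the identity into the $s$-th slot,
\[
R^{(s)}\phi(f_{n-1},\dotsc,f_1)=(-1)^{n-s+1}\,\phi(f_{n-1},\dotsc,f_s,\one,f_{s-1},\dotsc,f_1),
\]
and the inserted $\one$ occupies the argument of index $s$ (counting from the right). Since $\phi\in\Hom_n^{(s)}(U,V)$ vanishes whenever one of its first $s$ entries equals $\one$, the right-hand side is identically zero. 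Thus $R^{(s)}\phi=0$ and hence $dR^{(s)}\phi=0$.

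For the second half I would evaluate
\[
R^{(s)}d\phi(f_n,\dotsc,f_1)=(-1)^{n-s}\,(d\phi)(f_n,\dotsc,f_s,\one,f_{s-1},\dotsc,f_1)
\]
by expanding the differential \eqref{defdiff} on the sequence carrying $\one$ in slot $s$, and then tracking what happens to that $\one$ in each resulting term. Exactly two terms absorb it: the merge of $\one$ with its left neighbour, producing $f_s\,\one=f_s$, and the merge with its right neighbour, producing $\one\,f_{s-1}=f_{s-1}$ (at the boundary $s=1$ the right-neighbour role is instead played by the factor $U(\one)=\one$ coming from the tail of \eqref{defdiff}). Each of these collapses to $\pm\,\phi(f_n,\dotsc,f_1)$, and since consecutive summands of \eqref{defdiff} always carry opposite signs, the two contributions cancel. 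In every remaining term the inserted $\one$ survives as an entry of $\phi$, sitting in slot $s$ for the terms whose merge or end-operation acts above it and in slot $s-1$ for those acting below it; in both cases the slot index is $\le s$, so each such term vanishes by the defining property of $\Hom_n^{(s)}(U,V)$. Hence $R^{(s)}d\phi=0$, and combining with the previous paragraph yields the claim.

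The only delicate point is the position-and-sign bookkeeping in the second paragraph: one must verify, uniformly in the range $1\le s\le n$ and including the endpoints $s=1$ and $s=n$, that after each face operation the surviving copy of $\one$ genuinely lands in a slot of index at most $s$ except in the two absorbing terms, and that those two absorbing terms indeed appear with opposite signs. Once this check is carried out, the statement follows with no further computation; conceptually it is the standard fact that the insertion maps $R^{(s)}$ furnish a contracting homotopy onto the normalized (reduced) subcomplex.
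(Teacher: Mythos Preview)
Your proof is correct. You use the hypothesis $\phi\in\Hom_n^{(s)}(U,V)$ at the outset to kill $R^{(s)}\phi$ (and hence $dR^{(s)}\phi$) immediately, and then dispose of $R^{(s)}d\phi$ by observing that the two face maps adjacent to the inserted $\one$ absorb it with opposite signs while every other face leaves $\one$ in a slot of index $\le s$. The endpoint cases $s=1$ and $s=n$ are handled correctly.

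The paper's route is slightly different in organization: it first writes out $R^{(s)}d\phi$ and $dR^{(s)}\phi$ for a \emph{general} $\phi$, cancels the first $n-s+1$ summands between the two expressions, and only then invokes $\phi\in\Hom_n^{(s)}(U,V)$ to kill what remains (all of which carry $\one$ in slot $s$ or $s-1$). Your argument is shorter for the lemma at hand; the paper's longer computation has the side benefit of producing the explicit general formula for $(R^{(s)}d+dR^{(s)})\phi$, which it immediately reuses in the proof of the next lemma (on the image of $\one-R^{(s)}d-dR^{(s)}$ restricted to $\Hom_n^{(s-1)}$). So both approaches are valid; yours is more economical for this statement alone, while the paper's does double duty.
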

\begin{proof}
We determine the action of $R^{(s)}d+dR^{(s)}$. First, we evaluate $R^{(s)}d$:
\begin{align}
R^{(s)}d\phi (f_{n},\dotsc ,f_{1}) &= (-1)^{n-s} d\phi (f_{n},\dotsc ,f_{s},\one ,f_{s-1},\dotsc ,f_{1})\\
&= (-1)^{n-s} \big( V (f_{n})\,\phi (f_{n-1},\dotsc ,\underset{\mathclap{\color{blue}(s)}}{\one },\dotsc ,f_{1})\nonumber\\
&\qquad  +\dotsb \nonumber\\
&\qquad + (-1)^{n-s} \phi (f_{n},\dotsc ,f_{s+1}\,f_{s},\underset{\mathclap{\color{blue}(s)}}{\one},f_{s-1},\dotsc ,f_{1})\nonumber\\
&\qquad + (-1)^{n-s+1}\phi (f_{n},\dotsc ,f_{s},\underset{\mathclap{\color{blue}(s-1)}}{\one},f_{s-1}\,f_{s-2},\dotsc ,f_{1})\nonumber\\
&\qquad + \dotsb \nonumber\\
&\qquad + (-1)^{n} \phi (f_{n},\dotsc ,f_{s},\underset{\mathclap{\color{blue}(s-1)}}{\one},f_{s-1},\dotsc ,f_{2})U (f_{1})\big)\ .\label{Rd}
\end{align}
We indicate the position of the entry $\one$ (counted from the right) by a blue subscript.
For $dR^{(s)}$, we obtain
\begin{align}
dR^{(s)}\phi (f_{n},\dotsc ,f_{1}) &= (-1)^{n-s+1}\big( V (f_{n})\,\phi (f_{n-1},\dotsc ,\underset{\mathclap{\color{blue}(s)}}{\one},\dotsc ,f_{1})\nonumber\\
&\qquad + \dotsb \nonumber\\
&\qquad + (-1)^{n-s} \phi (f_{n},\dotsc ,f_{s+1}\,f_{s},\underset{\mathclap{\color{blue}(s)}}{\one},f_{s-1},\dotsc ,f_{1})\nonumber\\
&\qquad + (-1)^{n-s+1}\phi (f_{n},\dotsc ,f_{s+1},\underset{\mathclap{\color{blue}(s)}}{\one},f_{s}\,f_{s-1},\dotsc ,f_{1})\nonumber\\
&\qquad + \dotsb \nonumber\\
&\qquad + (-1)^{n} \phi (f_{n},\dotsc ,f_{s+1},\underset{\mathclap{\color{blue}(s)}}{\one},f_{s},\dotsc ,f_{2})U (f_{1})\big)\ .
\label{dR}
\end{align}
The first $n-s+1$ terms in the expressions~\eqref{Rd} for $R^{(s)}d$ and~\eqref{dR} for $dR^{(s)}$ coincide up to an overall minus sign. Hence, we find
\begin{align}
(R^{(s)}d+dR^{(s)})\phi (f_{n},\dotsc ,f_{1}) &= \phi (f_{n},\dotsc ,f_{s+1},\underset{\mathclap{\color{blue}(s)}}{\one},f_{s}\,f_{s-1},\dotsc ,f_{1})\nonumber\\
&\qquad + \dotsb \nonumber\\
&\qquad + (-1)^{s-1} \phi (f_{n},\dotsc ,f_{s+1},\underset{\mathclap{\color{blue}(s)}}{\one},f_{s},\dotsc ,f_{2})U (f_{1})\nonumber\\
&\qquad - \phi (f_{n},\dotsc ,f_{s},\underset{\mathclap{\color{blue}(s-1)}}{\one},f_{s-1}\,f_{s-2},\dotsc ,f_{1})\nonumber\\
&\qquad + \dotsb \nonumber\\
&\qquad + (-1)^{s} \phi (f_{n},\dotsc ,f_{s},\underset{\mathclap{\color{blue}(s-1)}}{\one},f_{s-1},\dotsc ,f_{2})U (f_{1})\ .\label{RdplusdR}
\end{align}
If $\phi \in \Hom_{n}^{(s)} (U,V)$, this expression vanishes because in all terms there is the identity $\one$ in the position $s$ or $s-1$.
\end{proof}

\begin{lem}\label{lem:imagePs}
\begin{equation}
\mathrm{Im}\, (\one -R^{(s)}d-dR^{(s)})\big\vert_{\Hom^{(s-1)}_{n} (U,V)} \subset \Hom^{(s)}_{n} (U,V)\ .
\end{equation}
\end{lem}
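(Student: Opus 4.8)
The plan is to evaluate the operator $P^{(s)}:=\one-R^{(s)}d-dR^{(s)}$ on a morphism $\phi\in\Hom^{(s-1)}_{n}(U,V)$ by reusing the explicit expression for $(R^{(s)}d+dR^{(s)})\phi$ already obtained in~\eqref{RdplusdR}, and then to verify that $P^{(s)}\phi$ vanishes whenever one of its first $s$ arguments equals $\one$. Recall that~\eqref{RdplusdR} displays $(R^{(s)}d+dR^{(s)})\phi$ as a sum of two blocks: a first block (call it $A$) whose terms carry the inserted identity in slot $s$, and a second block (call it $B$) whose terms carry it in slot $s-1$.

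First I would dispose of block $B$: each of its terms has $\one$ sitting in argument slot $s-1$, and since $s-1\le s-1$ and $\phi\in\Hom^{(s-1)}_{n}(U,V)$, every such term vanishes identically. Hence for $\phi\in\Hom^{(s-1)}_{n}(U,V)$ we have $(R^{(s)}d+dR^{(s)})\phi=A$, so that $P^{(s)}\phi=\phi-A$. Next comes the new slot $s$: setting $f_{s}=\one$, in block $A$ only the first term survives, because it absorbs $f_{s}$ into the product $f_{s}f_{s-1}=f_{s-1}$, whereas in every other term $f_{s}$ occurs as a standalone entry in a slot $\le s-1$ and the term therefore vanishes by $\phi\in\Hom^{(s-1)}_{n}(U,V)$. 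That surviving term is precisely $\phi(f_{n},\dotsc,f_{s+1},\one,f_{s-1},\dotsc,f_{1})=\phi|_{f_{s}=\one}$, which cancels the leading $\phi$ in $P^{(s)}\phi$, giving vanishing at slot $s$.

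Finally I would treat a slot $i\le s-1$: here $\phi$ itself already vanishes, so it remains to show $A$ vanishes too. Setting $f_{i}=\one$, every term of $A$ in which $f_{i}$ appears standalone sits in a slot $\le s-1$ and hence vanishes; the only exceptions are the (at most two) consecutive terms in which $f_{i}$ is merged with a neighbour, namely the merges $(f_{i+1},f_{i})$ and $(f_{i},f_{i-1})$ — for $i=1$ the second of these is replaced by the boundary term carrying $U(f_{1})=\one$. With $f_{i}=\one$ these two terms produce identical argument lists for $\phi$ but carry opposite signs, since they are adjacent in the alternating sum~\eqref{RdplusdR}, so they cancel. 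Thus $A$ vanishes at slot $i$ and $P^{(s)}\phi=0$ there as well, which establishes $P^{(s)}\phi\in\Hom^{(s)}_{n}(U,V)$.

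The main obstacle is the bookkeeping in this last step: one must track carefully the slot positions of the standalone and merged occurrences of $f_{i}$ and check the sign alternation, so that the two merge terms cancel rather than reinforce, with particular attention to the boundary cases $i=1$ and $i=s-1$ (and the degenerate case $s=1$, where block $B$ is absent). This is essentially the same simplicial cancellation that underlies Lemma~\ref{lem:kerRdplusdR}, now exploited one slot beyond the hypothesis on $\phi$; once the cancellation pattern is confirmed the inclusion follows.
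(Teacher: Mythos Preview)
Your proposal is correct and follows essentially the same approach as the paper's proof. Both arguments use the explicit formula~\eqref{RdplusdR} to write $P^{(s)}\phi=\phi-A$ (block $B$ dropping out because $\phi\in\Hom^{(s-1)}_{n}(U,V)$), and then verify vanishing slot by slot: at slot $s$ the single surviving term of $A$ cancels $\phi$, while at slots $i\le s-1$ the two adjacent merge terms in $A$ cancel pairwise and the rest vanish by hypothesis. The only cosmetic difference is the order in which the two cases are treated.
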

\begin{proof}
Let $\phi \in \Hom^{(s-1)}_{n} (U,V)$. When we apply $\one -R^{(s)}d-dR^{(s)}$ to $\phi$, we can use the result~\eqref{RdplusdR} from the previous lemma, where the last $s+1$ terms vanish because $\phi \in \Hom^{(s-1)}_{n} (U,V)$. Hence, we find
\begin{align}
&(\one -R^{(s)}d-dR^{(s)})\phi (f_{n},\dotsc ,f_{1})\nonumber\\
& \quad = \phi (f_{n},\dotsc ,f_{1}) - \phi (f_{n},\dotsc ,f_{s+1},\underset{\mathclap{\color{blue}(s)}}{\one},f_{s}\,f_{s-1},\dotsc ,f_{1})\nonumber\\
&\qquad\qquad \qquad \qquad + \dotsb \nonumber\\
&\qquad\qquad \qquad \qquad  + (-1)^{s} \phi (f_{n},\dotsc ,f_{s+1},\underset{\mathclap{\color{blue}(s)}}{\one},f_{s},\dotsc ,f_{2})U (f_{1})\ .
\end{align}
If any of the $f_{i}=\one$ for $i<s$, then
\begin{align}
&(\one -R^{(s)}d-dR^{(s)})\phi (f_{n},\dotsc ,\underset{\mathclap{\color{blue}(i<s)}}{\one},\dotsc ,f_{1})\nonumber\\
& \quad = \phi (f_{n},\dotsc ,\underset{\mathclap{\color{blue}(i<s)}}{\one},\dotsc ,f_{1}) - \phi (f_{n},\dotsc ,\underset{\mathclap{\color{blue}(s)}}{\one},f_{s}\,f_{s-1},\dotsc,\underset{\mathclap{\color{blue}(i<s)}}{\one} ,\dotsc ,f_{1})\nonumber\\
&\qquad\qquad \qquad \qquad\qquad \ + \dotsb \nonumber\\
&\qquad\qquad \qquad \qquad \qquad \  + (-1)^{s-i} \phi (f_{n},\dotsc ,\underset{\mathclap{\color{blue}(s)}}{\one},f_{s},\dotsc ,f_{i+1},f_{i-1},\dotsc ,f_{1})\nonumber\\
&\qquad\qquad \qquad \qquad\qquad \   + (-1)^{s-i+1} \phi (f_{n},\dotsc ,\underset{\mathclap{\color{blue}(s)}}{\one},f_{s},\dotsc ,f_{i+1},f_{i-1},\dotsc ,f_{1})\nonumber\\
&\qquad\qquad \qquad \qquad\qquad \  + \dotsb \nonumber\\
&\qquad\qquad \qquad \qquad\qquad \   + (-1)^{s} \phi (f_{n},\dotsc ,\underset{\mathclap{\color{blue}(s)}}{\one},\dotsc ,\underset{\mathclap{\color{blue}(i<s)}}{\one} ,\dotsc ,f_{2})U (f_{1})=0\ ,
\end{align}
because the two middle terms cancel, and all other terms are zero with a $\one$ appearing in an entry at position $i<s$.
If $f_{s}=\one$, we get
\begin{align}
&(\one -R^{(s)}d-dR^{(s)})\phi (f_{n},\dotsc ,\underset{\mathclap{\color{blue}(s)}}{\one},\dotsc ,f_{1})\nonumber\\
& \quad = \phi (f_{n},\dotsc ,\underset{\mathclap{\color{blue}(s)}}{\one},\dotsc ,f_{1}) - \phi (f_{n},\dotsc ,\underset{\mathclap{\color{blue}(s)}}{\one},\dotsc ,f_{1})\nonumber\\
&\qquad \qquad \qquad \qquad \qquad \ + \phi (f_{n},\dotsc ,\underset{\mathclap{\color{blue}(s)}}{\one}\,,\,\underset{\mathclap{\color{blue}(s-1)}}{\one},f_{s-1}\,f_{s-2},\dotsc ,f_{1})\nonumber\\
&\qquad\qquad \qquad \qquad\qquad \ + \dotsb \nonumber\\
&\qquad\qquad \qquad \qquad\qquad \   + (-1)^{s} \phi (f_{n},\dotsc ,\underset{\mathclap{\color{blue}(s)}}{\one}\,,\,\underset{\mathclap{\color{blue}(s-1)}}{\one} ,\dotsc ,f_{2})U (f_{1})=0\ .
\end{align}
Here, the first two terms cancel, and the others are zero because $\phi \in \Hom^{(s-1)}_{n} (U,V)$.
\end{proof}
Now we are ready to define the projection operator to the space of normalized morphisms:
\begin{Def}
On $\Hom_{n} (U,V)$ we define
\begin{equation}\label{defP}
P:= (\one-R^{(n)}d-dR^{(n)})\dotsb (\one-R^{(1)}d-dR^{(1)}) \ .
\end{equation}
\end{Def}
\begin{lem}
The operator $P$ satisfies the following properties:
\begin{align}
\mathrm{Im}\, P &\subset \Hom^{\mathrm{norm}}(U,V) \ ,\label{propP1}\\
P\big\vert_{\Hom^{\mathrm{norm}}(U,V)} & = \one \big\vert_{\Hom^{\mathrm{norm}}(U,V)}\ ,\label{propP2}\\
P\,P &= P\ .
\end{align}
\end{lem}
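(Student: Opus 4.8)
The plan is to work at a fixed degree $n$ and to abbreviate $P_{s}:=\one-R^{(s)}d-dR^{(s)}$, so that by~\eqref{defP} we have $P=P_{n}P_{n-1}\dotsb P_{1}$ as an operator on $\Hom_{n}(U,V)$. The whole argument is then bookkeeping for the decreasing filtration
\begin{equation}
\Hom_{n}(U,V)=\Hom_{n}^{(0)}(U,V)\supseteq \Hom_{n}^{(1)}(U,V)\supseteq \dotsb \supseteq \Hom_{n}^{(n)}(U,V)=\Hom_{n}^{\mathrm{red}}(U,V)\ ,
\end{equation}
where the leftmost equality holds because the defining condition for $\Hom_{n}^{(0)}$ is vacuous, and the rightmost one is the definition of the reduced space. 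The two inputs I would use are Lemma~\ref{lem:imagePs}, which says $P_{s}$ maps $\Hom_{n}^{(s-1)}(U,V)$ into $\Hom_{n}^{(s)}(U,V)$, and Lemma~\ref{lem:kerRdplusdR}, which says $P_{s}$ restricts to the identity on $\Hom_{n}^{(s)}(U,V)$ (since that space lies in $\mathrm{Ker}(R^{(s)}d+dR^{(s)})$).

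For the inclusion~\eqref{propP1} I would argue by induction that $P_{s}\dotsb P_{1}\phi\in\Hom_{n}^{(s)}(U,V)$ for every $\phi\in\Hom_{n}(U,V)$. The base case $s=0$ is just $\phi\in\Hom_{n}^{(0)}(U,V)$, which is automatic. For the inductive step, if $P_{s-1}\dotsb P_{1}\phi\in\Hom_{n}^{(s-1)}(U,V)$, then Lemma~\ref{lem:imagePs} gives $P_{s}\big(P_{s-1}\dotsb P_{1}\phi\big)\in\Hom_{n}^{(s)}(U,V)$. Taking $s=n$ yields $P\phi\in\Hom_{n}^{(n)}(U,V)=\Hom_{n}^{\mathrm{red}}(U,V)$, which is~\eqref{propP1}.

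For~\eqref{propP2}, let $\phi\in\Hom_{n}^{\mathrm{red}}(U,V)=\Hom_{n}^{(n)}(U,V)$. Since a reduced morphism vanishes as soon as any single argument is the identity, it lies in $\Hom_{n}^{(s)}(U,V)$ for every $s\le n$; hence by Lemma~\ref{lem:kerRdplusdR} each factor acts trivially, $P_{s}\phi=\phi$. Applying the factors from $s=1$ up to $s=n$ one after another, each step leaves $\phi$ unchanged, so $P\phi=\phi$. Finally, $PP=P$ is immediate from the first two properties: for arbitrary $\phi$, property~\eqref{propP1} gives $P\phi\in\Hom_{n}^{\mathrm{red}}(U,V)$, and property~\eqref{propP2} applied to $P\phi$ then gives $P(P\phi)=P\phi$.

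I do not expect a genuine obstacle here, since Lemmas~\ref{lem:kerRdplusdR} and~\ref{lem:imagePs} already carry all the computational content; the argument is purely formal. The only point requiring care is the direction of the filtration: one must use that the reduced space is the \emph{smallest} of the $\Hom_{n}^{(s)}(U,V)$ (it imposes the most restrictive vanishing condition), so that it is contained in every $\Hom_{n}^{(s)}(U,V)$ and is therefore fixed by every factor $P_{s}$. One must also match the indices correctly, noting that $P_{s}$ is precisely the factor that both advances the filtration by one step (Lemma~\ref{lem:imagePs}) and fixes the $s$-th stage (Lemma~\ref{lem:kerRdplusdR}), which is what makes the telescoping induction close.
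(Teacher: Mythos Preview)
Your proof is correct and follows essentially the same approach as the paper: both arguments use Lemma~\ref{lem:imagePs} iteratively to establish \eqref{propP1}, Lemma~\ref{lem:kerRdplusdR} together with the inclusion $\Hom_{n}^{\mathrm{red}}(U,V)\subset\Hom_{n}^{(s)}(U,V)$ to establish \eqref{propP2}, and then combine the two for the idempotency. Your version is slightly more explicit about the filtration and the induction, but the content is the same.
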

\begin{proof}
Applying Lemma~\ref{lem:imagePs} multiple times, we see that
\begin{equation}
\mathrm{Im}\,(\one-R^{(s)}d-dR^{(s)})\dotsb (\one-R^{(1)}d-dR^{(1)}) \subset \Hom^{(s)}(U,V) \ .
\end{equation}
When we set $s=n$, this becomes~\eqref{propP1} (since $\Hom_{n}^{\mathrm{norm}} (U,V) = \Hom_{n}^{(n)} (U,V)$).

Because of Lemma~\ref{lem:kerRdplusdR} we have for any $s=1,\dotsc ,n$
\begin{equation}
\Hom_{n}^{\mathrm{norm}} (U,V) \subset \Hom_{n}^{(s)} (U,V) \subset \mathrm{Ker}\, (R^{(s)}d+dR^{(s)})\ .
\end{equation}
Therefore
\begin{equation}
(\one-R^{(s)}d-dR^{(s)})\big\vert_{\Hom^{\mathrm{norm}}(U,V)} = \one\big\vert_{\Hom^{\mathrm{norm}}(U,V)}
\end{equation}
and equation~\eqref{propP2} follows. Finally, we have
\begin{equation}
P\,P = P \big\vert_{\Hom^{\mathrm{norm}}(U,V)}\,P = P\ .
\end{equation}
\end{proof}
\begin{lem}
Starting from $\hat{R}^{(1)}:=R^{(1)}$ we define $R:=\hat{R}^{(n)}$ through the recursion
\begin{equation}
\hat{R}^{(s+1)}=\hat{R}^{(s)}+R^{(s+1)}-R^{(s+1)}\hat{R}^{(s)}d-R^{(s+1)}d\hat{R}^{(s)} \quad \text{for}\ s=1,\dotsc ,n-1.
\end{equation}
Then we have
\begin{equation}
P = \one -R d-dR \ .
\end{equation}
\end{lem}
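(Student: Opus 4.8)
The plan is to prove the statement by induction on the number of factors in the definition~\eqref{defP} of $P$, reducing everything to a single algebraic identity: the composition of two operators of the form $\one-(dA+Ad)$ is again of that form. For an operator $A$ of degree $-1$ I write $L_A:=dA+Ad$, so that each factor in~\eqref{defP} is $\one-L_{R^{(s)}}$ and the desired conclusion $P=\one-Rd-dR$ reads $P=\one-L_R$.

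First I would establish the key identity: for any two degree-$(-1)$ operators $A,B$,
\begin{equation}
(\one - L_A)(\one - L_B) = \one - L_{C}\ , \qquad C := A + B - A\,L_B\ .
\end{equation}
This is a direct computation using $d\circ d=0$. Expanding $L_A L_B=(dA+Ad)(dB+Bd)$ and using $A\,d\,d\,B=0$ gives $L_A L_B=dAdB+dABd+AdBd$. On the other hand, writing $C=A+B-AdB-ABd$ and again using $d^{2}=0$ in the term $Cd$, one finds $dC+Cd=L_A+L_B-(dAdB+dABd+AdBd)$, so that $L_C=L_A+L_B-L_A L_B$. Since $(\one-L_A)(\one-L_B)=\one-L_A-L_B+L_A L_B$, the identity follows.

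Next I would match this identity to the recursion. Setting $A=R^{(s+1)}$ and $B=\hat{R}^{(s)}$, the operator $C=A+B-A\,L_B$ equals $\hat{R}^{(s)}+R^{(s+1)}-R^{(s+1)}(\hat{R}^{(s)}d+d\hat{R}^{(s)})$, which is exactly the definition of $\hat{R}^{(s+1)}$. Hence, denoting by $P_s:=(\one-L_{R^{(s)}})\dotsb(\one-L_{R^{(1)}})$ the product of the first $s$ factors of~\eqref{defP} (with $R^{(s)}$ leftmost, matching the order in~\eqref{defP}), the identity yields $P_{s+1}=(\one-L_{R^{(s+1)}})\,P_s$. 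An induction on $s$ then proves $P_s=\one-L_{\hat{R}^{(s)}}$: the base case $s=1$ holds because $\hat{R}^{(1)}=R^{(1)}$, and the inductive step is precisely the identity above with $B=\hat{R}^{(s)}$. Taking $s=n$ gives $P=P_n=\one-L_{\hat{R}^{(n)}}=\one-L_R=\one-Rd-dR$, as claimed.

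I do not expect any serious obstacle. The only points requiring care are the bookkeeping of the composition order, so that the leftmost factor in~\eqref{defP} is identified with the outer operator $R^{(s+1)}=A$ in the recursion rather than with $\hat{R}^{(s)}=B$, and the consistent use of $d^{2}=0$ to discard the two vanishing cross terms in the expansions of $L_A L_B$ and of $L_C$. Everything else is formal manipulation within the associative algebra of operators on $\bigoplus_{n}\Hom_{n}(U,V)$.
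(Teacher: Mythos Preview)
Your proposal is correct and is essentially the same induction as the paper's proof: both show $(\one-R^{(s)}d-dR^{(s)})\dotsb(\one-R^{(1)}d-dR^{(1)})=\one-\hat{R}^{(s)}d-d\hat{R}^{(s)}$ by expanding the product of two factors using $d^{2}=0$. The only difference is that you isolate the general identity $(\one-L_A)(\one-L_B)=\one-L_{A+B-AL_B}$ before applying it, whereas the paper carries out the same expansion directly within the inductive step.
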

\begin{proof}
We prove
\begin{equation}
(\one-R^{(s)}d-dR^{(s)})\dotsb (\one-R^{(1)}d-dR^{(1)}) = \one -\hat{R}^{(s)}d-d\hat{R}^{(s)}
\end{equation}
by induction over $s$. It is obviously true for $s=1$. Suppose the statement holds for a given $s\geq 1$. Then
\begin{align}
&(\one-R^{(s+1)}d-dR^{(s+1)})\dotsb (\one-R^{(1)}d-dR^{(1)})\nonumber\\
&\qquad = (\one-R^{(s+1)}d-dR^{(s+1)}) (\one -\hat{R}^{(s)}d-d\hat{R}^{(s)})\\
&\qquad = \one -R^{(s+1)}d -dR^{(s+1)}-\hat{R}^{(s)}d - d \hat{R}^{(s)}\nonumber\\
&\qquad \quad + R^{(s+1)}d\hat{R}^{(s)}d + dR^{(s+1)}\hat{R}^{(s)}d + d R^{(s+1)}d\hat{R}^{(s)}\\
&\qquad = \one - \big(R^{(s+1)}+\hat{R}^{(s)}-R^{(s+1)}\hat{R}^{(s)}d - R^{(s+1)}d\hat{R}^{(s)} \big)d\nonumber\\
&\qquad \quad -d \big(R^{(s+1)}+\hat{R}^{(s)}-R^{(s+1)}\hat{R}^{(s)}d - R^{(s+1)}d\hat{R}^{(s)} \big)\\
&\qquad = \one -\hat{R}^{(s+1)}d -d\hat{R}^{(s+1)}\ .
\end{align}
\end{proof}

\subsection{Morphisms are determined by their action on polynomials}\label{sec:app-proof-of-lemma}
Here, we provide a proof of Lemma~\ref{lem:phideterminedbyRendomorphisms} that states that the cohomology class of a morphism is determined already by the action of the morphism on polynomials.
\begin{proof}[Proof of Lemma~\ref{lem:phideterminedbyRendomorphisms}]
\refstepcounter{dummy}\label{proof:phideterminedbyRendomorphisms}
Let $\phi =\phi^{(n)}\in\Hom_{n}(U,V)$ be a closed morphism of degree
$n$ between the functors $U$ and $V$ that vanishes on chains of endomorphisms of $R$. To prove the proposition, we fix for each free module $M$ of rank $m_{M}$ an isomorphism to $R^{m_{M}}$. We denote the map to the $k^{\text{th}}$ factor by $\pi_{M}^{k}:M\to R$ and the inclusion by $\iota_{M}^{k}:R\to M$ such that
\begin{equation}
\sum_{k=1}^{m_{M}} \iota_{M}^{k}\,\pi_{M}^{k} = \one_{M} \quad \text{and for all}\ k=1,\dotsc ,m_{M}:\ \pi_{M}^{k}\,\iota_{M}^{k} = \one_{R} \ .
\end{equation}
We now construct a morphism $\phi^{(n-1)}$ between $U$ and $V$ such that $\phi=d\phi^{(n-1)}$. For a sequence of homomorphisms $M_{n}\xleftarrow{f_{n-1}} M_{n-1}
\xleftarrow{f_{n-2}}\dotsb \xleftarrow{f_{1}} M_{1}$ we introduce the notations 
\begin{align}
\iota_{m}^{k}& :=\iota_{M_{m}}^{k} \ ,\quad  \pi_{m}^{k}:=\pi_{M_{m}}^{k}\\
p_{m}^{(k)}&:= \pi_{m+1}^{k_{m+1}}\, f_{m}\,
\iota_{m}^{k_{m}}:R\to R \ .
\end{align}
We then define $\phi^{(n-1)}$ by
\begin{equation}
\phi^{(n-1)} = \sum_{j=1}^{n} (-1)^{n-j+1}\, \phi^{(n-1)}_{[j]}
\end{equation}
with
\begin{align}
\phi^{(n-1)}_{[1]}(f_{n-1}\listdots f_{1}) &= V (\iota_{n}^{k_{n}}) \,  \phi 
(p_{n-1}^{(k)}\listdots p_{2}^{(k)},\pi_{2}^{k_{2}}\,  f_{1},\iota_{1}^{k_{1}})
\,  U (\pi_{1}^{k_{1}}) \nonumber\\
\phi^{(n-1)}_{[2]}(f_{n-1}\listdots f_{1}) &= V (\iota_{n}^{k_{n}})\,  \phi 
(p_{n-1}^{(k)}\listdots p_{3}^{(k)},\pi_{3}^{k_{3}}\,  f_{2},\iota_{2}^{k_{2}},\pi_{2}^{k_{2}}\,  f_{1})\nonumber\\
\vdots &\nonumber \\
\phi^{(n-1)}_{[n-2]}  (f_{n-1}\listdots f_{1}) &= V(\iota_{n}^{k_{n}})\,  \phi 
(p_{n-1}^{(k)},\pi_{n-1}^{k_{n-1}}f_{n-2},\iota_{n-2}^{k_{n-2}},\pi_{n-2}^{k_{n-2}} 
f_{n-3}\listdots f_{1})\nonumber\\
\phi^{(n-1)}_{[n-1]} (f_{n-1}\listdots f_{1}) &= V(\iota_{n}^{k_{n}}) \,  \phi  
(\pi_{n}^{k_{n}}f_{n-1}\iota_{n-1}^{k_{n-1}},\pi_{n-1}^{k_{n-1}}f_{n-2},f_{n-3}\listdots f_{1})\nonumber\\
\phi^{(n-1)}_{[n]} (f_{n-1}\listdots f_{1}) &= \phi 
(\iota_{n}^{k_{n}},\pi_{n}^{k_{n}}\,  f_{n-1},f_{n-2}\listdots f_{1} )\ ,
\end{align}
where a summation over all indices $k_{i}$ is always understood.

Consider first $d\phi^{(n-1)}_{[n]}$,
\begin{align}
&d\phi^{(n-1)}_{[n]} (f_{n}\listdots f_{1})\nonumber \\
&\quad = 
V (f_{n})\,  \phi^{(n-1)}_{[n]}(f_{n-1}\listdots f_{1}) - \phi^{(n-1)}_{[n]} (f_{n}\,  f_{n-1},f_{n-2}\listdots f_{1})
\nonumber\\
&\qquad +\dotsb + (-1)^{n-1} \phi^{(n-1)}_{[n]} (f_{n}\listdots f_{3},f_{2}\, 
f_{1}) + (-1)^{n} \phi^{(n-1)}_{[n]}
(f_{n}\listdots f_{2})\,  U (f_{1})\\
&\quad = V (f_{n})\,  \phi  (\iota_{n}^{k_{n}},\pi_{n}^{k_{n}}\, 
f_{n-1},f_{n-2}\listdots f_{1}) -\phi  (\iota_{n+1}^{k_{n+1}},\pi_{n+1}^{k_{n+1}}  f_{n}
f_{n-1},f_{n-2}\listdots f_{1})\nonumber\\
&\qquad +\dotsb + (-1)^{n} \phi  (\iota_{n+1}^{k_{n+1}},\pi_{n+1}^{k_{n+1}}\, 
f_{n},f_{n-1}\listdots f_{2})\,  U (f_{1})\\
&\quad =V (f_{n})\,  \phi  (\iota_{n}^{k_{n}},\pi_{n}^{k_{n}}\, 
f_{n-1},f_{n-2}\listdots f_{1})\nonumber\\
&\qquad +\Big\{ V (\iota_{n+1}^{k_{n+1}})\,  \phi  (\pi_{n+1}^{k_{n+1}}\, 
f_{n},f_{n-1}\listdots f_{1}) -\phi  (f_{n}\listdots f_{1})
\Big\} \ ,
\label{firstphi_nminusone}
\end{align}
where we used in the last step that 
\begin{equation}
d\phi  (\iota_{n+1}^{k_{n+1}},\pi_{n+1}^{k_{n+1}}\,  f_{n},f_{n-1},\dotsc ,f_{1}) = 0
\end{equation}
due to the closure of $\phi $.

Now consider $\phi^{(n-1)}_{[j]}$ for $1<j<n$. We have
\begin{align}
&d\phi^{(n-1)}_{[j]} (f_{n}\listdots f_{1}) = V (\iota_{n+1}^{k_{n+1}})\, \nonumber\\
 &\bigg\{
\Big[ V (p_{n}^{(k)})\,  \phi  (p_{n-1}^{(k)}\listdots p_{j+1}^{(k)},\pi_{j+1}^{k_{j+1}}\, 
f_{j},\iota_{j}^{k_{j}},\pi_{j}^{k_{j}}\,  f_{j-1},f_{j-2}\listdots f_{1})\nonumber\\
&\quad - \phi  (p_{n}^{(k)}  p_{n-1}^{(k)},p_{n-2}^{(k)}\listdots p_{j+1}^{(k)},\pi_{j+1}^{k_{j+1}}\, 
f_{j},\iota_{j}^{k_{j}},\pi_{j}^{k_{j}}\,  f_{j-1},f_{j-2}\listdots f_{1}) \nonumber\\
&\quad +\dotsb \nonumber\\
&\quad + (-1)^{n-j-1}  \phi  (p_{n}^{(k)}\listdots p_{j+2}^{(k)}\, 
p_{j+1}^{(k)},\pi_{j+1}^{k_{j+1}}\,  f_{j},\iota_{j}^{k_{j}},\pi_{j}^{k_{j}}\, 
f_{j-1},f_{j-2}\listdots f_{1})\nonumber\\
&\quad + (-1)^{n-j} \phi  (p_{n}^{(k)}\listdots p_{j+2}^{(k)},\pi_{j+2}^{k_{j+2}}\, 
f_{j+1}\,  f_{j},\iota_{j}^{k_{j}},\pi_{j}^{k_{j}}\,  f_{j-1},f_{j-2}\listdots f_{1})
\Big]\nonumber\\[2mm]
&\quad + \Big[ (-1)^{n-j+1} \phi  (p_{n}^{(k)}\listdots p_{j+2}^{(k)},\pi_{j+2}^{k_{j+2}}\,  f_{j+1},\iota_{j+1}^{k_{j+1}},\pi_{j+1}^{k_{j+1}}\,  f_{j}\, 
f_{j-1},f_{j-2}\listdots f_{1}) \nonumber\\
&\quad + (-1)^{n-j+2} \phi  (p_{n}^{(k)}\listdots p_{j+2}^{(k)},\pi_{j+2}^{k_{j+2}}\,  f_{j+1},\iota_{j+1}^{k_{j+1}},\pi_{j+1}^{k_{j+1}}\, 
f_{j},f_{j-1}\,  f_{j-2}\listdots f_{1}) \nonumber\\
&\quad +\dotsb \nonumber\\
&\quad + (-1)^{n-1} \phi  (p_{n}^{(k)}\listdots p_{j+2}^{(k)},\pi_{j+2}^{k_{j+2}}\,  f_{j+1},\iota_{j+1}^{k_{j+1}},\pi_{j+1}^{k_{j+1}}\, 
f_{j},f_{j-1}\listdots f_{3},f_{2}\, 
f_{1}) \nonumber\\
&\quad + (-1)^{n} \phi  (p_{n}^{(k)}\listdots p_{j+2}^{(k)},\pi_{j+2}^{k_{j+2}}\,  f_{j+1},\iota_{j+1}^{k_{j+1}},\pi_{j+1}^{k_{j+1}}\, 
f_{j},f_{j-1}\listdots f_{2}) \,  U (f_{1}) 
\Big]\bigg\}\nonumber \\
&= V (\iota_{n+1}^{k_{n+1}})\,  \Big\{ A (n,j) + B (n,j)\Big\}\ ,
\end{align}
where $A(n,j)$ combines the first $n-j+1$ summands (first brackets),
and $B (n,j)$ the remaining $j$ summands (second brackets).

From the closure of $\phi $ we know that for $1<j<n-1$ we have
\begin{align}
0 &= d\phi  (p_{n}^{(k)}\listdots p_{j+2}^{(k)},\pi^{k_{j+2}}_{j+2}\,  f_{j+1},\iota_{j+1}^{k_{j+1}},\pi^{k_{j+1}}_{j+1}\, 
f_{j},f_{j-1}\listdots f_{1})\\
&=V(p_{n}^{(k)})  \phi  (p_{n-1}^{(k)}\listdots p_{j+2}^{(k)},\pi^{k_{j+2}}_{j+2}\,  f_{j+1},\iota_{j+1}^{k_{j+1}},\pi^{k_{j+1}}_{j+1}\, 
f_{j},f_{j-1}\listdots f_{1}) \nonumber\\
&\quad - \phi  (p_{n}^{(k)}  p_{n-1}^{(k)}\listdots p_{j+2}^{(k)},\pi^{k_{j+2}}_{j+2}\,  f_{j+1},\iota_{j+1}^{k_{j+1}},\pi^{k_{j+1}}_{j+1}\, 
f_{j},f_{j-1}\listdots f_{1})\nonumber\\
&\quad  +\dotsb \nonumber\\
&\quad + (-1)^{n-j-1} \phi  (p_{n}^{(k)}\listdots p_{j+3}^{(k)},\pi_{j+3}^{k_{j+3}}\, 
f_{j+2}\,  f_{j+1},\iota_{j+1}^{k_{j+1}},\pi^{k_{j+1}}_{j+1}\,  f_{j},f_{j-1}\listdots f_{1})\nonumber\\[2mm]
&\quad + (-1)^{n-j} \phi  (p_{n}^{(k)}\listdots p_{j+1}^{(k)},\pi^{k_{j+1}}_{j+1}\, 
f_{j},f_{j-1}\listdots f_{1})\nonumber\\
&\quad + (-1)^{n-j+1} \phi  (p_{n}^{(k)}\listdots p_{j+2}^{(k)},\pi^{k_{j+2}}_{j+2}\, 
f_{j+1},f_{j}\listdots f_{1})\nonumber\\[2mm]
&\quad + (-1)^{n-j+2} \phi  (p_{n}^{(k)}\listdots p_{j+2}^{(k)},\pi^{k_{j+2}}_{j+2}\, 
f_{j+1},\iota_{j+1}^{k_{j+1}},\pi^{k_{j+1}}_{j+1}\,  f_{j}\, 
f_{j-1},f_{j-2}\listdots f_{1})\nonumber\\
&\quad +\dotsb \nonumber\\
&\quad + (-1)^{n} \phi  (p_{n}^{(k)}\listdots p_{j+2}^{(k)},\pi^{k_{j+2}}_{j+2}\, 
f_{j+1},\iota_{j+1}^{k_{j+1}},\pi^{k_{j+1}}_{j+1}\, 
f_{j},f_{j-1}\listdots f_{3},f_{2}\, 
f_{1}) \nonumber\\
&\quad + (-1)^{n+1} \phi  (p_{n}^{(k)}\listdots p_{j+2}^{(k)},\pi^{k_{j+2}}_{j+2}\, 
f_{j+1},\iota_{j+1}^{k_{j+1}},\pi^{k_{j+1}}_{j+1}\, 
f_{j},f_{j-1}\listdots f_{2})\,  U(f_{1})\nonumber\\
&= A (n,j+1) - B (n,j)\nonumber\\
&\quad + (-1)^{n-j} \phi  (p_{n}^{(k)}\listdots p_{j+1}^{(k)},\pi^{k_{j+1}}_{j+1}\, 
f_{j},f_{j-1}\listdots f_{1})\nonumber\\
&\quad + (-1)^{n-j+1} \phi  (p_{n}^{(k)}\listdots p_{j+2}^{(k)},\pi^{k_{j+2}}_{j+2}\, 
f_{j+1},f_{j}\listdots f_{1}) \ .
\end{align}
This also holds for $j=1$ and $j=n-1$ if we set
\begin{align}
B (n,1) &= (-1)^{n} \phi  (p_{n}^{(k)},\dotsc ,p_{3}^{(k)},\pi_{3}^{k_{3}}\,  f_{2},\iota_{2}^{k_{2}})\,  U (\pi_{2}^{k_{2}}\,  f_{1})\\
A (n,n) &= V (\pi_{n+1}^{k_{n+1}}\,  f_{n}) \,  \phi  (\iota_{n}^{k_{n}},\pi_{n}^{k_{n}}\, 
f_{n-1},f_{n-2},\dotsc ,f_{1}) \ .
\end{align}
Therefore the contribution of the $\phi^{(n-1)}_{[j]}$ for $1<j<n$ to
$d\phi^{(n-1)}$ is
\begin{align}
&\sum_{j=2}^{n-1} (-1)^{n-j+1} d\phi^{(n-1)}_{[j]} (f_{n},\dotsc
,f_{1})\nonumber\\
&\quad = \sum_{j=2}^{n-1} (-1)^{n-j+1}V (\iota_{n+1}^{k_{n+1}})\,  \Big\{A (n,j)+B (n,j)\Big\}\nonumber\\
&\quad = V (\iota_{n+1}^{k_{n+1}})\,  \bigg\{A (n,n) - (-1)^{n}B (n,1)
-\sum_{j=1}^{n-1} (-1)^{n-j+1}\Big\{A (n,j+1)-B (n,j)\Big\}\bigg\}\nonumber\\
&\quad = V (\iota_{n+1}^{k_{n+1}})\,  \bigg\{A (n,n) - (-1)^{n}B (n,1)\nonumber\\
&\quad \qquad \qquad \qquad +\sum_{j=1}^{n-1} \Big\{
\phi  (p_{n}^{(k)},\dotsc ,p_{j+2}^{(k)},\pi_{j+2}^{k_{j+2}}\, 
f_{j+1},f_{j},\dotsc ,f_{1}) \nonumber\\
&\quad \qquad \qquad \quad \qquad \qquad -\phi  (p_{n}^{(k)},\dotsc ,p_{j+1}^{(k)},\pi_{j+1}^{k_{j+1}}\, 
f_{j},f_{j-1},\dotsc ,f_{1})
\Big\}\bigg\}\nonumber\\
&\quad = V (f_{n}) \,  \phi  (\iota_{n}^{k_{n}},\pi_{n}^{k_{n}}\, 
f_{n-1},f_{n-2},\dotsc ,f_{1})\nonumber\\
&\quad \quad - V (\iota_{n+1}^{k_{n+1}})\,  \phi  (p_{n}^{(k)},\dotsc ,p_{3}^{(k)},\pi_{3}^{k_{3}}\,  f_{2},\iota_{2}^{k_{2}})
\,  U (\pi_{2}^{k_{2}}\,  f_{1})\nonumber\\
&\quad \quad +V (\iota_{n+1}^{k_{n+1}})\,  \phi  (\pi_{n+1}^{k_{n+1}}\, 
f_{n},f_{n-1},\dotsc ,f_{1}) \nonumber\\
&\quad \quad -V (\iota_{n+1}^{k_{n+1}})\,  \phi  (p_{n}^{(k)},\dotsc ,p_{2}^{(k)},\pi_{2}^{k_{2}}\,  f_{1})
\ . 
\end{align}
Combining this result with~\eqref{firstphi_nminusone} we obtain
\begin{align}
&\sum_{j=2}^{n} (-1)^{n-j+1} d\phi^{(n-1)}_{[j]} (f_{n},\dotsc
,f_{1}) \nonumber\\
&\qquad
= \phi  (f_{n},\dotsc ,f_{1}) - V (\iota_{n+1}^{k_{n+1}})\,  \phi  
(p_{n}^{(k)},\dotsc ,p_{3}^{(k)},\pi_{3}^{k_{3}}\,  f_{2},\iota_{2}^{k_{2}})
\,  U (\pi_{2}^{k_{2}}\,  f_{1})\nonumber\\
&\qquad \quad -V (\iota_{n+1}^{k_{n+1}})\,  \phi  
(p_{n}^{(k)},\dotsc ,p_{2}^{(k)},\pi_{2}^{k_{2}}\,  f_{1})\ .
\end{align}
Let us now consider the contribution of $\phi^{(n-1)}_{[1]}$. We have
\begin{align}
&d\phi^{(n-1)}_{[1]} (f_{n},\dotsc ,f_{1}) \nonumber\\
&= V (f_{n}  \iota_{n}^{k_{n}})  \phi  (p_{n-1}^{(k)},\dotsc ,p_{2}^{(k)},\pi_{2}^{k_{2}} 
f_{1},\iota_{1}^{k_{1}})   U (\pi_{1}^{k_{1}})\nonumber\\
&\quad -V (\iota_{n+1}^{k_{n+1}})\,  \phi  (p_{n}^{(k)}  p_{n-1}^{(k)} ,p_{n-2}^{(k)},\dotsc ,p_{2}^{(k)},\pi_{2}^{k_{2}}\, 
f_{1},\iota_{1}^{k_{1}}) \,  U(\pi_{1}^{k_{1}})\nonumber\\
&\quad +\dotsb \nonumber\\
&\quad + (-1)^{n-2} V (\iota_{n+1}^{k_{n+1}})\,  \phi  (p_{n}^{(k)},\dotsc ,
p_{3}^{(k)}  p_{2}^{(k)},\pi_{2}^{k_{2}}\, 
f_{1},\iota_{1}^{k_{1}}) \,  U (\pi_{1}^{k_{1}})\nonumber\\
&\quad + (-1)^{n-1} V (\iota_{n+1}^{k_{n+1}})\,  \phi  (p_{n}^{(k)},\dotsc ,p_{3}^{(k)},\pi_{3}^{k_{3}}\, 
f_{2}\,  f_{1},\iota_{1}^{k_{1}})\,  U (\pi_{1}^{k_{1}})\nonumber\\
&\quad + (-1)^{n} V (\iota_{n+1}^{k_{n+1}})\,  \phi  (p_{n}^{(k)},\dotsc ,p_{3}^{(k)},\pi_{3}^{k_{3}}\, 
f_{2},\iota_{2}^{k_{2}}) \,  U (\pi_{2}^{k_{2}}\,  f_{1})\ .
\label{dphifirst}
\end{align}
From the closure of $\phi $ we conclude that
\begin{align}
0&=d\phi  (p_{n}^{(k)},\dotsc ,p_{2}^{(k)},\pi_{2}^{k_{2}}\,  f_{1},\iota_{1}^{k_{1}}) \nonumber\\
&= V (p_{n}^{(k)})  \phi  (p_{n-1}^{(k)},\dotsc ,p_{2}^{(k)},\pi_{2}^{k_{2}}\, 
f_{1},\iota_{1}^{k_{1}}) \nonumber\\
&\quad - \phi  (p_{n}^{(k)}  p_{n-1}^{(k)},p_{n-2}^{(k)},\dotsc ,p_{2}^{(k)},\pi_{2}^{k_{2}}\, 
f_{1},\iota_{1}^{k_{1}})\nonumber\\
&\quad +\dotsb \nonumber\\
&\quad + (-1)^{n-2} \phi  (p_{n}^{(k)},\dotsc ,
p_{3}^{(k)} 
p_{2}^{(k)},\pi_{2}^{k_{2}}\,  f_{1},\iota_{1}^{k_{1}})\nonumber\\
&\quad + (-1)^{n-1}  \phi  (p_{n}^{(k)},\dotsc ,p_{3}^{(k)},\pi_{3}^{k_{3}}\,  f_{2}\, 
f_{1},\iota_{1}^{k_{1}})\nonumber\\
&\quad + (-1)^{n} \phi  (p_{n}^{(k)},\dotsc ,p_{1}^{(k)})\nonumber\\
&\quad + (-1)^{n+1} \phi  (p_{n}^{(k)},\dotsc ,p_{2}^{(k)},\pi_{2}^{k_{2}}\,  f_{1}) \,  U(\iota_{1}^{k_{1}})
\ .
\end{align}
When acting with $V (\iota_{n+1}^{k_{n+1}})$ from the left and with $ U(\pi_{1}^{k_{1}})$ from the right, the first $n$ terms (up to and including the term starting with $(-1)^{n-1}$) coincide with the first $n$ terms appearing in the expression for $d\phi^{(n-1)}_{[1]}$ in~\eqref{dphifirst},
and hence
\begin{align}
&d\phi^{(n-1)}_{[1]} (f_{n},\dotsc ,f_{1}) \nonumber\\
&= - (-1)^{n}V (\iota_{n+1}^{k_{n+1}})\,  \phi  (p_{n}^{(k)},\dotsc ,p_{1}^{(k)})  U
(\pi_{1}^{k_{1}})\nonumber\\
&\quad - (-1)^{n+1} V (\iota_{n+1}^{k_{n+1}})\,  \phi  (p_{n}^{(k)},\dotsc
,p_{2}^{(k)},\pi_{2}^{k_{2}}\,  f_{1})\nonumber\\
&\quad + (-1)^{n} V (\iota_{n+1}^{k_{n+1}}\,  \phi  (p_{n}^{(k)},\dotsc ,p_{3}^{(k)},\pi_{3}^{k_{3}}\, 
f_{2},\iota_{2}^{k_{2}}) \,  U (\pi_{2}^{k_{2}}\,  f_{1})\ .
\end{align}
Therefore we get the desired result for $d\phi^{(n-1)}$ 
\begin{align}
d\phi^{(n-1)} (f_{n},\dotsc ,f_{1}) &=\sum_{j=1}^{n} (-1)^{n-j+1}d\phi^{(n-1)}_{[j]} (f_{n},\dotsc ,f_{1})\\
&= \phi  (f_{n},\dotsc ,f_{1}) \nonumber\\
&\quad -V (\iota_{n+1}^{k_{n+1}})\,  \phi  (p_{n}^{(k)},\dotsc ,p_{1}^{(k)})\,  U
(\pi_{1}^{k_{1}}) \\
&= \phi  (f_{n},\dotsc ,f_{1}) \ ,
\end{align}
where we have used that $\phi$ vanishes for chains of endomorphisms $p_{i}^{(k)}$ of $R$.
\end{proof}

\subsection{Proof of Proposition~\ref{prop:naturaltransformationPiQ}}\label{sec:app-proof-naturality}
Here we give a proof of Proposition~\ref{prop:naturaltransformationPiQ} which states that a morphism $\varphi$ between matrix factorizations $Q_1$ and $Q_2$ induces a natural transformation between the functors $\Pi^{Q_1}$ and $\Pi^{Q_2}$. In particular, we prove that (see~\eqref{naturalityxi})
\begin{equation}
V (\varphi)\, \phi^{(n)}_{UV} (Q_{1},\dotsc ,Q_{1}) = \phi^{(n)}_{UV} (Q_{2},\dotsc ,Q_{2})\,U (\varphi) + (\delta_{U (Q_{1}),V (Q_{2})}\text{-exact terms}) 
\end{equation}
for all closed morphisms $\phi^{(n)}_{UV}$ between $(W',W)$-fusion functors $U$ and $V$.

\begin{proof}
Let $1\leq s\leq n-1$. Then
\begin{align}
0 &= d\phi^{(n)}_{UV} (\underbrace{Q_{2} \listdots   Q_{2}}_{s},\varphi,\underbrace{Q_{1}\listdots  Q_{1}}_{n-s})\\
&= V (Q_{2})\phi^{(n)}_{UV} (\underbrace{Q_{2} \listdots Q_{2}}_{s-1},\varphi,\underbrace{Q_{1}\listdots Q_{1}}_{n-s}) + \dotsb\nonumber\\
&\quad
+ (-1)^{s}\phi^{(n)}_{UV} (\underbrace{Q_{2} \listdots Q_{2}}_{s-1},Q_{2}\,\varphi,\underbrace{Q_{1}\listdots Q_{1}}_{n-s})\nonumber\\
&\quad + (-1)^{s+1} \phi^{(n)}_{UV} (\underbrace{Q_{2} \listdots Q_{2}}_{s},\varphi\,Q_{1},\underbrace{Q_{1}\listdots Q_{1}}_{n-s-1}) + \dotsb \nonumber\\
&\quad + (-1)^{n+1} \phi^{(n)}_{UV} (\underbrace{Q_{2}\listdots Q_{2}}_{s},\varphi,\underbrace{Q_{1}\listdots Q_{1}}_{n-s-1})U (Q_{1})\ .\label{dphiexprwithdots}
\end{align}
The summands appearing as dots in~\eqref{dphiexprwithdots} are zero, for example
\begin{equation}
\phi^{(n)}_{UV} (Q_{2}^{2},\dotsc) = \phi^{(n)}_{UV} (W\cdot \one ,\dotsc)=W' \phi^{(n)}_{UV} (\one ,\dotsc ) = 0\ .
\end{equation}
In the next to last line we can express $\varphi Q_{1}=Q_{2}\varphi$ because $\varphi$ is closed. Next we add the $\delta_{U (Q_{1}),V (Q_{2})}$-exact term
\begin{multline}
\delta_{U (Q_{1}),V (Q_{2})}\phi^{(n)} (\underbrace{Q_{2}\listdots Q_{2}}_{s},\varphi,\underbrace{Q_{1}\listdots Q_{1}}_{n-s-1})\\
 = V (Q_{2})  \phi^{(n)} (\underbrace{Q_{2}\listdots Q_{2}}_{s},\varphi,\underbrace{Q_{1}\listdots Q_{1}}_{n-s-1}) + (-1)^{n} \phi^{(n)} (\underbrace{Q_{2}\listdots Q_{2}}_{s},\varphi,\underbrace{Q_{1}\listdots Q_{1}}_{n-s-1}) U (Q_{1})
\end{multline}
and find 
\begin{align}
0&=  V (Q_{2})\phi^{(n)}_{UV} (\underbrace{Q_{2}\listdots Q_{2}}_{s-1},\varphi,\underbrace{Q_{1}\listdots Q_{1}}_{n-s})
+ (-1)^{s}\phi^{(n)}_{UV} (\underbrace{Q_{2}\listdots Q_{2}}_{s-1},Q_{2}\,\varphi,\underbrace{Q_{1}\listdots Q_{1}}_{n-s})\nonumber\\
&\quad + V (Q_{2}) \phi^{(n)}_{UV} (\underbrace{Q_{2}\listdots Q_{2}}_{s},\varphi,\underbrace{Q_{1}\listdots Q_{1}}_{n-s-1})\nonumber\\
&\quad + (-1)^{s+1} \phi^{(n)}_{UV} (\underbrace{Q_{2}\listdots Q_{2}}_{s},Q_{2}\,\varphi,\underbrace{Q_{1}\listdots Q_{1}}_{n-s-1}) + (\delta_{U (Q_{1}),V (Q_{2})}\text{-exact terms}) \ .
\end{align}
For $s=0$ or $s=n$ we obtain by similar arguments
\begin{align}
0 &= V (\varphi) \phi^{(n)}_{UV} (\underbrace{Q_{1}\listdots Q_{1}}_{n}) \nonumber\\
&\quad + V (Q_{2}) \phi^{(n)}_{UV} (\varphi, \underbrace{Q_{1}\listdots Q_{1}}_{n-1}) - \phi^{(n)}_{UV} (Q_{2}\varphi, \underbrace{Q_{1}\listdots Q_{1}}_{n-1})
+ (\delta_{U (Q_{1}),V (Q_{2})}\text{-exact terms}) \ ,\\
0 &= V (Q_{2}) \phi^{(n)}_{UV} (\underbrace{Q_{2}\listdots Q_{2}}_{n-1},\varphi)+ (-1)^{n}\phi^{(n)}_{UV} (\underbrace{Q_{2}\listdots Q_{2}}_{n-1},Q_{2}\,\varphi) \nonumber\\
&\quad + (-1)^{n+1} \phi^{(n)}_{UV} (\underbrace{Q_{2}\listdots Q_{2}}_{n})U (\varphi) + (\delta_{U (Q_{1}),V (Q_{2})}\text{-exact terms}) \ .
\end{align}
Adding all these terms with alternating sign we obtain
\begin{align}
0 &= \sum_{s=0}^{n} (-1)^{s} d\phi^{(n)}_{UV} (\underbrace{Q_{2} \listdots   Q_{2}}_{s},\varphi,\underbrace{Q_{1}\listdots  Q_{1}}_{n-s})\\
&= V (\varphi) \phi^{(n)}_{UV} (\underbrace{Q_{1}\listdots Q_{1}}_{n}) - \phi^{(n)}_{UV} (\underbrace{Q_{2}\listdots Q_{2}}_{n})U (\varphi) + (\delta_{U (Q_{1}),V (Q_{2})}\text{-exact terms})\ ,
\end{align}
which concludes the proof.
\end{proof}

\subsection{Extension to $R\otimes \tilde{R}$-modules}\label{sec:app-bimodules}
Given a $(W',W)$-fusion functor $U$ that acts on free finite-rank $R$-modules, we can extend it to a functor $\hat{U}\in \Fun (R'\otimes \tilde{R},R\otimes \tilde{R})$ from the category of (free, finite-rank) $(R\otimes_{\mathbb{C}}\tilde{R})$-modules to the category of (free, finite-rank) $(R'\otimes_{\mathbb{C}}\tilde{R})$-modules. For every such $(R\otimes_{\mathbb{C}}\tilde{R})$-module $N$ we can find an isomorphism 
\begin{equation}
\xi_{N}:N\to M\otimes_{\mathbb{C}}\tilde{M}
\end{equation}
to a tensor product of an $R$-module $M$ and an $\tilde{R}$-module $\tilde{M}$. We then set
\begin{equation}
\hat{U} (N):= U (M)\otimes_{\mathbb{C}}\tilde{M}\ ,
\end{equation}
where within this appendix we distinguish the extension of $U$ by a hat; in the main text we will denote it for better readability by $U$.

Similarly, for a bimodule homomorphism $\varphi:N_{1}\to N_{2}$, we can write
\begin{equation}\label{fgmap}
\xi_{N_{2}}\, \varphi\, \xi_{N_{1}}^{-1}= \sum_{i}f_{i}\otimes g_{i}: M_{1}\otimes_{\mathbb{C}}\tilde{M}_{1}\to M_{2}\otimes_{\mathbb{C}}\tilde{M}_{2}\ ,
\end{equation}
where $f_{i}:M_{1}\to M_{2}$ are $R$-module homomorphisms, and $g_{i}:\tilde{M}_{1}\to \tilde{M}_{2}$ are $\tilde{R}$-module homomorphisms. We then set
\begin{equation}
\hat{U} (\varphi) := \sum_{i} U (f_{i})\otimes g_{i}\ .
\end{equation}
Note that this definition does not depend on the precise choice of the functions $f_{i},g_{i}$ in~\eqref{fgmap}. On the other hand, the extension depends on the choice of the isomorphisms $\xi_{N}$, but different choices lead to isomorphic functors:
\begin{prop}
Let $\hat{U}^{(a)}$ and $\hat{U}^{(b)}$ be two extensions of $U$ in the sense above corresponding to the two choices $\xi_{N}^{(i)}$ ($i=a,b$) of isomorphisms. Then there is a natural isomorphism between $\hat{U}^{(a)}$ and $\hat{U}^{(b)}$.
\end{prop}
\begin{proof}
For every bimodule $N$ we can write
\begin{equation}
\xi_{N}^{(b)} ( \xi_{N}^{(a)})^{-1} = \sum_{i}\eta_{N,i}\otimes \kappa_{N,i}:M^{(a)}\otimes_{\mathbb{C}}\tilde{M}^{(a)} \to M^{(b)}\otimes_{\mathbb{C}}\tilde{M}^{(b)}\ .
\end{equation}
We then define a natural transformation $\phi$ between $\hat{U}^{(a)}$ and $\hat{U}^{(b)}$ by setting
\begin{equation}
\phi (N) = \sum_{i}U (\eta_{N,i})\otimes \kappa_{N,i}\ .
\end{equation}
To show that this indeed defines a natural transformation, we have to check that for any homomorphism $\varphi:N_{1}\to N_{2}$ the following diagram commutes:
\begin{center}
\begin{tikzpicture}[thick, scale=1.2, rotate=0]

\coordinate (A) at (-2,1);
\coordinate (B) at ( 2,1);
\coordinate (C) at (-2,-1);
\coordinate (D) at ( 2,-1);

\draw[->] (-1.3,1)  -- (1.3,1) ;
\draw[->] (-1.3,-1) -- (1.3,-1) ;
\draw[->] (-2,0.7)  -- (-2,-0.7)  ;
\draw[->] (2,0.7)   -- (2,-0.7) ;

\draw (A) node[color=black] {$\hat{U}^{(a)} (N_{1})$};
\draw (B)+(0.1,0) node[color=black] {$\hat{U}^{(b)} (N_{1})$};
\draw (C) node[color=black] {$\hat{U}^{(a)} (N_{2})$};
\draw (D)+(0.1,0) node[color=black] {$\hat{U}^{(b)} (N_{2})$};

\draw (0,1.4) node[color=black] {$\phi (N_{1}) $};
\draw (0,-1.3) node[color=black] {$\phi (N_{2})  $};
\draw (-2.7,0) node[color=black] {$\hat{U}^{(a)} (\varphi)  $};
\draw ( 2.7,0) node[color=black] {$ \hat{U}^{(b)} (\varphi)$};
\end{tikzpicture}
\end{center}
Writing out the definitions of $\hat{U}^{(a)}$ and $\hat{U}^{(b)}$, the diagram reads
\begin{center}
\begin{tikzpicture}[thick, scale=1.2, rotate=0]

\coordinate (A) at (-2.8,1);
\coordinate (B) at ( 2.8,1);
\coordinate (C) at (-2.8,-1);
\coordinate (D) at ( 2.8,-1);

\draw[->] (-1.6,1)  -- (1.6,1) ;
\draw[->] (-1.6,-1) -- (1.6,-1) ;
\draw[->] (-2.8,0.7)  -- (-2.8,-0.7)  ;
\draw[->] (2.8,0.7)   -- (2.8,-0.7) ;

\draw (-3,1) node[color=black] {$U ( M_{1}^{(a)})\otimes_{\mathbb{C}}\tilde{M}_{1}^{(a)}$};
\draw (3,1) node[color=black] {$U (M_{1}^{(b)})\otimes_{\mathbb{C}}\tilde{M}_{1}^{(b)}$};
\draw (-3,-1) node[color=black] {$U (M_{2}^{(a)})\otimes_{\mathbb{C}}\tilde{M}_{2}^{(a)}$};
\draw (3,-1) node[color=black] {$U ( M_{2}^{(b)})\otimes_{\mathbb{C}}\tilde{M}_{2}^{(b)}$};

\draw (0,1.4) node[color=black] {$\sum U (\eta_{N_{1},j})\otimes \kappa_{N_{1}}$};
\draw (0,-1.3) node[color=black] {$\sum U ( \eta_{N_{2},j})\otimes \kappa_{N_{2}}$};
\draw (-4.1,0) node[color=black] {$\sum U ( f_{i}^{(a)})\otimes g_{i}^{(a)} $};
\draw ( 4.1,0) node[color=black] {$\sum U ( f_{i}^{(b)})\otimes g_{i}^{(b)}$};
\end{tikzpicture}
\end{center}
The commutativity of this diagram is a consequence of the commutativity of the following diagram (which commutes by definition of the maps involved):
\begin{center}
\begin{tikzpicture}[thick, scale=1.2, rotate=0]

\coordinate (A) at (-2.5,1);
\coordinate (B) at ( 2.5,1);
\coordinate (C) at (-2.5,-1);
\coordinate (D) at ( 2.5,-1);

\draw[->] (-1.3,1)  -- (1.3,1) ;
\draw[->] (-1.3,-1) -- (1.3,-1) ;
\draw[->] (-2.5,0.7)  -- (-2.5,-0.7)  ;
\draw[->] (2.5,0.7)   -- (2.5,-0.7) ;

\draw (A) node[color=black] {$M_{1}^{(a)}\otimes_{\mathbb{C}}\tilde{M}_{1}^{(a)}$};
\draw (B)+(0.1,0) node[color=black] {$M_{1}^{(b)}\otimes_{\mathbb{C}}\tilde{M}_{1}^{(b)}$};
\draw (C) node[color=black] {$M_{2}^{(a)}\otimes_{\mathbb{C}}\tilde{M}_{2}^{(a)}$};
\draw (D)+(0.1,0) node[color=black] {$M_{2}^{(b)}\otimes_{\mathbb{C}}\tilde{M}_{2}^{(b)}$};

\draw (0,1.4) node[color=black] {$\sum \eta_{N_{1},j}\otimes \kappa_{N_{1}}$};
\draw (0,-1.3) node[color=black] {$\sum \eta_{N_{2},j}\otimes \kappa_{N_{2}}$};
\draw (-3.6,0) node[color=black] {$\sum f_{i}^{(a)}\otimes g_{i}^{(a)} $};
\draw ( 3.6,0) node[color=black] {$\sum f_{i}^{(b)}\otimes g_{i}^{(b)}$};
\end{tikzpicture}
\end{center}
\end{proof}
Similarly we can extend morphisms $\phi_{UV}$ between fusion functors $U,V$ to morphisms between the extensions $\hat{U}$ and $\hat{V}$:
\begin{Def}
Let $U,V$ be $(W',W)$-fusion functors, and let $\hat{U},\hat{V}$ be extensions as above to functors from the category of $(R\otimes_{\mathbb{C}}\tilde{R})$-modules to the category of $(R'\otimes_{\mathbb{C}}\tilde{R})$-modules. Let $\phi$ be a morphism of degree $n$ from $U$ to $V$. Then we define a morphism $\hat{\phi}$ between $\hat{U}$ and $\hat{V}$ as
\begin{equation}
\hat{\phi} \big( N_{n+1}  \xleftarrow{\varphi_{n}} N_{n}
\xleftarrow{\varphi_{n-1}} \dotsb \xleftarrow{\varphi_{1}} N_{1} \big) = \sum_{i_{1},\dotsc ,i_{n}} \phi (f_{n,i_{n}},\dotsc ,f_{1,i_{1}}) \otimes (g_{n,i_{n}}\dotsb g_{1,i_{1}})\ ,
\end{equation}
where 
\begin{equation}
\xi_{N_{j+1}}\varphi_{j}\xi_{N_{j}}^{-1} = \sum_{i_{j}} f_{j,i_{j}}\otimes g_{j,i_{j}}\ ,
\end{equation}
analogously to~\eqref{fgmap}.
\end{Def}
\begin{prop}
The extension is compatible with the differential,
\begin{equation}
\delta_{\hat{U}\hat{V}}\hat{\phi} = \widehat{\delta_{UV}\phi}\ .
\end{equation}
\end{prop}
\begin{proof}
Let $\phi$ be a morphism of degree $n$ between fusion functors $U$ and $V$, and let $\hat{\phi}$ be the corresponding morphism to the extensions $\hat{U}$ and $\hat{V}$. Using the notation as above, we have
\begin{align}
&\big(\delta_{\hat{U}\hat{V}}\hat{\phi} \big) (\varphi_{n+1},\dotsc ,\varphi_{1})\nonumber\\
 &\qquad = \hat{V} (\varphi_{n+1})\hat{\phi} (\varphi_{n},\dotsc ,\varphi_{1}) + \dotsb + (-1)^{n+1}\hat{\phi} (\varphi_{n+1},\dotsc ,\varphi_{2})\hat{U} (\varphi_{1})\nonumber\\
&\qquad = \sum_{i_{1},\dotsc ,i_{n+1}}\Big( \big(V (f_{n+1,i_{n+1}})\otimes g_{n+1,i_{n+1}} \big)\big(\phi (f_{n,i_{n}},\dotsc ,f_{1,i_{1}})\otimes g_{n,i_{n}}\dotsb g_{1,i_{1}} \big) + \dotsb \nonumber\\
&\qquad \qquad \qquad   + (-1)^{n+1} \big(\phi (f_{n+1,i_{n+1}},\dotsc ,f_{2,i_{2}})\otimes g_{n+1,i_{n+1}}\dotsb g_{2,i_{2}} \big)\big(U (f_{1,i_{1}})\otimes g_{1,i_{1}} \big)\Big)\nonumber\\
&\qquad = \sum_{i_{1},\dotsc ,i_{n+1}}\big(\delta_{UV}\phi \big) (f_{n+1,i_{n+1}},\dotsc ,f_{1,i_{1}}) \otimes g_{n+1,i_{n+1}}\dotsb g_{1,i_{1}}\nonumber\\
&\qquad =\widehat{\delta_{UV}\phi} ( \varphi_{n+1},\dotsc ,\varphi_{1})\ .
\end{align}
\end{proof}

\subsection{Hochschild cohomology of polynomial rings}\label{sec:app-proof-of-prop}
For completeness, we provide here an elementary proof of a statement on the structure of Hochschild cohomology of polynomial rings.
\begin{proof}[Proof of proposition~\ref{prop:highercohomologytrivial}]
\refstepcounter{dummy}\label{proof:highercohomologytrivial}
The Hochschild cohomology for an $R$-bimodule $M$ is given by 
\begin{equation}
HH^{n} (R,M) = \mathrm{Ext}^{n}_{R^{e}}(R,M) \ ,
\end{equation}
with the enveloping algebra\footnote{$R^{\mathrm{op}}$ denotes the opposite ring which is isomorphic to $R$ in our case.} $R^{e}\cong R\otimes_{\mathbb{C}} R^{\mathrm{op}}\cong \mathbb{C}[y_{1},\dotsc ,y_{d},z_{1},\dotsc ,z_{d}]$. The $\mathrm{Ext}$-groups can be computed from a Koszul resolution of $R$. The kernel $I$ of the multiplication map $\mu :R_{e}\to R$ is generated by the regular sequence $x=(y_{1}-z_{1},\dotsc ,y_{d}-z_{d})$,
\begin{equation}
R = \frac{R_{e}}{(y_{1}-z_{1},\dotsc ,y_{d}-z_{d})R_{e}} = \frac{R_{e}}{I}\ .
\end{equation} 
The regular sequence gives rise to the following exact sequence defining a free resolution of $R$ as an $R_{e}$-module,
\begin{equation}
\underbrace{0 \to \Lambda^{d} (R_{e})^{d} \to \dotsb \to \Lambda^{2}  (R_{e})^{d}\to R_{e}^{d} \xrightarrow{(y_{1}-z_{1},\dotsc ,y_{d}-z_{d})} R_{e}}_{K (x)} \to R \to 0\ ,
\end{equation}
where $\Lambda^{i}V$ denotes the completely antisymmetric subspace of $V^{\otimes i}$.  
On the Koszul sequence $K (x)$ we apply the functor $\Hom_{R^e} (-,M)$ and obtain the chain complex
\begin{equation}\label{chaincomplex}
0 \leftarrow \Hom_{R^e} (\Lambda^{d} (R_{e})^{d},M) \leftarrow \dotsb \leftarrow \Hom_{R^e} ((R_{e})^{d},M) \leftarrow \Hom_{R^e} (R_{e},M)\ .
\end{equation}
Its cohomology determines the $\mathrm{Ext}$-groups, $\mathrm{Ext}^{\bullet} (R,M)= H^{\bullet} (K (x),M)$. As the complex becomes trivial on the left, we find that 
\begin{equation}
\mathrm{Ext}_{R_{e}}^{n} (R,M) = 0 \ \text{for}\ n>d \ .
\end{equation}
\end{proof}
In the simplest case when $M=R$ we have $\Hom_{R^e} ( R_{e},R)\cong R$ as $R_{e}$-modules, and all maps in~\eqref{chaincomplex} become trivial. The cohomology is thus
\begin{equation}\label{HHcohomologyrings}
HH^{n} (R,R)\cong \mathsf{\Lambda}^{n}R^{d}\ .
\end{equation}

\section{Some explicit results on fusion functors}

\subsection{Explicit check of the induced morphism in the one variable case}\label{sec:explicitcheck}
We have seen in Section~\ref{sec:evenmorphisms} that for any given closed homomorphism $\varphi$ from $U (I_{W})$ to $V (I_{W})$, we can construct a morphism $\phi_{UV}$ (defined in~\eqref{defofPhi}) between the fusion functors $U$ and $V$. In this appendix, we verify that $\phi_{UV} (M_{I_{W}})$ coincides with $\varphi$ up to an exact term.

We want to evaluate
\begin{equation}\label{phiUVonMIW}
\phi_{UV} (M_{I_{W}}) = \begin{pmatrix}
\big(V (\mu)\otimes \one\big) (\varphi^{(0)}\otimes \one)U (\iota_{R\otimes_\bC R}) & 0\\
0 & \big(V (\mu)\otimes \one  \big) (\varphi^{(0)}\otimes \one)U (\iota_{R\otimes_\bC R}) 
\end{pmatrix}
\end{equation}
explicitly. We first write $\varphi^{(0)}$ as a matrix whose entries are polynomials in $x,\tilde{x}$, and expand it in powers of $\tilde{x}$,
\begin{equation}
\varphi^{(0)} (x,\tilde{x}) = \tilde{x}^{k}\varphi^{(0)}_{k} (x)\ ,
\end{equation}
a sum over $k$ is understood. Then $\phi_{UV} (M_{I_{W}})$ can be represented as
\begin{equation}
\tilde{\varphi}  = \begin{pmatrix}
 V (x^{k})\varphi^{(0)}_{k} (x) & 0\\
0 & V (x^{k})\varphi^{(0)}_{k} (x) 
\end{pmatrix}\ .
\end{equation}
We now show that this coincides with the original homomorphism $\varphi$ up to an exact term. 
We define an odd homomorphism from $U (M_{I_{W}})$ to $V (M_{I_{W}})$
\begin{equation}
\psi (x,\tilde{x}) = \begin{pmatrix}
0 & 0\\
\psi^{(0)} (x,\tilde{x})  & 0
\end{pmatrix}
\end{equation}
with 
\begin{align}
\psi^{(0)} (x,\tilde{x})&=V\Big(\frac{x^{k}-\tilde{x}^{k}}{x-\tilde{x}} \Big) \varphi_{k}^{(0)} (x)\\
&=\sum_{\ell \leq k-1}V (x)^{k-1-\ell}\,\tilde{x}^{\ell}\, \varphi_{k}^{(0)} (x)\ .
\end{align}
Acting with the differential on $\psi$, we obtain
\begin{align}
\delta_{U (I_{W}),V (I_{W})} \psi &= V (I_{W})\psi +\psi U (I_{W})\\
&=\begin{pmatrix}
V (x-\tilde{x})\psi^{(0)} (x,\tilde{x}) & 0\\
0 & \psi^{(0)} (x,\tilde{x}) U (x-\tilde{x})
\end{pmatrix}\ .
\end{align}
Evaluating the upper left entry we find
\begin{equation}
V (x-\tilde{x})\psi^{(0)} (x,\tilde{x}) = V (x^{k})\varphi^{(0)}_{k} (x)-\tilde{x}^{k}\varphi^{(0)}_{k} (x) \ .
\end{equation}
To evaluate the lower right entry we first write down explicitly the closure condition~\eqref{varphiclosedexplicit} for $\varphi$,
\begin{equation}
\tilde{x}^{k}\varphi^{(0)}_{k} (x)\big(U (x)-\tilde{x} \big) = \big(V (x)-\tilde{x} \big)\tilde{x}^{k}\varphi^{(1)}_{k} (x)\ ,
\end{equation}
where we also expanded $\varphi^{(1)}$ in powers of $\tilde{x}$. By comparing powers of $\tilde{x}$ we conclude
\begin{equation}
\varphi^{(0)}_{k}U (x)- \varphi^{(0)}_{k-1} (x) = V (x) \varphi^{(1)}_{k} (x) -\varphi_{k-1}^{(1)} (x)\ .
\end{equation}
Now we evaluate
\begin{align}
&\psi^{(0)} (x,\tilde{x}) U (x-\tilde{x}) \nonumber\\
&\qquad =
 \sum_{0\leq \ell \leq k-1} \big( V (x)^{k-1-\ell}\tilde{x}^{\ell} \varphi_{k}^{(0)} (x) U (x) - V (x)^{k-1-\ell}\tilde{x}^{\ell+1} \varphi_{k}^{(0)} (x)\big)\\
&\qquad =\sum_{0\leq \ell \leq k-1} V (x)^{k-1-\ell}\tilde{x}^{\ell} \big(\varphi_{k}^{(0)} (x) U (x)-\varphi^{(0)}_{k-1} (x)\big) + V (x)^{k-1}\varphi^{(0)}_{k-1} (x)\\
&\qquad = \sum_{0\leq \ell \leq k-1} V (x)^{k-1-\ell}\tilde{x}^{\ell} \big(V (x)\varphi_{k}^{(1)} (x)-\varphi^{(1)}_{k-1} (x)\big) + V (x)^{k-1}\varphi^{(0)}_{k-1} (x)\\
&\qquad =- \tilde{x}^{k-1}\varphi^{(1)}_{k-1} (x) + V (x)^{k-1}\varphi^{(0)}_{k-1} (x) \ .
\end{align}
In total we find
\begin{equation}
\delta_{U (I_{W}),V (I_{W})} \psi  = \tilde{\varphi}- \varphi\ ,
\end{equation}
and we conclude that $\phi_{UV} (M_{I_{W}})$ differs from $\varphi$ by an exact term.

\subsection{Fusion functors describing isomorphic interfaces}\label{sec:nonisoFF}
We give an example of two non-isomorphic fusion functors $U,V$ that describe isomorphic interfaces, $U (I_{W})\cong V (I_{W})$. Consider $R=\mathbb{C}[x]$ and $W=x^{2}$. Take $U=\id$ and 
\begin{equation}
V=\begin{pmatrix}
\id  & 0 & 0\\
\frac{1}{x} (G-\id)& G & 0\\
0 & 0 & \id 
\end{pmatrix}\ ,
\end{equation}
where $G=\omega^{*}$ is the extension of scalars for the automorphism $\omega :x\mapsto -x$ of $R$. $U$ and $V$ are non-isomorphic because there cannot be an invertible homomorphism between $U (R)=R$ and $V (R)=R^{3}$.

The factorization for the identity defect can be represented as 
\begin{equation}
I_{W} = \begin{pmatrix}
0 & x-\tilde{x}\\
x+\tilde{x} & 0
\end{pmatrix}\ .
\end{equation}
Acting on the upper right corner, $I_{W}^{1}=x-\tilde{x}$, with $V$ yields
\begin{equation}
V (I_{W}^{1}) = \begin{pmatrix}
x-\tilde{x} & 0 & 0\\
-2 & -x-\tilde{x} & 0\\
0 & 0 & x-\tilde{x}
\end{pmatrix} \longrightarrow \begin{pmatrix}
1 & 0 & 0\\
0 & x^{2}-\tilde{x}^{2} & 0\\
0 & 0 & x-\tilde{x}
\end{pmatrix}
\end{equation}
where we have used elementary row and column transformations to simplify the resulting matrix. Similarly, we obtain
\begin{equation}
V (I_{W}^{0}) =V (x+\tilde{x}) \longrightarrow \begin{pmatrix}
x^{2}-\tilde{x}^{2} & 0 & 0\\
0 & 1 & 0\\
0 & 0 & x+\tilde{x}
\end{pmatrix}\ .
\end{equation}
The factorization $V (I_{W})$ differs from $U (I_{W})=I_{W}$ only by the addition of trivial factorizations of $x^{2}-\tilde{x}^{2}$, and hence $V (I_{W})$ and $U (I_{W})$ are isomorphic although $U$ and $V$ are non-isomorphic fusion functors.

\bibliographystyle{mystyle5}
\bibliography{references}

\end{document}